\def\openone{\leavevmode\hbox{\small1\kern-3.8pt\normalsize1}}
\def\CC{\mathbb{C}}
\def\RR{\mathbb{R}}
\def\NN{\mathbb{N}}
\def\LL{\mathbb{L}}
\def\11{\mathbf{1}}
\def\LL{\mathcal{L}}
\def\PP{\mathbb{P}}
\def\EE{\mathbb{E}}
\newtheorem{theorem}{Theorem}
\newtheorem{lemma}{Lemma}
\newtheorem{proposition}{Proposition}
\theoremstyle{definition}
\def\reff#1{(\ref{#1})}
\def\eps{\varepsilon}
\newcommand{\supp}{\mathop{\rm supp}\nolimits}
\newcommand{\tr}{\mathop{\rm Tr}\nolimits}
\newcommand{\spec}{{\rm sp}}
\newcommand{\cB}{{\cal B}}
\newcommand{\cD}{{\cal D}}
\newcommand{\cE}{{\cal E}}
\newcommand{\cH}{{\cal H}}
\newcommand{\cJ}{{\cal J}}
\newcommand{\cP}{\mathcal{P}}
\newcommand{\cL}{{\cal L}}
\def\e{\mathrm{e}}
\theoremstyle{definition}
\theoremstyle{remark}
\newtheorem{remark}{Remark}
\numberwithin{equation}{section}
\DeclareRobustCommand\openone{\leavevmode\hbox{\small1\normalsize\kern-.33em1}}
\newcommand{\id}{\rm{id}}
\newcommand{\be}{\begin{equation}}
	\newcommand{\ee}{\end{equation}}
\newcommand{\bea}{\begin{eqnarray}}
	\newcommand{\eea}{\end{eqnarray}}
\newcommand{\beas}{\begin{eqnarray*}}
	\newcommand{\eeas}{\end{eqnarray*}}
\title{Concentration of quantum states from \\quantum functional and transportation cost inequalities.}
\author[1,2]{Nilanjana Datta}
\author[1]{Cambyse Rouz\'{e}}
\affil[1]{\small Statistical Laboratory, Centre for Mathematical Sciences, University of Cambridge, Cambridge~CB30WB, UK}
\affil[2]{\small DAMTP, Centre for Mathematical Sciences, University of Cambridge, Cambridge~CB30WA, UK}
\begin{document}

 	\bibliographystyle{abbrv}

 	\maketitle	
\begin{abstract}
Quantum functional inequalities (e.g.~the logarithmic Sobolev- and Poincar\'{e} inequalities) have found widespread application in the study of the behavior of primitive quantum Markov semigroups. The classical counterparts of these inequalities are related to each other via a so-called transportation cost inequality of order 2 (TC$_2$). The latter inequality relies on the notion of a metric on the set of probability distributions called the Wasserstein distance of order 2. (TC$_2$) in turn implies a transportation cost inequality of order 1 (TC$_1$). In this paper, we introduce quantum generalizations of the inequalities (TC$_1$) and (TC$_2$), making use of appropriate quantum versions of the Wasserstein distances, one recently defined by Carlen and Maas and the other defined by us. We establish that these inequalities are related to each other, and to the quantum modified logarithmic Sobolev- and Poincar\'{e} inequalities, as in the classical case. We also show that these inequalities imply certain concentration-type results for the invariant state of the underlying semigroup. We consider the example of the depolarizing semigroup to derive concentration inequalities for any finite dimensional full-rank quantum state. These inequalities are then applied to derive upper bounds on the error probabilities occurring in the setting of finite blocklength quantum parameter estimation.
\end{abstract}
\section{Introduction}
\textit{Functional} and \textit{transportation cost} (also known as \textit{Talagrand}) \textit{inequalities} constitute a powerful set of mathematical tools which have found applications in various fields 
of mathematics and theoretical computer science. They are of relevance
in the analysis of the mixing time of a classical, continuous time, primitive Markov chain, which is the time it takes for the chain to come close in trace distance to its invariant distribution, starting from an arbitrary initial state. They are also used to derive {\em{concentration (of measure) inequalities}}, which provide upper bounds on the probability that a random variable, distributed according to a given measure, deviates from its mean (or median) by a given amount (say $r$). The concentration is
said to be {\em{exponential}} (resp.~{\em{Gaussian}}) if this probability decreases exponentially in $r$ (resp.~$r^2$). Concentration inequalities are of fundamental 
importance in fields as diverse as probability, statistics, convex geometry, functional analysis, statistical physics, and information theory. \\\\
There are different methods for deriving concentration inequalities. The \textit{entropy method}, introduced by Ledoux \cite{L97}, is an information-theoretic method and employs a particular class of functional inequalities, called \textit{logarithmic Sobolev inequalities} (or log-Sobolev inequalities, in short). Concentration inequalities can be shown to be fundamentally geometric in nature. An alternative method to derive them, which is due to Marton~\cite{[M96]}, exploits this fact by working at the level of probability measures on a metric probability space instead of functions, thus bypassing functional inequalities entirely. In this setting, one can introduce so-called \textit{Wasserstein distances} on the set of probability measures, which are relevant for the derivation of concentration inequalities in this geometric framework. A transportation cost (or Talagrand) inequality is an inequality between a Wasserstein distance between two probability measures and an information-theoretic quantity relating them, namely, their relative entropy (or Kullback-Leibler divergence \cite{CT12})\footnote{This inequality should not be confused with the related notion of Talagrand concentration inequality, see e.g.~\cite{TL13}.}. Marton~\cite{[M96]} showed that a particular Talagrand inequality (denoted by TC$_1$ in \Cref{fig1}) implies Gaussian concentration. Talagrand inequalities have been used in various areas of mathematics including probability theory, functional analysis, partial differential equations and differential geometry (see \cite{[V08]} and references therein for an exhaustive survey).
\\\\
Let us briefly introduce some of the relevant classical functional- and transportation cost inequalities in the discrete space setting (see \cite{[EM12]}): Consider a primitive, continuous time Markov chain on a finite metric space $(\Omega,d)$, with unique invariant distribution $\tilde{q}$, transition matrix $P$, generator $L=(P-\mathbb{I})$ and associated Markov semigroup $(P_t)_{t \geq 0 }:= (e^{tL})_{t\geq 0}$. The semigroup is said to satisfy a {\em{modified logarithmic Sobolev inequality}} (denoted as MLSI in \Cref{fig1}), with positive constant $\alpha_1$, if for any other distribution $q$,
\begin{align}\tag{MLSI}\label{logsob}
	2\alpha_1 D(q\|\tilde{q})\le I_{\tilde{q}}(q),
\end{align}
where $D(q\|\tilde{q}):=\sum_{\omega\in\Omega} q(\omega)\log(q(\omega)/\tilde{q}(\omega))$ is the relative entropy between $q$ and $\tilde{q}$,
and the quantity $I_{\tilde{q}}(q):=-\left.\frac{d}{dt}\right|_{t=0} D(P_t(q)\|\tilde{q})$ is the so-called (discrete) Fisher information of $q$ \cite{[EM12]}.
Diaconis and Saloff-Coste showed in \cite{DS96} that such an inequality leads to the so-called rapid mixing property of the semigroup:
\begin{align}\label{expconv}
	||q_t - {\tilde{q}}||_1 \leq \sqrt{ 2D(q_t\|{\tilde{q}})}\le \e^{-\alpha_1 t}\sqrt{D(q\|{\tilde{q}})},
\end{align}
where $q_t$ denotes the evolved probability distribution at time $t$, if the initial distribution of the Markov chain is $q$.\\\\
Similarly, for a given number $p\ge 2$, for the above Markov semigroup, a \textit{transportation cost inequality of order $p$} with positive constant $c_p$ is satisfied if for any distribution $q$, 
\begin{align}\tag{Tp}\label{tpp}
	W_p(q,\tilde{q})\le \sqrt{2c_pD(q\|\tilde{q})},
\end{align}
where $W_p(q,\tilde{q})$ is the so-called Wasserstein distance of order $p$ (or $p$-Wasserstein distance) on the set of probability distributions. When a transportation cost inequality of order $1$ (denoted as TC$_1$ in \Cref{fig1}) is satisfied, the invariant distribution~$\tilde{q}$ satisfies the following \textit{Gaussian concentration} property: for a random variable $X$ with law ${\tilde{q}}$, and any $k$-Lipschitz function $f:\Omega\mapsto\RR $,\footnote{That is, for all $x,y\in\Omega$, $|f(x)-f(y)|\le k\,d(x,y)$.}
\begin{align}\tag{Gauss}
	\mathbb{P}_{\tilde{q}}(f(X)-\EE[f(X)]\ge r)\le \exp\left(-\frac{r^2}{2c_1 {k}^2}\right).
\end{align}	
Moreover, a transportation cost inequality of order $2$ (denoted as TC$_2$ in \Cref{fig1}) implies another kind of functional inequality, namely the \textit{Poincar\'{e} inequality}:
\begin{align}\tag{P}
\lambda \operatorname{Var}_{\tilde{q}}(f(X))\le  -\sum_{\omega\in\Omega }  f(\omega)L(f)(\omega) \tilde{q}(\omega),
\end{align}
 where $\lambda>0$ is the \textit{spectral gap} of the generator of the underlying Markov chain, that is the absolute value of the second largest eigenvalue of its generator $L$. The Poincar\'{e} inequality in turn implies \textit{exponential concentration}:
\begin{align}\tag{Exp}
	\mathbb{P}_{\tilde{q}}(f(X)-\EE[f(X)]\ge r)\le 3\exp\left(-r\sqrt{\lambda}/(2k)\right).
\end{align}	
It also yields a weaker form of convergence in the analysis of the mixing time of the Markov chain, 
than the one provided in \reff{expconv}. The relations between the above inequalities is shown in \Cref{fig1}. 
\begin{figure}[!htbp]
	\[\begin{tikzcd}
	\text{MLSI}\arrow[rr, Rightarrow, "\cite{[OV00],[EM12]}"]&& \text{TC$_2$} \arrow[d,Rightarrow, "\cite{[EM12]}"] \arrow[rr, Rightarrow, "\cite{[OV00],[EM12]}"]&&\text{PI}\arrow[rr, Rightarrow, "\cite{[GM83]}"]&&\text{Exp}\\
	&& \text{TC$_1$}\arrow[rr, Rightarrow, "\cite{[BG99],[EM12]}"]&&\text{Gauss}&
	\end{tikzcd}
	\]
	\caption{Chain of classical functional- and transportation cost inequalities and related concentrations. The citations above the arrows refer to the papers in which the implications were proved.}
	\label{fig1}
\end{figure}
\\\\Quantum analogues of modified log-Sobolev and Poincar\'{e} inequalities, as well as their applications to quantum information theory, have recently attracted a lot of attention (see e.g. \cite{[OZ99],[CS08],TKRWV10,[M12],[KT13],[TPK14],[CKMT15],[KT16],[MF16],[DB14],[BK16],[CM16],[CM14],[JZ15]}). In particular, Kastoryano and Temme \cite{[KT13]} defined a non-commutative version of the modified log-Sobolev inequality and showed that it implies a Poincar\'e inequality. More recently, Carlen and Maas \cite{[CM14],[CM16]} defined a {\em{quantum Wasserstein distance of order $2$}}, which turns out to provide the manifold of full-rank states with a Riemannian metric. In \cite{[CM16]}, they proved that, for a class of quantum Markov semigroups with unique invariant state $\sigma$, the modified log-Sobolev inequality holds provided the entropic functional $D(.\|\sigma)$ is convex along geodesics in the manifold of full-rank states equipped with this metric. A non-commutative analogue of the  Wasserstein distance of order $1$ has also been introduced and studied by Junge and Zeng \cite{[JZ15]} in the context of von Neumann algebras equipped with a tracial state.
\paragraph{Our contribution:}
In this paper, we define quantum analogues of the transportation cost inequalities or orders $1$ and $2$ and prove, similarly to the classical setting, the chain of implications given in \Cref{fig2}.
\begin{figure}[!ht]
	\[
	\begin{tikzcd}
	\text{MLSI}\arrow[rrrr, bend left, dashrightarrow, "\cite{[KT13]}"]\arrow[rr, swap, Rightarrow, "\cite{[CM14]} \text{, \Cref{logsobtalagrand}}"]&& \text{TC$_2$} \arrow[d,Rightarrow, "\text{\Cref{t2t1}}"] \arrow[rr, Rightarrow, swap, "\text{\Cref{talagrandpoincare}}"]&&\text{PI}\arrow[rr, swap, Rightarrow, "\text{\Cref{poincare}}"]&&\text{Exp}\\
	&& \text{TC$_1$}\arrow[rr, Rightarrow,swap, "\text{\Cref{talconc}}"]&&\text{Gauss}&
	\end{tikzcd}
	\]
	\caption{Chain of quantum functional- and transportation cost inequalities and related concentrations. The implication MLSI $\implies$ PI was proved by Kastoryano and Temme in~\cite{[KT13]}.}
	\label{fig2}
\end{figure}
   \noindent 
 Note that the implication MLSI$\Rightarrow$TC$_2$ was proved in \cite{[CM14]} in the case of the fermionic Fokker-Planck semigroup on the Clifford algebra. {In order to define TC$_1$, we introduce a new notion of quantum Wasserstein distance of order $1$. For sake of simplicity, we restrict our attention to the case of quantum Markov semigroups defined on the space $\mathcal{B}(\mathcal{H})$ of linear operators defined on a finite-dimensional Hilbert space $\mathcal{H}$. However, we expect our results to hold for more general cases, and in particular for quantum evolutions on separable Hilbert spaces.
\paragraph{Layout of the paper:}
In \Cref{sec2}, we introduce the necessary notations and definitions, including relative modular operators, quantum Markov semigroups and quantum Wasserstein distances. Quantum transportation cost inequalities are introduced in \Cref{qfunc} and the relations between them are proved. More precisely, we prove TC$_2$ $\Rightarrow$ TC$_1$ in 
\Cref{t2t1}, MLSI $\Rightarrow$ TC$_2$ in \Cref{logsobtalagrand}, and TC$_2$ $\Rightarrow$ PI in \Cref{talagrandpoincare}. In \Cref{qconc}, we derive two types of quantum concentration inequalities from the transportation cost and Poincar\'{e} inequalities, namely Gaussian (\Cref{talconc}) and exponential (\Cref{poincare}) concentration, respectively. Finally, we apply our concentration inequalities to the problem of estimating the parameter of a quantum state in the finite blocklength setting in \Cref{parameter}.
\section{Notations and preliminaries}\label{sec2}
\subsection{Operators, states and modular theory}
Let $\cH$ denote a finite-dimensional Hilbert space, let $\cB(\cH)$ denote the algebra of linear operators acting on $\cH$ and $\cB_{sa}(\cH) \subset \cB(\cH)$ the subspace of self-adjoint operators. Let $\cP(\cH)$ be the cone of positive semi-definite operators on $\cH$ and $\cP_{+}(\cH) \subset  \cP(\cH)$ the set of (strictly) positive operators. Further, let $\cD(\cH):=\lbrace\rho\in\cP(\cH)\mid \tr\rho=1\rbrace$ denote the set of density operators (or states) on $\cH$, and $\cD_+(\cH):=\cD(\cH)\cap \cP_+(\cH)$ denote the subset of full-rank states. We denote the support of an operator $A$ as ${\mathrm{supp}}(A)$ and the range of a projection operator $P$ as ${\mathrm{ran}}(P)$. Let $\mathbb{I}\in\cP(\cH)$ denote the identity operator on $\cH$, and $\id:\cB(\cH)\mapsto \cB(\cH)$ the identity map on operators on~$\cH$. A linear map $\Lambda:\cB(\cH)\to \cB(\cH)$ is said to be unital if $\Lambda(\mathbb{I})=\mathbb{I}$. The operator norm on $\cB(\cH)$ is defined as $\|A\|_\infty:=\sup_{\psi\in\cH\backslash\{0\}} \|A\psi\|/\|\psi\|$, and the operator norm for a superoperator $\Lambda:\cB(\cH)\to\cB(\cH)$ is defined as $\|\Lambda\|_{\infty\to\infty}:=\sup_{A\in\cB(\cH)}\| \Lambda(A)\|_\infty/\|A\|_\infty$ . The Hilbert Schmidt inner product between two operators $A,B\in\cB(\cH)$ is defined as $\langle A,B\rangle_{_{HS}} :=\tr(A^* B)$. For any $p\ge 1$, the $p$-Schatten norm is defined as $\|A\|_p:=(\tr|A|^p)^{1/p}$. Any observable $f\in\cB_{sa}(\cH)$ has a \textit{spectral decomposition} of the form $f = \sum_{\lambda \in \spec(f)} \lambda \,P_\lambda(f),$ where $\spec(f)$ denotes the spectrum of $A$, and $P_\lambda(f)$ is the projection operator corresponding to the eigenvalue $\lambda$. For any interval $E$ of $\RR$, we denote by $\mathbf{1}_E(f)$ the projection $\sum_{\lambda\in E\cap \spec(f)}P_\lambda(f)$ of any self-adjoint operator $f$. Unless stated otherwise, we adopt the following convention: operators are denoted by Roman letters, and functions by Greek letters. Given two full-rank states $\rho,\sigma$, the \textit{relative modular operator} $\Delta_{\rho|\sigma}$ is defined as the map
\begin{equation}\label{relmod}
	\begin{array}{cccc}
		\Delta_{\rho|\sigma}:
		 & \mathcal{B}(\cH) & \to & \hspace{-1em} \cB(\cH) \\
		& f & \mapsto & \rho f \sigma^{-1}
	\end{array}
\end{equation}
As a linear operator on the Hilbert space $(\mathcal B(\cH),\langle .,.\rangle_{_{HS}})$ obtained by equipping $\cB(\cH)$ with the Hilbert-Schmidt inner product, $\Delta_{\rho|\sigma}$ is positive and its spectrum $\spec(\Delta_{\rho|\sigma})$ consists of the ratios of eigenvalues $\lambda/\mu$, $ \lambda \in \spec(\rho)$, 
$ \mu \in \spec(\sigma)$. For any $x\in \spec (\Delta_{\rho|\sigma})$, the corresponding spectral projection is the map
\begin{equation} \label{eq_specprojDelta}
	\begin{array}{cccc}
		P_{x}(\Delta_{\rho|\sigma}): & \cB(\cH) & \to & \hspace{-1em} \cB(\cH) \\
		& f& \mapsto & \underset{\lambda\in \spec(\rho),\mu\in\spec(\sigma):\lambda/\mu=x}{\sum}P_\lambda(\rho) f  P_\mu(\sigma).
	\end{array}
\end{equation}
It can be readily verified that $\Delta_{\rho|\sigma}^s(f)=\rho^s f \sigma^{-s}$ for any $s$ in $[0,1]$. In the particular case $\rho=\sigma$, $\Delta_{\sigma}:=\Delta_{\sigma|\sigma}$ is the so-called \textit{modular operator}, defined as
\begin{align*}
	\Delta_\sigma(f):=\Delta_{\sigma|\sigma} (f)\equiv \sigma f \sigma^{-1},~~ f\in\cB(\cH).
\end{align*}
The \textit{modular automorphism group} $(\alpha_t)_{t\in\RR}$ of a state $\sigma\in\cD_+(\cH)$ is defined as follows
\begin{align} \label{mod}
	\alpha_t(f):=\e^{it H}f\e^{-it H},
\end{align}
where $H:=-\log\sigma$. Note that $\Delta_\sigma:=\alpha_i$.
\subsection{Inner products}
An inner product $\langle .,.\rangle$ on $\cB(\cH)$ is said to be \textit{compatible} with a state $\sigma\in\cD_+(\cH)$ if for all $f\in \cB(\cH)$, $\tr(\sigma f)=\langle \mathbb{I},f\rangle$. This notion is useful as it allows to show, analogously to the classical case, that for any completely positive, unital map $\Phi$ that is self-adjoint with respect to an inner product $\langle .,. \rangle$\footnote{That is for which $\langle f,\Phi(g)\rangle=\langle \Phi(f),g\rangle$ for all $f,g\in\cB(\cH)$} compatible with $\sigma\in \cD_+(\cH)$, $\sigma$ is invariant with respect to $\Phi$. Indeed, using the fact that $\Phi$ is unital, we get:
\begin{align*}
	\tr(\sigma f)=\langle \mathbb{I},f\rangle =\langle \Phi (\mathbb{I}),f\rangle=\langle  \mathbb{I}, \Phi(f)\rangle =\tr(\sigma\, \Phi(f)), ~~\forall f\in \cB(\cH).
\end{align*}
 Given any function $\varphi:(0,\infty)\to (0,\infty)$ and a state $\sigma\in \cD_+(\cH)$, one can easily check that the following quadratic form defines an inner product:
\begin{align}\label{innergen}
	\langle f,g\rangle_{\varphi,\sigma}:= \tr(f^* R_\sigma\circ \varphi(\Delta_\sigma)(g) ),
\end{align}
where $R_\sigma:\cB(\cH)\to \cB(\cH)$ is the operator of right multiplication by $\sigma$.
Perhaps the two most commonly used functions $\varphi$ are $\varphi_1:t\to 1$ and $\varphi_{1/2}: t\to t^{1/2}$, for which the inner product defined in \Cref{innergen} reduces to:
\begin{align*}
	&\langle f,g\rangle_{\varphi_1,\sigma}=   \tr(\sigma f^* g)\equiv	\langle f,g\rangle_{1,\sigma},\\
	&\langle f,g\rangle_{\varphi_{1/2},\sigma}=   \tr(\sigma^{1/2} f^* \sigma^{1/2} g)\equiv	\langle f,g\rangle_{1/2,\sigma}.
\end{align*}
More generally, for $s\in [0,1]$, and $\varphi_s:= t\to t^{1-s}$, one defines the following inner product that is compatible with $\sigma$:
\begin{align*}
	\langle f,g\rangle_{\varphi_s,\sigma}=   \tr(\sigma^s f^*\sigma^{1-s} g)\equiv	\langle f,g\rangle_{s,\sigma}.
\end{align*}
The case $s=1/2$ is special because it is the only one for which for any completely positive (CP) map $\Phi:\cB(\cH)\to \cB(\cH)$, its adjoint $\Phi^{(*,1/2)}$ with respect to $\langle .,.\rangle_{1/2,\sigma} $, defined through the following identity
\begin{align*}
	\langle f,\Phi (g)\rangle_{1/2,\sigma}=\langle \Phi^{(*,1/2)}(f),g\rangle_{1/2,\sigma},
\end{align*}
is also CP.
It is not in general true that self-adjointness is a notion independent of the inner product chosen. However, the following theorem, for which a proof can be found in \cite{[CM16]}, gives a sufficient condition for this to be true. 
\begin{theorem}[see \cite{[CM16]} Theorem 2.9]\label{selfadjointness}
	Let $\sigma$ be a non-degenerate full-rank density matrix, and let $\LL$ be any linear map on $\cB(\cH)$. Then:
	\begin{itemize}
		\item[1] If $\LL$ is self-adjoint with respect to the inner product $\langle .,.\rangle_{1,\sigma}$ and adjoint-preserving (i.e. $\LL(f^*)=(\LL(f))^*$ for all $f\in\cB(\cH)$), then $\LL$ commutes with the modular automorphism group of $\sigma$:
		\begin{align*}
			\alpha_t\circ \LL=\LL\circ \alpha_t,~~\forall t\in\CC,
		\end{align*}
		where $\alpha$ is defined in \Cref{mod}, and $\LL$ is self-adjoint with respect to $\langle.,.\rangle_{\varphi,\sigma}$, for all functions $\varphi:(0,\infty)\to (0,\infty)$.
		\item[2] If $\LL$ commutes with the modular automorphism group of $\sigma$, and $\LL$ is self-adjoint with respect to $\langle.,.\rangle_{\varphi,\sigma}$ for some $\varphi:(0,\infty)\to (0,\infty)$, then $\LL$ is self-adjoint with respect to $\langle .,.\rangle_{\psi,\sigma}$ for all $\psi:(0,\infty)\to (0,\infty)$.
	\end{itemize}
\end{theorem}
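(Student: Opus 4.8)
The plan is to reduce everything to a simultaneous spectral decomposition with respect to the modular automorphism group, since all the inner products $\langle.,.\rangle_{\varphi,\sigma}$ are diagonalized by the same eigenprojections of $\Delta_\sigma$. Concretely, write $\sigma=\sum_\mu \mu\,P_\mu(\sigma)$ and recall from \Cref{innergen} that $\langle f,g\rangle_{\varphi,\sigma}=\tr\bigl(f^*\,R_\sigma\circ\varphi(\Delta_\sigma)(g)\bigr)$, where $\Delta_\sigma$ acts on the Hilbert-Schmidt space $(\cB(\cH),\langle.,.\rangle_{_{HS}})$ with spectral projections $P_x(\Delta_\sigma)$ given by \Cref{eq_specprojDelta}. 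The key observation is that $R_\sigma$ and $\varphi(\Delta_\sigma)$ are both positive and commute, and that $R_\sigma\circ\varphi(\Delta_\sigma)$ has the block form $\sum_{\lambda,\mu}\mu\,\varphi(\lambda/\mu)\,(P_\lambda(\sigma)\,\cdot\,P_\mu(\sigma))$, so it acts as multiplication by the strictly positive scalar $c_{\lambda\mu}^\varphi:=\mu\,\varphi(\lambda/\mu)$ on the subspace $P_\lambda(\sigma)\cB(\cH)P_\mu(\sigma)$. Thus $\langle.,.\rangle_{\varphi,\sigma}$ and $\langle.,.\rangle_{\psi,\sigma}$ differ only by the positive weights $c_{\lambda\mu}^\varphi/c_{\lambda\mu}^\psi$ on these ``Bohr-frequency'' subspaces.

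For part (2), suppose $\LL$ commutes with $\alpha_t$ for all $t$. Since $\alpha_t(f)=\e^{-itH}f\e^{itH}$ with $H=-\log\sigma$ acting diagonally, $\LL$ commuting with $\alpha_t$ for all $t$ forces $\LL$ to preserve each joint eigenspace of $L_\sigma$ (left multiplication by $\sigma$) and $R_\sigma$; equivalently $\LL$ commutes with $\Delta_\sigma$ and with $R_\sigma$ individually, hence with $\varphi(\Delta_\sigma)$ and $R_\sigma\circ\varphi(\Delta_\sigma)$ for every $\varphi$. Now if $\LL$ is self-adjoint for $\langle.,.\rangle_{\varphi,\sigma}$, then for any $\psi$ one writes $\langle f,\LL(g)\rangle_{\psi,\sigma}=\tr\bigl(f^*\,M_\psi\,\LL(g)\bigr)$ with $M_\psi:=R_\sigma\circ\psi(\Delta_\sigma)$, and since $\LL$ commutes with $M_\varphi^{-1}M_\psi$ (a positive function of $\Delta_\sigma$ and $R_\sigma$ which is self-adjoint and positive for $\langle.,.\rangle_{\varphi,\sigma}$), one transfers self-adjointness: $\langle f,\LL(g)\rangle_{\psi,\sigma}=\langle f,M_\varphi^{-1}M_\psi\LL(g)\rangle_{\varphi,\sigma}=\langle f, \LL(M_\varphi^{-1}M_\psi g)\rangle_{\varphi,\sigma}=\langle\LL(f),M_\varphi^{-1}M_\psi g\rangle_{\varphi,\sigma}=\langle\LL(f),g\rangle_{\psi,\sigma}$. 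This is the routine computation I would not grind through in full, but the structure is exactly this commutation-transfer argument.

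For part (1), the content is the implication ``self-adjoint for $\langle.,.\rangle_{1,\sigma}$ and adjoint-preserving $\Rightarrow$ commutes with $\alpha_t$''; once that is in hand, part (2) applied with $\varphi=\varphi_1$ yields self-adjointness for all $\psi$. To get the commutation, I would use that self-adjointness with respect to $\langle.,.\rangle_{1,\sigma}$ means $\sigma\LL(g)=\LL^\dagger(\sigma g)$ where $\dagger$ is the Hilbert-Schmidt adjoint, i.e. $\LL^\dagger=\Gamma_\sigma^{-1}\LL^\sharp\Gamma_\sigma$ for the appropriate twist, and then combine this with the adjoint-preserving property $\LL(g^*)=\LL(g)^*$; juggling these two symmetries pins down that $\LL$ intertwines left and right multiplication by $\sigma$ in the right way, which is precisely commutation with $\Delta_\sigma=\alpha_i$ and hence (by analyticity / the group law) with all $\alpha_t$. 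I expect this step — extracting commutation with the whole one-parameter group from the two discrete symmetries — to be the main obstacle, and the cleanest route is probably to pass to a basis diagonalizing $\sigma$, write $\LL$ in ``matrix units'' coordinates $\LL(E_{kl})=\sum_{mn}T^{kl}_{mn}E_{mn}$, translate the two hypotheses into index identities on $T$, and read off that $T^{kl}_{mn}=0$ unless $\sigma_k/\sigma_l=\sigma_m/\sigma_n$, which is exactly the statement that $\LL$ preserves the eigenspaces of $\Delta_\sigma$. Since $\cite{[CM16]}$ is cited for the proof, I would also simply invoke that reference for the details once the strategy is laid out.
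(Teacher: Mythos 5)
The paper gives no proof of this statement---it is quoted verbatim from \cite{[CM16]} (Theorem 2.9) with the proof deferred to that reference---and your reconstruction follows essentially the same route as Carlen and Maas: decompose $\cB(\cH)$ into the eigenspaces $P_\lambda(\sigma)\cB(\cH)P_\mu(\sigma)$ of $\Delta_\sigma$, observe that every $\langle.,.\rangle_{\varphi,\sigma}$ acts by the scalar $\mu\,\varphi(\lambda/\mu)$ on each such block, and extract commutation with $\Delta_\sigma$ from the two symmetries in part (1). Your part (1) sketch does work: writing $\LL(E_{kl})=\sum_{mn}T^{kl}_{mn}E_{mn}$ in a basis diagonalizing $\sigma$, GNS self-adjointness gives $\overline{T^{kl}_{mn}}=(\sigma_l/\sigma_n)T^{mn}_{kl}$, adjoint-preservation gives $T^{lk}_{nm}=\overline{T^{kl}_{mn}}$, and iterating yields $T^{kl}_{mn}\bigl(1-\tfrac{\sigma_k\sigma_n}{\sigma_l\sigma_m}\bigr)=0$, which is exactly preservation of the eigenspaces of $\Delta_\sigma$.

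One claim in your part (2) is not correct as stated: commuting with the modular group forces $\LL$ to commute with $\Delta_\sigma$ (hence with every $\varphi(\Delta_\sigma)$), but it does \emph{not} force $\LL$ to commute with $R_\sigma$ individually. If two distinct pairs of eigenvalues satisfy $\sigma_k/\sigma_l=\sigma_m/\sigma_n$ with $\sigma_l\neq\sigma_n$, a map commuting with $\Delta_\sigma$ may mix $E_{kl}$ and $E_{mn}$, on which $R_\sigma$ acts with different eigenvalues. The error is harmless for your argument, because $M_\varphi^{-1}M_\psi=(R_\sigma\varphi(\Delta_\sigma))^{-1}R_\sigma\psi(\Delta_\sigma)=(\psi/\varphi)(\Delta_\sigma)$: the $R_\sigma$ factors cancel, so commutation with $\Delta_\sigma$ alone suffices for the transfer computation, and that computation is otherwise sound.
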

\begin{remark} Obviously, a linear map $\LL$ on $\cB(\cH)$ commutes with the modular automorphism group of a state $\sigma$ if and only if $\LL$ commutes with $\Delta_\sigma$.
\end{remark}
\begin{remark}
	Given a full-rank state $\sigma$, a linear map $\LL$ is self-adjoint with respect to $\langle .,.\rangle_{1/2,\sigma}$ if and only if its adjoint $\LL_*$ with respect to $\langle .,.\rangle_{_{HS}}$ can be expressed as:
	\begin{align}\label{ll*}
		\LL_*=\Gamma_\sigma\circ \LL\circ \Gamma_\sigma^{-1}.
	\end{align}
	where $\Gamma_\sigma(f):=\sigma^{1/2}f\sigma^{1/2}$. Indeed, for any $f,g\in\cB(\cH)$:
	\begin{align*}
		\langle f&,\LL(g)\rangle_{1/2,\sigma}=\langle \LL(f),g\rangle_{1/2,\sigma}\\
		&	\Leftrightarrow \tr(\sigma^{1/2}f^*\sigma^{1/2}\LL(g))=\tr(\sigma^{1/2}\LL(f)^*\sigma^{1/2}g)=\tr(f^*\LL_*(\sigma^{1/2}g\sigma^{1/2})).
	\end{align*}
\end{remark}
\subsection{Entropic quantities}
A quantum generalization of the Kullback-Leibler divergence is given by Umegaki's quantum relative entropy which is defined as follows: for two states $\rho,\sigma\in\cD(\cH)$:
\begin{align*}
	D(\rho\|\sigma):= \left\{\begin{aligned}
		&	\tr(\rho(\log\rho-\log\sigma))~~~~~\text{if} \supp(\rho)\subseteq \supp(\sigma),\\
		&	0~~~~~~~~~~~~~~~~~~~~~~~~~~~~~\text{else.}
	\end{aligned}
	\right.
\end{align*}
It is related to another entropic quantity, namely the $L^1$-relative entropy defined in \cite{[KT13]}, as follows:
\begin{align*}
	D(\rho\|\sigma)= \operatorname{Ent}_{1,\sigma}(\Gamma_\sigma^{-1}(\rho)),
\end{align*}
where
\begin{align*}
	\operatorname{Ent_{1,\sigma}}(f):=\tr(\Gamma_\sigma(f)(\log(\Gamma_\sigma(f))-\log\sigma))-\tr(\Gamma_\sigma(f))\log(\tr(\Gamma_\sigma(f))).
\end{align*}
There are various other quantum generalizations of the Kullback Leibler divergence (see e.g.~ \cite{[HM16]} and references therein). In \cite{[PR97]}, Petz and Ruskai introduced the following generalization, which they called the quantum \textit{maximal divergence}:
\begin{align}\label{maxi}
	\widehat{D}(\rho\|\sigma):=\tr[\sigma\,\Gamma_\sigma^{-1}(\rho)~\log(\Gamma_\sigma^{-1}(\rho))]\equiv \tr[\sigma\,(\sigma^{-1/2}\rho~\sigma^{-1/2}\log(\sigma^{-1/2}\rho~\sigma^{-1/2}))] .
\end{align}
This quantum divergence is maximal among the so-called quantum $f$-divergences (see \cite{[M16]}). In particular, for any $\rho,\sigma\in \cD_+(\cH)$,
\begin{align}\label{DhD}
	D(\rho\|\sigma)\le \widehat{D}(\rho\|\sigma).
\end{align} 
\subsection{Quantum Markov semigroups}\label{sec-QDS}
\subsubsection*{The GKLS form}
Given a Hilbert space $\cH$, a \textit{quantum Markov semigroup} models the evolution of a memoryless quantum system with Hilbert space $\cH$. In the Heisenberg picture, it is given by a one-parameter family $\left(\Lambda_t\right)_{t \geq 0}$ of linear, completely positive, unital maps on $\cB(\mathcal H)$ satisfying the following properties
\begin{itemize}
	\item $\Lambda_0 = {\rm{id}}$;
	\item $\Lambda_t \Lambda_s = \Lambda_{t+s}$ $\,-\,$ semigroup property;
	\item $\underset{t\to 0}{\lim} || \Lambda_t(f) - f||_\infty = 0$ $\,-\,$ strong continuity,
\end{itemize}
The parameter $t$ plays the role of time. Hence, $\Lambda_t$ results in time evolution over the interval $[0,t]$, and is called a {\em{quantum dynamical map}}. For each quantum Markov semigroup there exists an operator $\LL$ called the \textit{generator}, or \textit{Lindbladian}, of the semigroup, such that
\begin{align}\label{llambda}
	\frac{d}{dt}\Lambda_t=\Lambda_t\circ \LL=\LL\circ\Lambda_t.
	\end{align}
Reciprocally, the Hille-Yosida theorem \cite{[P83]} states that one can uniquely associate to a given operator $\LL$ a semigroup $(\Lambda_t)_{t\ge 0}$ such that \Cref{llambda} is satisfied under some regularity conditions on the resolvent of $\LL$. Formally, one writes $\Lambda_t = e^{t\cL}$. The semigroup property embodies the assumptions of time-homogeneity and Markovianity, since it implies that the time evolution is independent of its history and of the actual time. Quantum Markov semigroups provide a generalization of classical Markov chains (in the discrete state space setting) or Markov semigroups associated to a Markov process (in the continuous state space setting), and therefore it is interesting to study questions similar to the ones asked in the classical framework, namely existence and uniqueness of invariant states, detailed balance conditions, convergence towards the invariant state, speed of convergence etc.
	\\\\
	A state $\rho$ is said to be an {\em{invariant state}} with respect to the semigroup $(\Lambda_t)_{t\ge 0}$ if $\Lambda_{*t}(\rho)=\rho$ for all time $t\ge 0$, where $\Lambda_{*t}$ is the adjoint of the map $\Lambda_t$, for each $t$, or equivalently
\begin{align*}
	\tr(\rho \Lambda_t(f))=	\tr(\rho f),\qquad \forall \, f\in\mathcal{B}(\cH), \,\, \forall t \geq 0.
\end{align*}
Semigroups admitting a unique, full-rank invariant state are called \textit{primitive}.
A quantum Markov semigroup is said to admit a \textit{GKLS} form (after Gorini, Kossakowski, Lindblad and Sudarshan) if its generator can be written as follows:
\begin{align}\label{GKLS}
	\LL(f)=i[H,f] -\frac{1}{2}\sum_{j\in \mathcal{J}} (L_j^*L_j f-2 L_j^* f L_j+fL_j^*L_j),
	\end{align}
	where $\mathcal{J}$ is a finite index set, and $H, L_j,~j\in\mathcal{J}$ are operators on $\cH$, with $H$ being self-adjoint. The operator $H$ is the Hamiltonian governing the unitary evolution of the system, whereas the so-called \textit{Lindblad} operators, $L_j$, govern the dissipative part of the evolution, resulting from the interaction of the system with its environment. In a series of seminal articles (see \cite{[GKS76],[L76a],[C79]}), Gorini, Kossakowski, Lindblad and Sudarshan proved that any uniformly continuous quantum Markov semigroup acting on $\cB(\cH)$ can be written in this form, and for such semigroups the operators $H, L_j,~j\in\mathcal{J}$ are bounded. In the context of quantum optics, many more generators associated to semigroups which are not uniformly continuous can also be written in the GKLS form (see e.g. \cite{[CS08],[L76b]}). \Cref{GKLS} is a quantum generalization of the classical case of continuous time Markov chains on finite state spaces.
\subsubsection*{Detailed balance}
In the classical setting, the \textit{detailed balance condition} of a Markov chain characterizes the time reversibility of the process. It means that, starting from a given invariant measure, the probability to get to the state $i$ of the chain starting from the state $j$ is equal to the probability to go from $i$ to $j$. This condition is equivalent to the symmetry of the generator of the Markov chain with respect to the weighted inner product associated to the given invariant state: more precisely let $(P_t)_{t\ge 0}$ be a classical continuous time Markov semigroup on a finite state space $\Omega\cong \{1,\dots,N\}$, with associated generator $L$. Then $(P_t)_{t\ge 0}$ satisfies detailed balance with respect to a probability measure $\pi$ on $\Omega$ if:
\begin{align*} 
\langle f,L( g)\rangle_{\pi}=\langle L( f), g\rangle_{\pi},
\end{align*}
for any pair of functions $f,g:\Omega\to \RR$, where the inner product above is defined as follows:
\begin{align*} 
	\langle f,g\rangle_{\pi}=\sum_{x\in \Omega} \pi(x)f(x)g(x).
\end{align*}
This directly implies that the measure $\pi$ is invariant with respect to the semigroup $(P_t)_{t\ge 0}$. Unlike the classical case, there are infinitely many ways to define a detailed balance condition in the quantum setting, all of which reduce to the classical one in the case of commutative algebras. In what follows, we are only concerned with quantum Markov semigroups $(\Lambda_t)_{t\ge 0}$ whose generators are self-adjoint with respect to $\langle .,.\rangle_{1,\sigma}$, that is
\begin{align}
	\langle g,\LL (f)\rangle_{1,\sigma}=\langle \LL(g),f\rangle_{1,\sigma}.
	\end{align}
This condition is also referred to as the \textit{quantum detailed balance condition}. The following theorem states that under this assumption, the generator $\LL$ of $(\Lambda_t)_{t\ge 0}$ admits a special GKLS form (first proved in a more general context by Alicki under the assumption that $\sigma$ has non-degenerate spectrum in \cite{A76}, see also \cite{KFGV77}):
\begin{theorem}[\cite{[CM16]} Theorem 3.1]\label{theo2}Let $(\Lambda_t)_{t\ge0}$ be a quantum Markov semigroup on $\cB(\cH)$, where $\cH$ is a $d$ dimensional Hilbert space, and let $\sigma$ a full-rank state. Suppose that the generator $\LL $ of $(\Lambda_t)_{t\ge0}$ is self-adjoint with respect to $\langle .,.\rangle_{1,\sigma}$. Then the generator $\LL$ of $(\Lambda_t)_{t\ge 0}$ has the following form: $\forall f\in \cB(\cH)$, 
					\begin{align}\label{LLDBC}
						\LL(f)&=\sum_{j\in \mathcal{J}}c_j\left( \e^{-\omega_j/2}\tilde{L}_j^*[f,\tilde{L}_j]+\e^{\omega_j/2}[\tilde{L}_j,f]\tilde{L}_j^*\right)\\
						&=\sum_{j\in \mathcal{J}}c_j\e^{-\omega_j/2}\left( \tilde{L}_j^*[f,\tilde{L}_j]+[\tilde{L}_j^*,f]\tilde{L}_j\right),
					\end{align}
					where $\cJ$ is a finite set of cardinality $|\cJ|\le d^2-1$, $\omega_j\in\RR$ and $c_j>0$ for all $j\in\mathcal{J}$, and $\{\tilde{L}_j\}_{j\in\mathcal{J}}$ is a set of operators in $\cB(\cH)$ with the properties:
					\begin{itemize}
						\item[1] $\{\tilde{L}_j\}_{j\in\mathcal{J}}=\{\tilde{L}_j^*\}_{j\in\mathcal{J}}$;
						\item[2] $\{\tilde{L}_j\}_{j\in\mathcal{J}}$ consists of eigenvectors of the modular operator $\Delta_\sigma$ with
						\begin{align}\label{eigenD}
							\Delta_\sigma(\tilde{L}_j)=\e^{-\omega_j}\tilde{L}_j;
						\end{align}
						\item[3] $\frac{1}{\dim(\cH)}\tr(\tilde{L}_j^*\tilde{L}_k)=\delta_{k,j}$ for all $j,k\in\mathcal{J}$;
						\item[4] $\tr(\tilde{L}_j)=0$ for all $j\in\mathcal{J}$.
						\end{itemize}
					Finally for any $j,j'\in\mathcal{J}$
					\begin{align}\label{symcond}
						c_j=c_{j'}~~\text{ when }\tilde{L}_j^*=\tilde{L}_{j'}.
					\end{align}
					Conversely,  given any full-rank state $\sigma$, any set $\{\tilde{L}_j\}_{j\in\mathcal{J}}$ satisfying the above four items for some $\{\omega_j\}_{j\in \mathcal{J}}\subset \RR$ and any set $\{c_j\}_{j\in\mathcal{J}}$ of positive numbers satisfying the symmetry condition \eqref{symcond}, the operator $\LL$ given by \Cref{LLDBC} is the generator of a quantum Markov semigroup $(\Lambda_t)_{t\ge 0}$ which is self-adjoint with respect to $\langle .,.\rangle_{1,\sigma}$. 
				\end{theorem}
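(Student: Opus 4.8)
\emph{Proof plan.} The statement is an equivalence: we must show both that a generator self-adjoint for $\langle\cdot,\cdot\rangle_{1,\sigma}$ necessarily has the canonical dissipative form \eqref{LLDBC}, and that, conversely, data obeying items 1--4 and \eqref{symcond} produces such a generator. The converse is a direct verification; the forward direction is the substance and rests on Lindblad's structure theorem together with the modular theory recalled above. The result is due to Alicki~\cite{A76} (for $\sigma$ with non-degenerate spectrum) and to Kossakowski--Frigerio--Gorini--Verri~\cite{KFGV77}; we follow the presentation of~\cite{[CM16]}.

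\emph{Forward direction, setup.} Since $\dim\cH<\infty$ the semigroup is uniformly continuous, so $\LL$ has a GKLS form \eqref{GKLS}. Subtracting multiples of $\mathbb I$ from the $L_j$ and from $H$ (legitimate since $\LL$ is unital and $i[\mathbb I,\cdot]=0$) we may take $\tr(L_j)=0$ and $\tr(H)=0$, and collect terms into the Kossakowski form $\LL(f)=i[H,f]+\sum_{j,k}a_{jk}\big(F_k^*fF_j-\tfrac12\{F_k^*F_j,f\}\big)$, where $\{F_j\}$ is a fixed family of traceless operators, orthonormal for $d^{-1}\langle\cdot,\cdot\rangle_{_{HS}}$, and $A=(a_{jk})$ is positive semi-definite; in this fixed basis, with $H$ traceless, the pair $(H,A)$ is uniquely determined by $\LL$ (standard uniqueness of GKLS data). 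The generator \eqref{GKLS} is adjoint-preserving, so, $\LL$ being self-adjoint for $\langle\cdot,\cdot\rangle_{1,\sigma}$ by hypothesis, \Cref{selfadjointness}(1) gives that $\LL$ commutes with the modular automorphism group $(\alpha_t)$ of $\sigma$ --- equivalently with $\Delta_\sigma$ --- and that $\LL$ is self-adjoint for every $\langle\cdot,\cdot\rangle_{\varphi,\sigma}$, in particular for $\langle\cdot,\cdot\rangle_{1/2,\sigma}$.

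\emph{Diagonalisation (the main obstacle).} Conjugating the Kossakowski form by $\alpha_t$ sends $(H,A)$ to $(\alpha_t(H),\,U(t)AU(t)^*)$ for a one-parameter HS-unitary group $U(t)$ acting on the span of the $F_j$; since $\alpha_t\circ\LL\circ\alpha_{-t}=\LL$, uniqueness forces $\alpha_t(H)=H$ (hence $[H,\sigma]=0$) and $[A,U(t)]=0$ for all $t$. Thus $A$ commutes with the generator of $U(t)$, whose eigenvectors are modular eigenoperators, so we may replace $\{F_j\}$ by an orthonormal, traceless family $\{\tilde L_j\}$ of simultaneous eigenvectors with $\Delta_\sigma(\tilde L_j)=\e^{-\omega_j}\tilde L_j$ (this gives items 2, 3, 4) and $A=\diag(d_j)$, $d_j\ge0$. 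Then $\LL(f)=i[H,f]+\sum_j d_j\big(\tilde L_j^* f\tilde L_j-\tfrac12\{\tilde L_j^*\tilde L_j,f\}\big)$ with $[H,\sigma]=0$. This step --- extracting modular-eigenoperator Lindblad operators while respecting the gauge freedom of the GKLS representation --- is the only non-routine point; the rest is bookkeeping with $\sigma^s\tilde L_j\sigma^{-s}=\e^{-s\omega_j}\tilde L_j$, $\sigma^s\tilde L_j^*\sigma^{-s}=\e^{s\omega_j}\tilde L_j^*$ and cyclicity of the trace.

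\emph{Conclusion of the forward direction, and the converse.} By the remark following \Cref{selfadjointness}, self-adjointness of $\LL$ for $\langle\cdot,\cdot\rangle_{1/2,\sigma}$ means $\LL_*=\Gamma_\sigma\circ\LL\circ\Gamma_\sigma^{-1}$, where $\LL_*$ is the Hilbert--Schmidt adjoint and $\Gamma_\sigma(f)=\sigma^{1/2}f\sigma^{1/2}$. Writing out both sides on the form above and using $[H,\sigma]=0$ together with the eigenrelations: the Hamiltonian parts agree only if $[H,\rho]=0$ for all $\rho$, i.e.\ $H$ is scalar and drops out; and the ``jump'' parts give $\sum_j d_j\,\tilde L_j\rho\tilde L_j^*=\sum_j d_j\,\e^{\omega_j}\,\tilde L_j^*\rho\tilde L_j$, which by orthonormality forces $\{\tilde L_j\}$ to be closed under taking adjoints (item 1; if $\tilde L_j^*=\tilde L_{j'}$ then $\omega_{j'}=-\omega_j$) and $d_{j'}=d_j\e^{\omega_j}$ whenever $\tilde L_j^*=\tilde L_{j'}$. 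Setting $c_j:=d_j\e^{\omega_j/2}>0$ yields $d_j=c_j\e^{-\omega_j/2}$ and $c_j=c_{j'}$ when $\tilde L_j^*=\tilde L_{j'}$, i.e.\ \eqref{symcond}; and since $\tilde L_j^* f\tilde L_j-\tfrac12\{\tilde L_j^*\tilde L_j,f\}=\tfrac12\big(\tilde L_j^*[f,\tilde L_j]+[\tilde L_j^*,f]\tilde L_j\big)$ we recover \eqref{LLDBC} (up to the harmless global normalisation of the $c_j$); the two expressions in \eqref{LLDBC} coincide because pairing $j$ with $j'$ sends $\sum_j c_j\e^{\omega_j/2}[\tilde L_j,f]\tilde L_j^*$ to $\sum_{j'}c_{j'}\e^{-\omega_{j'}/2}[\tilde L_{j'}^*,f]\tilde L_{j'}$. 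For the \emph{converse}, given data obeying items 1--4 and \eqref{symcond}, the second expression in \eqref{LLDBC} equals $\sum_j c_j\e^{-\omega_j/2}\big(2\tilde L_j^* f\tilde L_j-\{\tilde L_j^*\tilde L_j,f\}\big)$, which is manifestly a GKLS generator with Lindblad operators $\propto\tilde L_j$ and no Hamiltonian, hence generates a quantum Markov semigroup; unitality is clear as every commutator with $\mathbb I$ vanishes; and expanding $\tr\big(\sigma g\,\LL(f)\big)$ from the first expression, moving $\sigma$ past the $\tilde L_j$ via item 2, and using cyclicity with the pairing $j\leftrightarrow j'$ and \eqref{symcond}, one gets $\tr\big(\sigma g\,\LL(f)\big)=\tr\big(\sigma\,\LL(g)\,f\big)$ for all $f,g\in\cB(\cH)$ --- the desired self-adjointness for $\langle\cdot,\cdot\rangle_{1,\sigma}$.
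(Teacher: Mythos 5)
This theorem is imported by the paper from \cite{[CM16]} (Theorem 3.1; see also \cite{A76,KFGV77}) and is not proved here, so there is no in-paper proof to compare against; your argument is, however, essentially the standard one from the cited source: GKLS/Kossakowski form, commutation with the modular group via \Cref{selfadjointness}, simultaneous diagonalisation of the Kossakowski matrix and $\Delta_\sigma$, and then the KMS/GNS duality $\LL_*=\Gamma_\sigma\circ\LL\circ\Gamma_\sigma^{-1}$ to kill $H$ and fix the coefficients, and it is correct. The only step you state more strongly than you prove is that orthonormality ``forces'' $\{\tilde L_j\}$ to be closed under adjoints: the identity $\sum_j d_j\tilde L_j\rho\tilde L_j^*=\sum_j d_j\e^{\omega_j}\tilde L_j^*\rho\tilde L_j$ only says the two diagonal Kraus families are related by a unitary mixing within common eigenspaces of $A$ and $\Delta_\sigma$, so in general one must \emph{re-choose} the $\tilde L_j$ inside those degenerate blocks to obtain a family literally satisfying item 1 --- a genuine (if standard) extra rotation, consistent with the gauge freedom recorded in \Cref{rmkU}.
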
			
				\begin{remark}
				The operators ${L}_j\equiv \tilde{L}_j\sqrt{2c_j}\e^{-\omega_j/4}$ in \Cref{LLDBC} play the role of the Lindblad operators, and the Hamiltonian of the system is equal to $0$.
				\end{remark}
			\begin{remark}\label{rmkU}
				Given a primitive quantum Markov semigroup with generator which is self-adjoint with respect to $\langle .,.\rangle_{1,\sigma}$, the operators $\tilde{L}_j$ and constants $c_j$ in \Cref{LLDBC} are uniquely defined up to a unitary transformation, i.e. for any two representations $(\{ c_j\},\{\tilde{L}_j\},\{\omega_j\})$ and $(\{ c_j'\},\{\tilde{L}'_j\},\{\omega_j'\})$ of $\LL$, there exists a $|\cJ|\times |\cJ|$ unitary matrix $U$ such that, unless $\omega_j=\omega_k'$, $U_{jk}=0$ and
				\begin{align}\label{eq17}
				&\sum_{k\in\cJ}\overline{U}_{kj}c_k' U_{kl}=c_l~\delta_{jl},\qquad\tilde{L}_l'=\sum_{k\in\cJ}U_{lk}\tilde{L}_k,
				\end{align}  
			where $\overline{a}$ denotes the complex conjugate of $a\in\CC$.
			\end{remark}	
				\subsubsection*{Dirichlet forms and the quantum de Bruijn identity}
	The $2-$\textit{Dirichlet form} of a quantum Markov semigroup is a very useful object which completely characterizes the semigroup. It is defined as follows: For a semigroup $(\Lambda_t)_{t\ge 0}$ with generator $\LL$, invariant state $\sigma\in\cD(\cH)$, and $\varphi:(0,\infty)\to (0,\infty)$,
	\begin{align}\label{2dirichlet}
		\cE_{\varphi,2}(f,g):= -\langle f,\LL(g)\rangle_{\varphi,\sigma}.
		\end{align} 
		In the context of quantum logarithmic Sobolev inequalities, one can similarly define $p$-Dirichlet forms for $p> 1$ and $\varphi=\varphi_{1/2}$ (see \cite{[OZ99]}). In particular, the $1$-Dirichlet form, first defined in \cite{[KT13]} as the limit $p\to 1$ of the family of $p$-Dirichlet forms, is particularly useful:
		\begin{align*}
			\cE_{1}(f,f):= -\frac{1}{2}\tr\left( \Gamma_\sigma(\LL(f))(\log(\Gamma_{\sigma}(f))-\log\sigma\right),
			\end{align*}
		 $\cE_1(f,f)$ can be related to the so-called \textit{entropy production} associated to the semigroup $(\Lambda_t)_{t\ge 0}$ \cite{S78}. This name is justified by the following identity.
		\begin{lemma}[quantum de Bruijn's identity]
Given a semigroup $(\Lambda_t)_{t\ge 0}$ with generator $\LL$, self-adjoint with respect to $\langle .,.\rangle_{1/2,\sigma}$ for a given full-rank state $\sigma$,
	\begin{align}\label{debruijn}
		\frac{d}{dt}D(\Lambda_{*t}(\rho)\|\sigma)=-2\cE_1(\Gamma_{\sigma}^{-1}\circ\Lambda_{*t}(\rho),\Gamma_\sigma^{-1}\circ\Lambda_{*t}(\rho)),~~~~~\forall t\ge 0.
	\end{align}	
\end{lemma}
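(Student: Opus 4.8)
The plan is to differentiate $D(\Lambda_{*t}(\rho)\|\sigma)$ directly using the chain rule and then identify the resulting expression with $-2\cE_1$. First I would set $\rho_t := \Lambda_{*t}(\rho)$ and recall that, since $\LL$ is self-adjoint with respect to $\langle.,.\rangle_{1/2,\sigma}$, the adjoint $\LL_*$ (with respect to $\langle.,.\rangle_{_{HS}}$) satisfies $\LL_* = \Gamma_\sigma\circ\LL\circ\Gamma_\sigma^{-1}$ by the second Remark after \Cref{selfadjointness}. Consequently $\frac{d}{dt}\rho_t = \LL_*(\rho_t) = \Gamma_\sigma\bigl(\LL(\Gamma_\sigma^{-1}(\rho_t))\bigr)$. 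Writing $f_t := \Gamma_\sigma^{-1}(\rho_t)$, so that $\rho_t = \Gamma_\sigma(f_t)$ and $\frac{d}{dt}\rho_t = \Gamma_\sigma(\LL(f_t))$, the target identity becomes $\frac{d}{dt}D(\Gamma_\sigma(f_t)\|\sigma) = -2\cE_1(f_t,f_t)$, i.e. $\frac{d}{dt}\operatorname{Ent}_{1,\sigma}(f_t) = -2\cE_1(f_t,f_t)$ using the relation $D(\rho\|\sigma)=\operatorname{Ent}_{1,\sigma}(\Gamma_\sigma^{-1}(\rho))$ recorded in the subsection on entropic quantities.

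Next I would compute $\frac{d}{dt}D(\rho_t\|\sigma) = \frac{d}{dt}\tr\bigl(\rho_t(\log\rho_t - \log\sigma)\bigr)$. The derivative of $\tr(\rho_t\log\rho_t)$ contributes $\tr\bigl(\dot\rho_t\log\rho_t\bigr) + \tr\bigl(\dot\rho_t\bigr)$; the trace-preservation of the semigroup (equivalently $\tr\LL_*(\rho_t)=0$, which follows from unitality of $\Lambda_t$) kills the second term. The derivative of $-\tr(\rho_t\log\sigma)$ is $-\tr(\dot\rho_t\log\sigma)$. Hence $\frac{d}{dt}D(\rho_t\|\sigma) = \tr\bigl(\dot\rho_t(\log\rho_t-\log\sigma)\bigr)$. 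Now substitute $\dot\rho_t = \Gamma_\sigma(\LL(f_t))$ and $\rho_t = \Gamma_\sigma(f_t)$ to obtain $\tr\bigl(\Gamma_\sigma(\LL(f_t))\,(\log\Gamma_\sigma(f_t)-\log\sigma)\bigr)$, which is exactly $-2\cE_1(f_t,f_t)$ by the definition of the $1$-Dirichlet form. The only subtlety in this step is the standard fact that, although $\log\rho_t$ does not commute with $\dot\rho_t$, the derivative of $\tr(\rho_t\log\rho_t)$ is still $\tr(\dot\rho_t\log\rho_t)+\tr(\dot\rho_t)$ — the "extra" terms coming from differentiating $\log\rho_t$ inside the trace cancel because $\tr\bigl(\rho_t\,\frac{d}{dt}\log\rho_t\bigr)=\frac{d}{dt}\tr\rho_t=\tr\dot\rho_t$ (a consequence of the integral representation of the logarithm and cyclicity of the trace).

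The main obstacle, therefore, is not conceptual but bookkeeping: one must carefully justify the differentiation of $\tr(\rho_t\log\rho_t)$ for a time-dependent non-commuting family (which requires $\rho_t$ to stay full-rank, guaranteed on the relevant domain since $\rho$ is full-rank and $\Lambda_{*t}$ preserves this for small $t$, and then for all $t$ by the semigroup property together with primitivity), and one must track the passage between the $\rho$-picture and the $f=\Gamma_\sigma^{-1}(\rho)$-picture so that the final expression matches the stated form of $\cE_1$ verbatim. Once the identity $\tr\bigl(\rho_t\,\frac{d}{dt}\log\rho_t\bigr)=\tr\dot\rho_t$ is in hand, the rest is a direct substitution and the result follows for all $t\ge 0$.
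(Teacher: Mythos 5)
Your proposal is correct and follows essentially the same route as the paper: both reduce the claim to the identity $\frac{d}{dt}D(\rho_t\|\sigma)=\tr\bigl(\LL_*(\rho_t)(\log\rho_t-\log\sigma)\bigr)$ and then invoke $\LL_*=\Gamma_\sigma\circ\LL\circ\Gamma_\sigma^{-1}$ to match the definition of $\cE_1$. The only difference is that the paper cites Theorem 3 of Spohn for the derivative formula, whereas you derive it directly via $\tr\bigl(\rho_t\,\tfrac{d}{dt}\log\rho_t\bigr)=\tr\dot\rho_t=0$, which is a correct and self-contained substitute.
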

	\begin{proof}
		Define $\rho_t:=\Lambda_{*t}(\rho)$. By Theorem 3 of \cite{S78},
		\begin{align*}
			\frac{dD(\rho_t\|\sigma)}{dt}=\tr(\LL_*(\rho_t)(\log \rho_t-\log \sigma)).
			\end{align*}
	The result follows from \reff{ll*}.
	\qed
	\end{proof}	
The quantity on the right hand side of \Cref{debruijn} is called \textit{entropy production}:
	\begin{align}\label{Fisher}
		\operatorname{EP}_\sigma(\rho):=2\mathcal{E}_1(\Gamma_{\sigma}^{-1}(\rho),\Gamma_{\sigma}^{-1}(\rho))=: \operatorname{I}_\sigma(\rho).
	\end{align}	
 In analogy with the classical case, we also denote it by $\operatorname{I}_\sigma(\rho)$, and refer to it as the \textit{quantum Fisher information} of $\rho$ with respect to the sate $\sigma$.\\\\	
		 In the case of a semigroup $(\Lambda_t)_{t\ge 0}$ whose generator is self-adjoint with respect to $\langle .,.\rangle_{1,\sigma}$, for a given full-rank state $\sigma$ and $s\in [0,1]$, the Dirichlet form can be rewritten as follows \cite{[CM16]}: 
\begin{align}\label{2dirichletform}
	\mathcal{E}_{s,2}(f,g)\equiv	\mathcal{E}_{\varphi_s,2}(f,g)=\sum_{j\in\mathcal{J}}c_j \e^{(1/2-s)\omega_j}\langle \partial_j f,\partial_j g\rangle_{s,\sigma},
\end{align}
where 
\begin{align}\label{derivation}
	\partial_j f:=[\tilde{L}_j,f].
	\end{align}
The passage from \Cref{2dirichlet} to \Cref{2dirichletform} can be understood as some sort of non-commutative integration by parts, where the generator $\LL$ plays the role of the Laplacian, and each $\partial_i$ the role of the partial derivative ``along the direction $i$''. By analogy with the classical case of Markov processes defined on finite or continuous state spaces, the operators $\partial_i$ defined above are called \textit{derivations} \cite{Davies1993}. Classically, such an integration by parts enables one to map the problem of finding the rates of convergence of a primitive Markov  process with invariant measure $\mu$ to the study of the concentration of measure of bounded or Lipschitz functions of a given random variable $X$ of law $\mu$. The derivations defined in \Cref{derivation}, which directly originate from the generator of a semigroup whose expression is given in \Cref{LLDBC}, allow for the construction of a non-commutative differential calculus, initiated in \cite{[CM14],[CM16]}, which is analogous to the classical case, and is described as follows: Given an operator $U\in \cB(\cH)$, its non-commutative \textit{gradient} is defined as:
\begin{align*}
	\nabla U:=(\partial_1U,...,\partial_{|\mathcal{J}|}U), ~~~ U\in\cB(\cH).
\end{align*}	
 Moreover, given a vector $\mathbf{A}\equiv (A_1,...,A_{|\cJ|})\in \bigoplus_{j\in\cJ}\cB(\cH)$, the \textit{divergence} of $\mathbf{A}$ is defined as
\begin{align}\label{div}
	\operatorname{div}(\mathbf{A}):=\sum_{j\in\mathcal{J}} c_j[A_j,\tilde{L}_j^*].
\end{align}
For $\vec{\omega}:=(\omega_1,...,\omega_{|\mathcal{J}|})$, define the linear operator $[\rho]_{\vec{\omega}}$ on $\bigoplus_{j\in \mathcal{J}}\cB(\cH)$ through
\begin{align*}
	[\rho]_{\vec{\omega}}\mathbf{A}:=([\rho]_{\omega_1}A_1,...,[\rho]_{\omega_{|\mathcal{J}|}A_{|\mathcal{J}|}}),~~~~~~~~\mathbf{A}\equiv(A_1,...,A_{|\mathcal{J}|}),
\end{align*}
where for any $\omega_j\in \RR$,
\begin{align}\label{fomega}
	[\rho]_{\omega_j}=R_\rho\circ f_{\omega_j}(\Delta_{\rho}),~~~~f_{\omega_j}(t):=\e^{\omega_j/2}\frac{t-\e^{-\omega_j}}{\log t+\omega_j},~~~~t\in\RR.
\end{align}
One can easily verify that, up to a normalizing factor depending on $\omega_j$, $[\rho]_{\omega_j}(A)$ reduces to the operation of multiplication of $A$ by $\rho$ in the case when $A$ and $\rho$ commute. The following result can be found in \cite{[CM16]}:
\begin{lemma}[see Lemma 5.8 of \cite{[CM16]}]\label{lemma}
	For any $\omega\in\RR$, $\rho\in \cD_+(\cH)$, and $A\in\cB(\cH)$,
	\begin{align*}
		&[\rho]_{\omega}^{-1}(A)=\int_0^\infty (t+\e^{-\omega/2}\rho)^{-1}A(t+\e^{\omega/2}\rho)^{-1}dt.\\
		&[\rho]_{\omega}(A)=\int_0^1 \e^{\omega(1/2-s)}\rho^s A\rho^{1-s}ds.
	\end{align*}
\end{lemma}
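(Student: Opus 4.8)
The plan is to prove \Cref{lemma} by a direct spectral-calculus computation, treating the two identities separately but in parallel. Recall that $[\rho]_\omega = R_\rho \circ f_\omega(\Delta_\rho)$, where $\Delta_\rho$ is the modular operator $f \mapsto \rho f \rho^{-1}$ acting on the Hilbert space $(\cB(\cH),\langle\cdot,\cdot\rangle_{_{HS}})$, and $f_\omega(t) = \e^{\omega/2}(t - \e^{-\omega})/(\log t + \omega)$. Since $\rho$ is full-rank, $\Delta_\rho$ is positive and invertible, so both the operator $[\rho]_\omega$ and its inverse are well-defined via the functional calculus applied to $\Delta_\rho$. The key observation is that $[\rho]_\omega^{-1} = g_\omega(\Delta_\rho) \circ R_\rho^{-1}$ with $g_\omega = 1/f_\omega$, so it suffices to identify the scalar functions appearing on the two sides via their action on a common eigenbasis.

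First I would diagonalize. Write $\rho = \sum_\mu \mu\, P_\mu$ for the spectral decomposition of $\rho$. Then the operators $P_\lambda(\cdot)P_\mu$ form an eigenbasis of $\Delta_\rho$ with eigenvalue $\lambda/\mu$, and $R_\rho$ acts on $P_\lambda(\cdot)P_\mu$ by multiplication by $\mu$. Hence it is enough to verify both identities when $A = P_\lambda A P_\mu$ is a single ``matrix element''; by linearity the general case follows. For such $A$, the left-hand side of the second identity is $f_\omega(\lambda/\mu)\,\mu\, A = \e^{\omega/2}\,\mu\,\frac{\lambda/\mu - \e^{-\omega}}{\log(\lambda/\mu)+\omega}\,A$, and one computes $\int_0^1 \e^{\omega(1/2-s)}\lambda^s\mu^{1-s}\,ds = \mu\int_0^1 \e^{\omega(1/2-s)}(\lambda/\mu)^s\,ds$; evaluating the elementary integral $\int_0^1 \e^{a s}\,ds = (\e^a-1)/a$ with $a = \log(\lambda/\mu) - \omega$ reproduces exactly $f_\omega(\lambda/\mu)\,\mu$. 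Similarly, for the first identity, on $A = P_\lambda A P_\mu$ one has $(t + \e^{-\omega/2}\rho)^{-1}A(t+\e^{\omega/2}\rho)^{-1} = \frac{1}{(t+\e^{-\omega/2}\lambda)(t+\e^{\omega/2}\mu)}\,A$, and the integral $\int_0^\infty \frac{dt}{(t+a)(t+b)} = \frac{\log(a/b)}{a-b}$ (for $a\neq b$, with the obvious $1/a$ limit when $a=b$) with $a = \e^{-\omega/2}\lambda$, $b = \e^{\omega/2}\mu$ yields $\frac{\log(\e^{-\omega}\lambda/\mu)}{\e^{-\omega/2}\lambda - \e^{\omega/2}\mu} = \frac{\log(\lambda/\mu)-\omega}{\e^{-\omega/2}\lambda-\e^{\omega/2}\mu}$; multiplying by $\mu$ and checking that this is the reciprocal of $f_\omega(\lambda/\mu)\mu$ confirms $[\rho]_\omega^{-1}(A)$ is as claimed.

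The two computations are essentially the same elementary integral in disguise, so the main thing to be careful about — rather than a genuine obstacle — is the bookkeeping: confirming that the right-multiplication operator $R_\rho$ and the modular operator $\Delta_\rho$ genuinely commute (they do, since both are built from left/right multiplications by $\rho$) so that the functional calculus decomposition $[\rho]_\omega = R_\rho\circ f_\omega(\Delta_\rho)$ makes sense and can be inverted factorwise, and handling the degenerate case $\lambda/\mu = \e^{-\omega}$ (equivalently $a=b$ above) where the scalar $f_\omega$ has a removable singularity and the formulas must be read as limits. One should also note the interchange of the integral over $t$ (or $s$) with the finite linear combination defining a general operator $A$ is trivially justified since $\cH$ is finite-dimensional and all integrands are continuous and absolutely integrable. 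Finally, I would remark that the second identity can alternatively be read off directly from the definition of $[\rho]_\omega$ and the integral representation $t^s = $ spectral calculus applied coordinate-wise, but the eigenbasis argument is cleanest and handles both claims uniformly.
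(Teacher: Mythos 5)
Your overall strategy is the natural one, and in fact the only route available for comparison: the paper gives no proof of \Cref{lemma} at all (it is imported verbatim from Lemma 5.8 of \cite{[CM16]}), and decomposing $A=\sum_{\lambda,\mu}P_\lambda(\rho)\,A\,P_\mu(\rho)$ so that every operator in sight acts by a scalar multiplier $m(\lambda,\mu)$ is exactly how such identities are checked. Your two elementary integrals are also evaluated correctly: $\int_0^1\e^{\omega(1/2-s)}\lambda^s\mu^{1-s}\,ds=\e^{\omega/2}\mu\,\bigl((\lambda/\mu)\e^{-\omega}-1\bigr)/\bigl(\log(\lambda/\mu)-\omega\bigr)$ and $\int_0^\infty\bigl(t+\e^{-\omega/2}\lambda\bigr)^{-1}\bigl(t+\e^{\omega/2}\mu\bigr)^{-1}dt=\bigl(\log(\lambda/\mu)-\omega\bigr)/\bigl(\e^{-\omega/2}\lambda-\e^{\omega/2}\mu\bigr)$, and these two scalars are mutual reciprocals, so you have genuinely shown that the two displayed integral operators are inverse to one another.

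The gap is the final step of each computation, which you assert rather than carry out: the claim that the first scalar above ``reproduces exactly $f_\omega(\lambda/\mu)\mu$''. It does not. Writing $x=\lambda/\mu$, your integral equals $\e^{-\omega/2}\,\mu\,(x-\e^{\omega})/(\log x-\omega)=f_{-\omega}(x)\,\mu$, whereas \Cref{fomega} defines $f_\omega(x)=\e^{\omega/2}(x-\e^{-\omega})/(\log x+\omega)$; these coincide at $x=1$ but not in general (take $x=\e^2$, $\omega=1$, $\mu=1$: the integral gives $\e^{1/2}(\e-1)\approx 2.83$ while $f_1(\e^2)=\e^{1/2}(\e^2-\e^{-1})/3\approx 3.86$). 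The identical sign flip occurs in the resolvent formula. So with the conventions $\Delta_\rho(A)=\rho A\rho^{-1}$, $R_\rho(A)=A\rho$ and $f_\omega$ as in \Cref{fomega}, the two right-hand sides of \Cref{lemma} compute $[\rho]_{-\omega}$ and $[\rho]_{-\omega}^{-1}$, not $[\rho]_{\omega}$ and $[\rho]_{\omega}^{-1}$. The discrepancy is a uniform $\omega\mapsto-\omega$ traceable to an inconsistency between \Cref{fomega} and the lemma as transcribed from \cite{[CM16]}: a correct write-up must either flag and repair it (replace $\e^{\omega(1/2-s)}$ by $\e^{\omega(s-1/2)}$ and swap $\e^{\pm\omega/2}$ in the resolvents, or equivalently read \Cref{fomega} as $f_\omega(t)=\e^{-\omega/2}(t-\e^{\omega})/(\log t-\omega)$), or it is not a proof of the statement as written. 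As submitted, the decisive equality on which both verifications rest is false, and the ``checking'' you defer is precisely the step that fails.
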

 \noindent We end this section by stating a very useful lemma:
\begin{lemma}[Chain rule identity, see \cite{[CM16]} Lemma 5.5]\label{chainrule}
	For all $L\in \cB(\cH)$, and all $f\in\cB(\cH)$,
	\begin{align*}
		[L,\e^f]=\int_0^1 \e^{sf}~[L,f]~\e^{f(1-s)}ds.
	\end{align*}	
\end{lemma}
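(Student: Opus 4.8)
The plan is to reduce the identity to the fundamental theorem of calculus applied to a one-parameter interpolation between $Le^f$ and $e^fL$. Concretely, I would introduce the operator-valued function $h:[0,1]\to\cB(\cH)$ defined by $h(s):=e^{sf}\,L\,e^{(1-s)f}$. Its endpoint values are $h(0)=Le^f$ and $h(1)=e^fL$, so that
\[
[L,e^f]=h(0)-h(1)=-\int_0^1 h'(s)\,ds,
\]
provided $h$ is continuously differentiable. This holds because $s\mapsto e^{sf}$ is real-analytic (it is the exponential of a bounded operator on a finite-dimensional space, with a power series converging uniformly on bounded sets) and operator multiplication is smooth, so differentiation under the product is legitimate.

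The next step is to compute $h'(s)$. Since $f$ commutes with $e^{sf}$ and with $e^{(1-s)f}$, we have $\frac{d}{ds}e^{sf}=e^{sf}f$ and $\frac{d}{ds}e^{(1-s)f}=-f\,e^{(1-s)f}$, so the product rule gives
\[
h'(s)=e^{sf}fL\,e^{(1-s)f}-e^{sf}Lf\,e^{(1-s)f}=e^{sf}[f,L]\,e^{(1-s)f}=-\,e^{sf}[L,f]\,e^{(1-s)f}.
\]
Substituting this into $[L,e^f]=-\int_0^1 h'(s)\,ds$ yields exactly $[L,e^f]=\int_0^1 e^{sf}[L,f]\,e^{(1-s)f}\,ds$, which is the claim.

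I do not expect any genuine obstacle here; the only points needing a line of care are (i) the justification of differentiating under the product, which is routine in finite dimensions, and (ii) keeping track of signs and of the fact that $f$ commutes with its own exponential — this commutation is precisely what collapses the two terms from the product rule into the single commutator $[f,L]$. An equivalent but slightly longer route would be to expand $e^f=\sum_{n\ge0}f^n/n!$, apply the telescoping identity $[L,f^n]=\sum_{k=0}^{n-1}f^k[L,f]f^{\,n-1-k}$, and recognize the resulting double series as the Beta-integral $\int_0^1 e^{sf}[L,f]\,e^{(1-s)f}\,ds$; but the interpolation argument above is cleaner and avoids convergence bookkeeping.
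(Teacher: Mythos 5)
Your proof is correct and follows essentially the same route as the paper: both differentiate an interpolating path of the form $s\mapsto \e^{sf}L\e^{(1-s)f}$ (the paper uses the reversed parametrization $\e^{(1-s)f}L\e^{sf}$) and apply the fundamental theorem of calculus. The sign and endpoint bookkeeping in your argument checks out, so nothing further is needed.
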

\begin{proof}
Define the function $\varphi:[0,1]\ni s\mapsto \e^{(1-s)f}L\e^{sf}$. Therefore, 
\begin{align*}
\frac{d}{ds} \e^{(1-s)f}L\e^{sf}=\e^{(1-s)f}[L,f]\e^{sf}.
\end{align*}
The result follows by integrating the above equation from $0$ to $1$.
	\qed
\end{proof}
\subsection{Quantum Wasserstein distances}\label{wass}
\subsubsection*{The quantum Wasserstein distance of order $2$}
Classically, a useful measure of distance between two probability measures $\mu,\nu$ defined on a metric space $(M,d)$ is the so-called \textit{Wasserstein distance} of order $p\ge 1$. It is defined as follows:
\begin{align}\label{2.44}
W_p(\mu,\nu):=\inf_{\pi\in \Pi(\mu,\nu)}\left(\int_{M}d(x,y)^p d\pi(x,y)\right)^{1/p},
\end{align}
where the infimum is taken over the set $\Pi(\mu,\nu)$ of probability measures $\pi$ on the Cartesian product $M\times M $ with marginals $\mu$ and $\nu$. In contrast to the Kullback-Leibler divergence, Wasserstein distances satisfy the properties of a distance, and therefore endow the set $\mathcal{P}(M)$ of probability measures on $M$ with a metric. The cases $p=1$ and $p=2$ are of particular importance. For $p=2$, it can be proved (under some technical assumptions) that $\mathcal{P}(M)$ can be turned into a Riemmanian manifold with associated Riemannian distance given by $W_2$. Moreover, $W_2$ can be re-expressed as the minimizer of a variational problem expressed in terms of a Markov evolution on the manifold $\mathcal{P}(M)$ (see \cite{[V08],[EM12],[RS14]} for more details). In the simplest case when $M$ is $\RR^n$ and $\mu$, respectively $\nu$, are absolutely continuous with respect to the Lebesgue measure, with respective densities $\varphi$ and $\psi$, this is given in terms of the Benamiou-Brenier formula \cite{[BB00]}:
\begin{align}\label{Benambren}
	W_2(\mu,\nu)^2\equiv \inf \int_0^1\int_{\RR^n} \| V(t,x) \|_{\RR^n}^2 \gamma(t,x)~dt,
\end{align}
where $\|.\|_{\RR^n}$ is the usual Euclidean norm on $\RR^n$, and where the infimum is taken over paths $\gamma:[0,1]\times \RR^n\to\RR$ and vector fields $V:[0,1]\times \RR^{n}\to \RR^n$ that satisfy the following \textit{continuity equation}
\begin{align*}
		\partial_t \gamma(t,x)+\operatorname{div} (\gamma(t,x)V(t,x))=0,~~ \gamma(0,.)=\varphi,~~ \gamma(1,.)=\psi.
\end{align*}	
Here, $\operatorname{div}$ denotes the classical divergence operator on $\RR^n$. This characterization of $W_2$ paved the way for a definition of a non-commutative extension of it by Carlen and Maas (\cite{[CM14],[CM16]}). They defined a \textit{non-commutative Wasserstein distance} $W_{2,\LL}$ as follows: For a quantum Markov semigroup with generator $\LL$ of the form of \reff{LLDBC}, let $\gamma(s)$, $s\in (-\eps,\eps)$ be a differential path in $\cD_+(\cH)$ for some $\eps>0$, and denote $\rho:=\gamma(0)$. Then $\tr(\dot{\gamma}(0))=\frac{d}{ds}\left|_{s=0}\tr(\gamma(s))=0\right.$. Carlen and Maas proved that there is a unique vector field $\mathbf{V}\in\bigoplus_{j\in\mathcal{J}} \cB(\cH) $ of the form $\mathbf{V}=\nabla U$, where $U\in\cB(\cH)$ is traceless and self-adjoint, for which the following non-commutative continuity equation holds:
\begin{align}\dot{\gamma}(0)=-\operatorname{div}([\rho]_{\vec{\omega}}\nabla U),\label{continuity2}
\end{align}
 Define the following inner product $\langle .,. \rangle_{\LL,\rho}$ on $\bigoplus_{j\in\mathcal{J}}\cB(\cH)$:
 \begin{align*}
 	\langle \mathbf{W},\mathbf{V}\rangle_{\LL,\rho}:=\sum_{j\in\mathcal{J}} c_j \langle W_j,[\rho]_{\omega_j}{V}_j\rangle_{_{HS}} ,
 \end{align*}
 where $\langle A,B\rangle_{_{HS}} :=\tr(A^* B)$ denotes the usual Hilbert Schmidt inner product on $\cB(\cH)$. 
 Hence, looking upon $\cD_+(\cH)$ as a manifold, for each $\rho\in \cD_+(\cH)$, we can identify the tangent space $T_\rho$ at $\rho$ with the set of gradient vector fields $\{\nabla U:~U\in\cB(\cH),~ U=U^*\}$ through the correspondence provided by the continuity equation (\ref{continuity2}). Defining the metric $g_\LL$ through the relation
 \begin{align}\label{gammav}
 	\|\dot{\gamma}(0)\|_{g_{\LL,\rho}}^2:= \|\mathbf{V}(s)\|^2_{\LL,\rho},
 \end{align}
 this endows the manifold $\cD_+(\cH)$ with a smooth Riemmanian structure. In this framework, Carlen and Maas then defined the \textit{modified non-commutative Wasserstein distance} $W_{2,\LL}$ to be the energy associated to the metric $g_{\LL}$, i.e.: 
	\begin{align}\label{W2}
 		W_{2,\LL}(\rho,\sigma):=\inf_{\gamma}\left\{ \left(\int_0^1 
 		\|\mathbf{V}(s)\|^2_{\LL,\gamma(s)}
 		ds\right)^{1/2}~:~\gamma(0)=\rho,~~\gamma(1)=\sigma\right\}
 		\end{align}
 		where the infimum is taken over smooth paths $\gamma:[0,1]\to \cD_+(\cH)$, and $\mathbf{V}\equiv \nabla U:[0,1]\to \bigoplus_{i\in\cJ}\cB(\cH)$ is related to $\gamma$ through the continuity equation \reff{continuity2}. The paths achieving the infimum, if they exist, are the minimizing geodesics with respect to the metric $g_\LL$. This expression for the non-commutative Wasserstein distance is to be compared with its continuous classical analogue provided in \Cref{Benambren}
 		\begin{remark}The metric ${g_{\LL}}$ can be related to a modified version of the  \textit{Kubo-Mori-Bogoliubov Fisher information metric} on the set $\cD_+(\cH)$ seen as a manifold (see e.g. \cite{[CGT16]}). Other possible quantum extensions of the Wasserstein distance of order $2$ were considered in \cite{[CGT16]} in the case when the unique invariant state of the quantum Markov semigroup is the completely mixed state, starting from different versions of quantum Fisher informations.
 			\end{remark}
 		\noindent
 			  The Wasserstein distance $W_{2,\LL}$ can be alternatively expressed as follows:
 		\begin{lemma}\label{characwass}
 			With the above notations, the Wasserstein distance $W_{2,\LL}$ between two full-rank states $\rho,\sigma$ is equal to the minimal length over the smooth paths joining $\rho$ and $\sigma$:
 			\begin{align}\label{length}
 				W_{2,\LL}(\rho,\sigma)=\inf_{\gamma\text{ const. speed}}\left\{\int_0^1 \|\dot{\gamma}(s)\|_{g_{\LL,\gamma(s)}}ds:~\gamma(0)=\rho,~\gamma(1)=\sigma\right\},
 				\end{align}
 				where the infimum is taken over the set of curves $\gamma$ of constant speed, i.e. such that $s\mapsto \|\dot{\gamma}(s)\|_{g_{\LL,\gamma(s)}}$ is constant on $[0,1]$. 
 		\end{lemma}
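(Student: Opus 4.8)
The plan is to show the two quantities in \eqref{W2} and \eqref{length} coincide by the standard Riemannian-geometry argument relating the energy functional and the length functional of a curve, specialized to the metric $g_\LL$ defined in \eqref{gammav}. First I would note that, by the identification of the tangent space $T_\rho$ with gradient vector fields through the continuity equation \eqref{continuity2}, for any smooth path $\gamma:[0,1]\to\cD_+(\cH)$ the associated $\mathbf V(s)=\nabla U(s)$ satisfies $\|\mathbf V(s)\|^2_{\LL,\gamma(s)}=\|\dot\gamma(s)\|^2_{g_{\LL,\gamma(s)}}$ by definition. Hence the energy of $\gamma$ is $\mathcal A(\gamma):=\int_0^1 \|\dot\gamma(s)\|^2_{g_{\LL,\gamma(s)}}\,ds$ and its length is $\mathcal L(\gamma):=\int_0^1 \|\dot\gamma(s)\|_{g_{\LL,\gamma(s)}}\,ds$, and the claim is that $\inf_\gamma \mathcal A(\gamma)^{1/2}=\inf_{\gamma\text{ const. speed}}\mathcal L(\gamma)=\inf_\gamma \mathcal L(\gamma)$, where the last equality uses reparametrization invariance.

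The key steps, in order, are: (i) By the Cauchy--Schwarz inequality applied to the functions $1$ and $s\mapsto\|\dot\gamma(s)\|_{g_{\LL,\gamma(s)}}$ on $[0,1]$, one has $\mathcal L(\gamma)^2\le \mathcal A(\gamma)$ for every smooth path, with equality if and only if $s\mapsto\|\dot\gamma(s)\|_{g_{\LL,\gamma(s)}}$ is constant, i.e.\ $\gamma$ has constant speed. This immediately gives $\inf_\gamma\mathcal A(\gamma)^{1/2}\ge\inf_{\gamma\text{ const. speed}}\mathcal L(\gamma)$. (ii) Conversely, $\mathcal L$ is invariant under monotone smooth reparametrizations of the interval (a change of variables in the integral), so any path can be reparametrized to constant speed without changing its length, provided it has nonvanishing speed; if the speed vanishes on a set one approximates by a nearby path, or argues that the constant-speed reparametrization still exists as a Lipschitz path and the infimum is unchanged. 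Thus $\inf_\gamma\mathcal L(\gamma)=\inf_{\gamma\text{ const. speed}}\mathcal L(\gamma)$. (iii) For a constant-speed path, equality holds in (i), so $\mathcal A(\gamma)^{1/2}=\mathcal L(\gamma)$; taking infima over constant-speed paths gives $\inf_\gamma\mathcal A(\gamma)^{1/2}\le\inf_{\gamma\text{ const. speed}}\mathcal A(\gamma)^{1/2}=\inf_{\gamma\text{ const. speed}}\mathcal L(\gamma)$. Combining with (i) yields $W_{2,\LL}(\rho,\sigma)=\inf_\gamma\mathcal A(\gamma)^{1/2}=\inf_{\gamma\text{ const. speed}}\mathcal L(\gamma)$, which is \eqref{length}.

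The main obstacle I anticipate is the technical care needed in step (ii): the constant-speed reparametrization is defined through the inverse of the (possibly only weakly increasing) arclength function $t\mapsto\int_0^t\|\dot\gamma(s)\|_{g_{\LL,\gamma(s)}}\,ds$, and when the speed vanishes somewhere this inverse need not produce a smooth path, so one must either restrict to regular paths and argue the infimum is the same, or invoke the fact that the metric $g_\LL$ is a genuine smooth Riemannian metric on the manifold $\cD_+(\cH)$ (established by Carlen and Maas) so that the classical equivalence of the geodesic-energy and geodesic-length infima applies verbatim. Since $\cD_+(\cH)$ is an open subset of an affine space and $g_\LL$ is smooth and positive-definite there, the standard Riemannian result — that the length metric and the energy-minimization both compute the same distance — can be quoted, making this step routine rather than genuinely hard; I would phrase the proof so as to reduce to that classical fact after checking the identification \eqref{gammav} is exactly the Riemannian norm.
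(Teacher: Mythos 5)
Your proposal is correct and follows essentially the same route as the paper: the paper also reduces the claim to the standard energy-versus-length argument, invoking Lemma 1.1.4(b) of Ambrosio--Gigli--Savar\'e for the constant-speed reparametrization (which disposes of the technical worry you raise in step (ii)) and then applying Cauchy--Schwarz exactly as in your steps (i) and (iii). The only cosmetic point is that your inequality $\inf_\gamma\mathcal A(\gamma)^{1/2}\ge\inf_{\gamma\text{ const.\ speed}}\mathcal L(\gamma)$ is not ``immediate'' from Cauchy--Schwarz alone but also needs the reparametrization step, which you do supply.
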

 		\begin{proof}
By the proof of Lemma 1.1.4 (b) of \cite{AGS08}, for any smooth path $\gamma:[0,1]\to \cD_+(\cH)$, with $\gamma(0)=\rho$ and $\gamma(1)=\sigma$ there exists a path $\eta$ with the same boundary conditions and constant speed, i.e.~for any $s\in [0,1]$
\begin{align*}
	\|\dot{\eta}(s)\|_{g_{\LL,\eta(s)}}=\int_0^1 	\|\dot{\gamma}(u)\|_{g_{\LL,\gamma(u)}} du.
	\end{align*}
Fix such a path $\gamma$. By the Cauchy-Schwarz inequality, for any $s\in [0,1]$
 	\begin{align}\label{ineq111}
 	 			\|\dot{\eta}(s)\|_{g_{\LL,\eta(s)}}=	\int_0^1\|\dot{\gamma}(u)\|_{g_{\LL,\gamma(s)}}du\le \left(\int_0^1 \|\dot{\gamma}(u)\|_{g_{\LL,\gamma(s)}}^2du\right)^{1/2}.
 	 				\end{align}
with equality if $\gamma=\eta$. The statement follows by taking the infimum over all smooth paths $\gamma$ on the right hand side of \reff{ineq111}.
 		\qed
 		\end{proof}	
 		\noindent
 			This definition for the quantum Wasserstein distance, $W_{2,\LL}$, is natural in the sense that the path $\gamma(t):=\rho_t$, which is a solution of the master equation
 		\begin{align*}
 			\dot{\rho}_t=\LL_*\rho_t,
 			\end{align*}
 		is the \textit{gradient flow} for $D(.\|\sigma)$ in the metric $g_\LL$, where $\sigma$ is the invariant state associated to $\LL$. This means that $\LL_* \rho=-\operatorname{grad}_{g_\LL} D(\rho\|\sigma)$, where the gradient $\operatorname{grad}_{g_\LL}$ of a differentiable functional $\mathcal{F}:\cD_+(\cH)\to \RR$ is defined as the unique element in the tangent space at $\rho$ such that 
 		\begin{align}\label{gflow}
 		\left.	\frac{d}{dt}\mathcal{F}(\gamma(t))\right|_{t=0}=g_{\LL,\rho}(\dot{\gamma},\operatorname{grad}_{g_{\LL,\rho}}\mathcal{F}(\rho))
\end{align}
for all smooth paths $\gamma(t)$ defined on $(-\eps,\eps)$ for some $\eps>0$ with $\gamma(0)=\rho$, where the quantity on the right hand side of \Cref{gflow} is the metric defined through \Cref{gammav}. In particular, for $\gamma(t)=\rho_t\equiv \Lambda_{*t}(\rho)$,
\begin{align}\label{gradflow}
	\left.\frac{d}{dt}D(\rho_t\|\sigma)\right|_{t=0}=-g_{\LL,\rho}(\LL_*(\rho),\LL_*(\rho))=-\|\LL_*(\rho)\|_{g_{\LL,\rho}}^2.
\end{align}
\subsubsection*{The quantum Wasserstein distance of order 1}	 
The case $p=1$ can also be extended to the quantum setting as follows. The classical Wasserstein distance $W_1(\mu,\nu)$ between two probability measures $\mu,\nu$ has the following dual representation:
\begin{align*}
	W_1(\mu,\nu)=\sup_{\varphi:\,\|\varphi\|_{\operatorname{lip}}\le 1}\left\{\int \varphi d\mu -\int \varphi d\nu\right\},
	\end{align*}
where $\|\varphi\|_{\operatorname{lip}}$ denotes the Lipschitz constant of the function $\varphi$. This representation is due to Kantorovich and Rubinstein \cite{[V08]}, and for this reason is also called the \textit{Kantorovich-Rubinstein distance}. In the same spirit, given a quantum Markov semigroup $(\Lambda_t)_{t\ge0}$ of the form of \Cref{LLDBC} on a Hilbert space $\cH$ of dimension $d$, we define, by analogy with the classical case, the following Lipschitz constant of a self-adjoint operator $f$ :
	\begin{align}\label{lipnorm}
	\|f\|_{\operatorname{Lip}}:= \left(\frac{1}{d}  \sum_{j\in\mathcal{J}} c_j (\e^{-\omega_j/2}+\e^{\omega_j/2})\|\partial_jf\|_{\infty}^2\right)^{1/2}.
	\end{align}
 Next we define the \textit{non-commutative $1$-Wasserstein distance} between two states $\rho$ and $\sigma$ associated to a semigroup with generator $\LL$ of the form of \reff{LLDBC} to be
	\begin{align*}
		W_{1,\LL}(\rho,\sigma):=\sup_{f\in\cB_{sa}(\cH):\,\|f\|_{\operatorname{Lip}}\le 1}|\tr( f(\rho-\sigma))|.
		\end{align*}
		\begin{lemma}\label{wasslemm}
			The non-commutative $1$-Wasserstein distance $W_{1,\LL}$ defines a distance on $\cD_+(\cH)$.
		\end{lemma}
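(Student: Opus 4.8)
The plan is to verify the three defining properties of a metric for $W_{1,\LL}$: non-negativity and symmetry, the triangle inequality, and the identity of indiscernibles (i.e., $W_{1,\LL}(\rho,\sigma)=0 \iff \rho=\sigma$). The first two are essentially immediate from the supremum-of-linear-functionals form of the definition. For non-negativity, note that $f=0$ is admissible (it has $\|f\|_{\operatorname{Lip}}=0\le 1$), so the supremum over $|\tr(f(\rho-\sigma))|$ is at least $0$; symmetry follows because $\|f\|_{\operatorname{Lip}}=\|-f\|_{\operatorname{Lip}}$ (each $\partial_j$ is linear, so $\|\partial_j(-f)\|_\infty=\|\partial_j f\|_\infty$) and $|\tr(f(\rho-\sigma))|=|\tr((-f)(\sigma-\rho))|$, so the two suprema coincide. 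For the triangle inequality, for any admissible $f$ we have $|\tr(f(\rho-\tau))|\le |\tr(f(\rho-\sigma))|+|\tr(f(\sigma-\tau))|\le W_{1,\LL}(\rho,\sigma)+W_{1,\LL}(\sigma,\tau)$, and taking the supremum over $f$ on the left gives $W_{1,\LL}(\rho,\tau)\le W_{1,\LL}(\rho,\sigma)+W_{1,\LL}(\sigma,\tau)$.

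The substantive point is the identity of indiscernibles, and within that only the direction $W_{1,\LL}(\rho,\sigma)=0 \implies \rho=\sigma$ requires work (the converse is trivial since then $\tr(f(\rho-\sigma))=0$ for every $f$). Here the key obstacle is that the ``norm'' $\|\cdot\|_{\operatorname{Lip}}$ defined in \eqref{lipnorm} is only a seminorm on $\cB_{sa}(\cH)$ in general: $\|f\|_{\operatorname{Lip}}=0$ precisely when $\partial_j f=[\tilde L_j,f]=0$ for all $j\in\cJ$, i.e. when $f$ commutes with every $\tilde L_j$. Since the semigroup is primitive (unique full-rank invariant state $\sigma$), the only operators commuting with all the $\tilde L_j$ and $\tilde L_j^*$ — equivalently, the fixed-point algebra of the generator — are the scalars $c\,\mathbb{I}$; the primitivity assumption is exactly what guarantees this. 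The consequence is that if $W_{1,\LL}(\rho,\sigma)=0$, then $\tr(f(\rho-\sigma))=0$ for all $f$ with $\|f\|_{\operatorname{Lip}}\le 1$, hence for all $f$ with $\|f\|_{\operatorname{Lip}}<\infty$ by rescaling (every self-adjoint $f$ has finite Lipschitz seminorm since $\cH$ is finite-dimensional), so $\tr(f(\rho-\sigma))=0$ for \emph{all} $f\in\cB_{sa}(\cH)$. Taking $f$ to range over a basis of $\cB_{sa}(\cH)$ forces $\rho-\sigma=0$.

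Concretely I would structure it as: (i) observe $\|\cdot\|_{\operatorname{Lip}}$ is finite on all of $\cB_{sa}(\cH)$, so that for any $f\ne 0$ with $\|f\|_{\operatorname{Lip}}>0$ we may normalize $f/\|f\|_{\operatorname{Lip}}$ to obtain $|\tr(f(\rho-\sigma))|\le \|f\|_{\operatorname{Lip}}\,W_{1,\LL}(\rho,\sigma)$; (ii) treat the degenerate case $\|f\|_{\operatorname{Lip}}=0$ separately, noting that then $f$ commutes with all $\tilde L_j$, and — using primitivity and \Cref{theo2} (which gives that $\{\tilde L_j\}=\{\tilde L_j^*\}$ spans, together with $\mathbb I$, enough of $\cB(\cH)$ that their commutant is trivial) — conclude $f=c\,\mathbb I$, whence $\tr(f(\rho-\sigma))=c\,\tr(\rho-\sigma)=0$ anyway; (iii) combine (i) and (ii) to get $\tr(f(\rho-\sigma))=0$ for every $f\in\cB_{sa}(\cH)$ when $W_{1,\LL}(\rho,\sigma)=0$, and conclude $\rho=\sigma$. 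The only place one must be slightly careful is step (ii): one needs that a \emph{primitive} semigroup of the GKLS form \eqref{LLDBC} has trivial decoherence-free algebra, which is standard but should be invoked explicitly (it is equivalent to $\sigma$ being the unique invariant state). Everything else is routine.
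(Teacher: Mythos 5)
Your proof is correct and follows the same overall route as the paper: verify the metric axioms directly from the supremum-of-linear-functionals definition, with symmetry, non-negativity and the triangle inequality being immediate.

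The one place where you diverge is the identity of indiscernibles, and your extra care there is warranted: the paper simply asserts that $W_{1,\LL}(\rho,\sigma)=0$ forces $\tr(f(\rho-\sigma))=0$ for \emph{all} $f\in\cB_{sa}(\cH)$, silently passing from the constrained supremum over $\{\|f\|_{\operatorname{Lip}}\le 1\}$ to arbitrary $f$, and ignoring that $\|\cdot\|_{\operatorname{Lip}}$ is only a seminorm. Your resolution via the decoherence-free algebra (primitivity $\Rightarrow$ the commutant of $\{\tilde L_j\}$ is $\CC\,\mathbb{I}$, so degenerate $f$ pair trivially with $\rho-\sigma$) is valid, but it is heavier than necessary and imports an assumption the argument does not need. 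A cheaper way to close step (ii): if $\|f\|_{\operatorname{Lip}}=0$ then $tf$ is admissible for every $t>0$, so $t\,|\tr(f(\rho-\sigma))|\le W_{1,\LL}(\rho,\sigma)=0$ for all $t$, giving $\tr(f(\rho-\sigma))=0$ directly, with no appeal to primitivity or to the structure of the commutant. Combined with your rescaling in step (i) for $\|f\|_{\operatorname{Lip}}>0$, this yields $\tr(f(\rho-\sigma))=0$ for all self-adjoint $f$ and hence $\rho=\sigma$. Either way, your argument is complete and in fact more rigorous than the one printed in the paper.
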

	\begin{proof}
Symmetry and non-negativity are obvious by definition. If $W_{1,\LL}(\rho,\sigma)=0$, then for all $f\in\cB_{sa}(\cH)$, $\tr (f(\rho-\sigma))=\langle f,\rho-\sigma\rangle_{_{HS}}=0$, which implies $\rho=\sigma$. Finally, let $\rho,\sigma,\tau\in\cD_+(\cH)$. Then, for any $f\in\cB_{sa}(\cH)$ such that $\|f\|_{\operatorname{Lip}}\le 1$, we have by the triangle inequality
\begin{align*}
|\tr (f(\rho-\tau))|\le |\tr f(\rho-\sigma)|+|\tr f(\sigma-\tau)|\le W_{1,\LL}(\rho,\tau)+W_{1,\LL}(\sigma,\tau).
\end{align*}
The result follows by taking the supremum over such operators $f$ on the right hand side of the above inequality.
	\qed
	\end{proof}
	\noindent
	In the classical setting, a simple use of H\"{o}lder's inequality leads to the fact that 
 \begin{align*}
 1\le	p\le q<\infty~~~\Rightarrow~~~ W_p\le W_q.
 	\end{align*}
  	The following lemma shows that this hierarchy is maintained, up to a dimensional factor, in the quantum case:
 	\begin{lemma}\label{W1W2}
 		For any $\rho,\sigma\in\cD_+(\cH)$, 
 		\begin{align*}
 			W_{1,\LL}(\rho,\sigma)\le \sqrt{d}~W_{2,{\LL}}(\rho,\sigma),
\end{align*}
\end{lemma}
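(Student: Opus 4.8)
The plan is to mimic the classical passage $W_1\le W_2$ via the Benamou--Brenier formulation: bound the dual (Kantorovich--Rubinstein-type) expression defining $W_{1,\LL}$ by transporting the test operator $f$ along a near-optimal $W_{2,\LL}$-path. Fix $\rho,\sigma\in\cD_+(\cH)$ and a self-adjoint $f$ with $\|f\|_{\operatorname{Lip}}\le1$; it suffices to show $|\tr(f(\sigma-\rho))|\le\sqrt d\,W_{2,\LL}(\rho,\sigma)$ and then take the supremum over such $f$. Given $\eps>0$ I would pick a smooth path $\gamma:[0,1]\to\cD_+(\cH)$ joining $\rho$ to $\sigma$ with $\int_0^1\|\mathbf V(s)\|^2_{\LL,\gamma(s)}\,ds\le(W_{2,\LL}(\rho,\sigma)+\eps)^2$ (possible by the definition \eqref{W2} of $W_{2,\LL}$), where $\mathbf V(s)=\nabla U(s)$ is the gradient vector field attached to $\dot\gamma(s)$ through the continuity equation \eqref{continuity2}. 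Then $\tr(f(\sigma-\rho))=\int_0^1\tr(f\dot\gamma(s))\,ds=-\int_0^1\tr\!\big(f\,\operatorname{div}([\gamma(s)]_{\vec\omega}\nabla U(s))\big)\,ds$.

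The first step is a non-commutative integration by parts. Using the definition \eqref{div} of $\operatorname{div}$, cyclicity of the trace, and the identity $(\partial_j f)^*=[f,\tilde L_j^*]$ (valid since $f=f^*$), one checks $\tr(f\,\operatorname{div}(\mathbf A))=-\sum_{j\in\cJ}c_j\tr\!\big((\partial_j f)^* A_j\big)$ for every $\mathbf A\in\bigoplus_j\cB(\cH)$. Applied with $\mathbf A=[\gamma(s)]_{\vec\omega}\nabla U(s)$ this is precisely $-\langle\nabla f,\mathbf V(s)\rangle_{\LL,\gamma(s)}$, whence $\tr(f(\sigma-\rho))=\int_0^1\langle\nabla f,\mathbf V(s)\rangle_{\LL,\gamma(s)}\,ds$. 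Since each $[\rho]_{\omega_j}$ is a positive operator on $(\cB(\cH),\langle .,.\rangle_{_{HS}})$ — immediate from the integral formula of \Cref{lemma} — the form $\langle .,.\rangle_{\LL,\gamma(s)}$ is a genuine inner product, and Cauchy--Schwarz gives $|\tr(f(\sigma-\rho))|\le\int_0^1\|\nabla f\|_{\LL,\gamma(s)}\,\|\mathbf V(s)\|_{\LL,\gamma(s)}\,ds$.

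The heart of the argument is the pointwise bound
\[ \|\nabla f\|_{\LL,\rho}^2=\sum_{j\in\cJ}c_j\big\langle\partial_j f,[\rho]_{\omega_j}\partial_j f\big\rangle_{_{HS}}\ \le\ d\,\|f\|_{\operatorname{Lip}}^2\qquad\text{for all }\rho\in\cD_+(\cH), \]
which I would prove as follows. By the integral representation $[\rho]_{\omega}(A)=\int_0^1 \e^{\omega(1/2-s)}\rho^s A\rho^{1-s}\,ds$ of \Cref{lemma} one has $\langle A,[\rho]_\omega A\rangle_{_{HS}}=\int_0^1 \e^{\omega(1/2-s)}\,\|\rho^{s/2}A\rho^{(1-s)/2}\|_2^2\,ds$; Hölder's inequality for Schatten norms with exponents $(2/s,\infty,2/(1-s))$ together with $\tr\rho=1$ gives $\|\rho^{s/2}A\rho^{(1-s)/2}\|_2\le\|A\|_\infty$, hence $\langle A,[\rho]_\omega A\rangle_{_{HS}}\le\|A\|_\infty^2\,\frac{\e^{\omega/2}-\e^{-\omega/2}}{\omega}$. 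Finally the elementary inequality $\sinh x\le x\cosh x$ yields $\frac{\e^{\omega/2}-\e^{-\omega/2}}{\omega}\le \e^{\omega/2}+\e^{-\omega/2}$, and summing over $j$ and comparing with the definition \eqref{lipnorm} of $\|f\|_{\operatorname{Lip}}$ gives the displayed inequality.

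Combining these, $\|\nabla f\|_{\LL,\gamma(s)}\le\sqrt d\,\|f\|_{\operatorname{Lip}}\le\sqrt d$, so by Cauchy--Schwarz in the variable $s$ one gets $|\tr(f(\sigma-\rho))|\le\sqrt d\int_0^1\|\mathbf V(s)\|_{\LL,\gamma(s)}\,ds\le\sqrt d\big(\int_0^1\|\mathbf V(s)\|^2_{\LL,\gamma(s)}\,ds\big)^{1/2}\le\sqrt d\,(W_{2,\LL}(\rho,\sigma)+\eps)$. Letting $\eps\to0$ and taking the supremum over admissible $f$ finishes the proof (one could equally invoke the constant-speed length formula of \Cref{characwass} instead of the $\eps$-argument). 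I expect the only real obstacle to be this pointwise estimate on $\|\nabla f\|_{\LL,\rho}$ — in particular matching the constant to the specific weights $c_j\,(\e^{-\omega_j/2}+\e^{\omega_j/2})$ appearing in \eqref{lipnorm}, which is exactly what forces both the Hölder step (removing the $\rho$-dependence and leaving $\|\partial_j f\|_\infty$) and the $\sinh/\cosh$ comparison; the integration by parts and the two applications of Cauchy--Schwarz are routine.
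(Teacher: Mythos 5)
Your proposal is correct and follows the same skeleton as the paper's proof — dual formulation of $W_{1,\LL}$, a near-optimal $W_{2,\LL}$-path, the continuity equation, integration by parts against $\operatorname{div}$, and two applications of Cauchy--Schwarz — but it differs genuinely at the one non-routine step, namely the pointwise bound $\langle \partial_j f,[\gamma(s)]_{\omega_j}\partial_j f\rangle_{_{HS}}\le(\e^{-\omega_j/2}+\e^{\omega_j/2})\|\partial_j f\|_\infty^2$ (the paper's \reff{lost}). The paper obtains this by first proving the operator inequality \reff{eq10}, $[\gamma(s)]_{\omega_j}\le \e^{-\omega_j/2}R_{\gamma(s)}+\e^{\omega_j/2}L_{\gamma(s)}$, via the identification of $f_{\omega_j}^{-1}(\Delta_{\gamma(s)})$ with a Kubo--Mori-type function $k^{\log}$ and an appeal to the operator-monotonicity comparison of Theorem 2.11 of Lesniewski--Ruskai, and only then applies H\"older. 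You instead work directly on the quadratic form using the second integral representation of \Cref{lemma}, writing $\langle A,[\rho]_\omega A\rangle_{_{HS}}=\int_0^1\e^{\omega(1/2-s)}\|\rho^{s/2}A\rho^{(1-s)/2}\|_2^2\,ds$, applying Schatten--H\"older with exponents $(2/s,\infty,2/(1-s))$ and $\tr\rho=1$, and finishing with $\sinh x\le x\cosh x$. Your route is more self-contained (no external monotone-metric machinery) and in fact yields the slightly sharper intermediate constant $\tfrac{\sinh(\omega_j/2)}{\omega_j/2}\le 2\cosh(\omega_j/2)$, which you then relax to match the weights in \reff{lipnorm}; what it does not give is the operator inequality \reff{eq10} itself, which is of independent interest but is not needed for the lemma. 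All the auxiliary claims you make (the sign bookkeeping in the integration by parts via $(\partial_jf)^*=[f,\tilde L_j^*]$ for self-adjoint $f$, the positivity of $[\rho]_{\omega_j}$ on the Hilbert--Schmidt space, and the final double Cauchy--Schwarz) check out.
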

\begin{proof}
	The proof is inspired by the proof of Proposition 2.12 of \cite{[EM12]}. Fix $\delta>0$, and $\rho,\sigma\in\cD_+(\cH)$. There exists a smooth path $(\gamma(s))_{s\in [0,1]}$ in $\cD_+(\cH)$ such that $\gamma(0)=\rho$, $\gamma(1)=\sigma$, and by definition of $W_{2,\LL}$:
		\begin{align}\label{upbound}
		\left(\int_{0}^1\sum_{j\in\mathcal{J}} c_j \langle V(s)_j,[\gamma(s)]_{\omega_j}{V(s)}_j\rangle_{_{HS}}\right)^{1/2} \equiv	\left( \int_0^1 \|\mathbf{V}(s)\|_{\LL,\gamma(s)}^2\right)^{1/2}\le  W_{2,\LL}(\rho,\sigma)+\delta.
	\end{align}
	Recall that for each $s\in[0,1]$, $\mathbf{V}(s)\equiv(V(s)_1,...,V(s)_{|\mathcal{J}|})$ is related to $\gamma(s)$ through the continuity equation:
	\begin{align*}
				\dot{\gamma}(s)+\operatorname{div}([\gamma(s)]_{\vec{\omega}}\mathbf{V}(s))=0,
	\end{align*}	
where the divergence operator $\operatorname{div}$ is defined in \Cref{div}. Hence, for any $f\in \cB_{sa}(\cH)$ such that $\|f\|_{\operatorname{Lip}}\le 1$,
	\begin{align}
		|\tr(f(\rho-\sigma))|&=\left|\tr\left( f\int_0^1\frac{d}{ds}\gamma(s) ds\right)\right|\nonumber\\
		&=\left| \int_0^1 \tr \left(f \operatorname{div}([\gamma(s)]_{\vec{\omega}}\mathbf{V}(s))\right)\right|\nonumber\\
		&=\left| \int_0^1\sum_{j\in \mathcal{J}} c_j \langle \partial_j f, [\gamma(s)]_{\omega_j} {V}(s)_j\rangle_{_{HS}} ds\right|\nonumber\\
		&\le \left(\int_0^1 \sum_{j\in\mathcal{J}} c_j\langle \partial_j f,[\gamma(s)]_{\omega_j}\partial_jf\rangle_{_{HS}}  ds\right)^{1/2} \left(\int_0^1 \sum_{j\in\mathcal{J}} c_j\langle V(s)_j,[\gamma(s)]_{\omega_j}V(s)_j\rangle_{_{HS}}  ds\right)^{1/2}\nonumber\\
		&\le \left(\int_0^1 \sum_{j\in\mathcal{J}} c_j\langle \partial_j f,[\gamma(s)]_{\omega_j}\partial_jf\rangle_{_{HS}}  ds\right)^{1/2} (W_{2,\LL}(\rho,\sigma)+\delta)\label{eq13}
			\end{align}
	where in the fourth line we used the Cauchy-Schwarz inequality with respect to the inner product $\sum_{j\in\cJ} c_j \int_0^1 \langle.,[\gamma(s)]_{\omega_j}.\rangle_{_{HS}} ds$, and the last line comes from \reff{upbound}. Now, for each $j\in\mathcal{J}$, and any $s\in [0,1]$, it can be shown that
	\begin{align}\label{eq10}
		[\gamma(s)]_{\omega_j}\le \e^{-\omega_j/2}R_{\gamma(s)}+\e^{\omega_j/2}L_{\gamma(s)},
	\end{align}	
where $L_{\gamma(s)}$ and $R_{\gamma(s)}$ are the operators of left multiplication, respectively right multiplication, by $\gamma(s)$.	This inequality follows from the fact that the metric $g_{\LL}$ involves a sum of Kubo-Mori-Bogoliubov quantum Fisher information metrics, as introduced by Petz in \cite{P96} (see also e.g. \cite{LR99,HK99,PWPR06,HKPR13,TKRWV10}). Indeed, from \Cref{fomega}, we have
\begin{align}
 (f^{-1}_{\omega_j})(\Delta_{\gamma(s)})&=\e^{-\omega_j/2}({\log \Delta_{\gamma(s)}+\omega_j})({\Delta_{\gamma(s)}-\e^{-\omega_j}})^{-1}\nonumber\\
	&=\e^{\omega_j/2}({\log \Delta_{\e^{\omega_j}\gamma(s)|\gamma(s)}})({\Delta_{\e^{\omega_j}\gamma(s)|\gamma(s)}-1})^{-1}\nonumber\\
	&\equiv\e^{\omega_j/2}k^{\log}(\Delta_{\e^{\omega_j}\gamma(s)|\gamma(s)}),\label{eq11}
\end{align}		
	where, using the notations of \cite{LR99}, $k^{\log}:\RR_+\ni t\mapsto (t-1)^{-1}\log(t)$ is such that $-k^{\log}$ is operator monotone, $k^{\log}(t^{-1})=tk^{\log}(t)$, and $k^{\log}(1)=1$. Therefore, from Theorem 2.11 of \cite{LR99}, it follows that
	\begin{align}\label{eq12}
		R_{\gamma(s)}^{-1} k^{\log}(\Delta_{\e^{\omega_j}\gamma(s)|\gamma(s)})\ge (R_{\gamma(s)}+L_{\e^{\omega_j}\gamma(s)})^{-1}.
		\end{align}
	The inequality \reff{eq10} follows by substituting \reff{eq11} into \reff{eq12}. Therefore,
	\begin{align}
		 \langle \partial_j f,[\gamma(s)]_{\omega_j}\partial_j f\rangle_{_{HS}}  &\le \e^{-\omega_j/2} \tr(\gamma(s)\partial_jf^*\partial_jf)+\e^{\omega_j/2}\tr(\partial_jf^* \gamma(s)\partial_j f)\nonumber\\
		 &\le (\e^{-\omega_j/2} +\e^{\omega_j/2})\|\partial_jf\|_{\infty}^2,\label{lost}
		 \end{align} 
		 where the last line follows by H\"{o}lder's inequality. Substituting this result into \reff{eq13}, we end up with
		 \begin{align}
	|\tr(f(\rho-\sigma))|\le \sqrt{d}~\|f\|_{\operatorname{Lip}}(W_{2,\LL}(\rho,\sigma)+\delta)\le \sqrt{d}~(W_{2,\LL}(\rho,\sigma)+\delta),
	\end{align}
	since by assumption $\|f\|_{\operatorname{Lip}}\le1$. The result follows by taking the limit $\delta\to 0$, and optimizing the left hand side of the above inequality over all such $f$.
	\qed
	\end{proof}
\noindent
		Notice that the expression \reff{lipnorm} depends on the representation \reff{LLDBC}. An alternative definition of the quantum Lipschitz constant which is independent of the representation is as follows:
		\begin{align*}
		\|f\|_{\operatorname{Lip,2}}:=  \left(\frac{1}{d}  \sum_{j\in\mathcal{J}} c_j (\e^{-\omega_j/2}+\e^{\omega_j/2})\|\partial_jf\|_{2}^2\right)^{1/2}.
	\end{align*}	
 $\|f\|_{\operatorname{Lip},2}$ does not depend on the representation $(\{c_j\},\{\tilde{L_j}\},\{\omega_j\})$ in \reff{LLDBC}. Indeed, given two such representations $(\{c_j\},\{\tilde{L_j}\},\{\omega_j\})$ and $(\{c_j'\},\{\tilde{L_j'}\},\{\omega_j'\})$,
 \begin{align*}
 	 \sum_{j\in\mathcal{J}} c'_j (\e^{-\omega_j'/2}+\e^{\omega_j'/2})\|\partial_j'f\|_{2}^2&= \sum_{j\in\cJ}c_j' (\e^{-\omega_j'/2}+\e^{\omega_j'/2})\tr([f,{\tilde{L'}_j^*}][\tilde{L}_j',f])\\
 	 &=\sum_{j,l,k\in\cJ}c_j' (\e^{-\omega_j'/2}+\e^{\omega_j'/2})U_{jk}\overline{U}_{jl}\tr([f,{\tilde{L}_l^*}][\tilde{L}_k,f]   )\\
 	 &=\sum_{j,l,k\in\cJ}U_{jk}\overline{U}_{jl}c_j' (\e^{-\omega_k/2}+\e^{\omega_k/2})\tr([f,{\tilde{L}_l^*}][\tilde{L}_k,f]   )\\
 	 &=\sum_{l,k\in\cJ}c_l\delta_{kl} (\e^{-\omega_k/2}+\e^{\omega_k/2})\tr([f,{\tilde{L}_l^*}][\tilde{L}_k,f]   )\\
 	&= \sum_{l\in\cJ}c_l (\e^{-\omega_l/2}+\e^{\omega_l/2})\tr([f,{\tilde{L}_l^*}][\tilde{L}_l,f]   )\\
 	 &=\sum_{j\in\cJ} c_j (\e^{-\omega_j/2}+\e^{\omega_j/2})\|\partial_jf\|_{2}^2,
 	  \end{align*}
  where we used \Cref{rmkU} in the above lines. Moreover, from $\|f\|_\infty\le \|f\|_2$, it follows that $\|f\|_{\operatorname{Lip}}\le \|f\|_{\operatorname{Lip},2}$. Two other Lipschitz constants can be defined:
\begin{align*}
&\|f\|_{\operatorname{Lip,g}}:=\sup_{j,~c_j\ne 0} \|\partial_j f\|_{2} \left(\frac{2}{d}\sum_{k\in \cJ} c_k \e^{-\omega_k/2}  \right)^{1/2},\\
&\|f\|_{\operatorname{Lip,H}}:=\sup_{j} \|\partial_j f\|_{2} \left(\frac{2}{d}\sum_{k\in \cJ} c_k \e^{-\omega_k/2} \right)^{1/2},
\end{align*}
where we used that for any $k\in\cJ$, there exists $k'\in\cJ$ such that $\omega_k=-\omega_{k'}$ and $c_k=c_{k'}$ (cf. \Cref{LLDBC}). For each of these Lipschitz constants $\|f\|_{\operatorname{Lip},\star}$, one can define an associated quantum Wasserstein distance of order $1$:
\begin{align}\label{WALT}
W_{1,\LL,\star}(\rho,\sigma):=\sup_{f\in\cB_{sa}(\cH), ~\|f\|_{\operatorname{Lip},\star}\le 1}|\tr f(\rho-\sigma)|
\end{align}
for which \Cref{wasslemm} extends, i.e. $W_{1,\LL,\star}$ defines a distance on $\cD_+(\cH)$. The following proposition follows directly from the direct observation that 
\begin{align*}
\|f\|_{\operatorname{Lip}}\le \|f\|_{\operatorname{Lip,2}}\le \|f\|_{\operatorname{Lip,g}}\le \|f\|_{\operatorname{Lip,H}}.
\end{align*}
\begin{proposition}\label{proplip}
	For any $\rho,\sigma\in\cD_+(\cH)$,
	\begin{align*}
	W_{1,\LL,\operatorname{H}}(\rho,\sigma)\le W_{1,\LL,\operatorname{g}}(\rho,\sigma)\le W_{1,\LL,2}(\rho,\sigma)\le W_{1,\LL}(\rho,\sigma)\le \sqrt{d}~ W_{2,\LL}(\rho,\sigma).
	\end{align*}
\end{proposition}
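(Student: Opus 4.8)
The plan is to reduce everything to two observations: a pointwise chain of inequalities between the four Lipschitz constants, and the fact that reversing an inequality between norms reverses the corresponding inequality between the associated dual (sup-type) Wasserstein distances. The last link $W_{1,\LL}(\rho,\sigma)\le\sqrt d\,W_{2,\LL}(\rho,\sigma)$ is already \Cref{W1W2}, so it can simply be invoked.

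First I would establish the chain $\|f\|_{\operatorname{Lip}}\le\|f\|_{\operatorname{Lip},2}\le\|f\|_{\operatorname{Lip,g}}\le\|f\|_{\operatorname{Lip,H}}$ for every $f\in\cB_{sa}(\cH)$. The first of these was already noted in the text: it follows from $\|\partial_j f\|_\infty\le\|\partial_j f\|_2$ applied termwise inside the square root defining \reff{lipnorm} and its $2$-norm analogue. For the second, $\|f\|_{\operatorname{Lip},2}^2=\frac1d\sum_j c_j(\e^{-\omega_j/2}+\e^{\omega_j/2})\|\partial_j f\|_2^2$; bounding each $\|\partial_j f\|_2^2\le\sup_{k:\,c_k\ne 0}\|\partial_k f\|_2^2$ and pulling it out gives $\|f\|_{\operatorname{Lip},2}^2\le\big(\sup_{k:\,c_k\ne 0}\|\partial_k f\|_2^2\big)\cdot\frac1d\sum_j c_j(\e^{-\omega_j/2}+\e^{\omega_j/2})$, and since by the pairing structure of \Cref{theo2} (item 1 together with \eqref{eigenD}) the $\omega_j$ come in $\pm$ pairs with equal $c_j$, we have $\sum_j c_j(\e^{-\omega_j/2}+\e^{\omega_j/2})=2\sum_j c_j\e^{-\omega_j/2}$, which yields exactly $\|f\|_{\operatorname{Lip,g}}^2$. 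The third inequality is immediate since $\|f\|_{\operatorname{Lip,g}}$ takes the supremum only over indices with $c_j\ne 0$ whereas $\|f\|_{\operatorname{Lip,H}}$ takes it over all $j$, and the prefactors are identical; any $j$ with $c_j=0$ only enlarges the supremum.

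Next I would pass from norms to distances. For any seminorm (or norm) $N$ on $\cB_{sa}(\cH)$ define $W_N(\rho,\sigma):=\sup\{|\tr f(\rho-\sigma)|:N(f)\le1\}$, which is precisely the form of all the $W_{1,\LL,\star}$ in \reff{WALT}. If $N_1\le N_2$ pointwise then the unit ball of $N_2$ is contained in that of $N_1$, so the supremum defining $W_{N_2}$ is over a smaller set, giving $W_{N_2}(\rho,\sigma)\le W_{N_1}(\rho,\sigma)$. Applying this with the chain just proved, in the order $\operatorname{Lip}\le\operatorname{Lip,2}\le\operatorname{Lip,g}\le\operatorname{Lip,H}$, reverses to $W_{1,\LL,\operatorname{H}}\le W_{1,\LL,\operatorname{g}}\le W_{1,\LL,2}\le W_{1,\LL}$. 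Concatenating this with \Cref{W1W2} gives the full statement $W_{1,\LL,\operatorname{H}}(\rho,\sigma)\le W_{1,\LL,\operatorname{g}}(\rho,\sigma)\le W_{1,\LL,2}(\rho,\sigma)\le W_{1,\LL}(\rho,\sigma)\le\sqrt d\,W_{2,\LL}(\rho,\sigma)$.

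I do not anticipate a serious obstacle here — the result is essentially bookkeeping once one is comfortable with the two principles above. The only subtle point is the second Lipschitz inequality, where one must correctly use the $\pm\omega_j$ symmetry of the representation (items 1–2 of \Cref{theo2} and the remark following it) to identify $\sum_j c_j(\e^{-\omega_j/2}+\e^{\omega_j/2})$ with $2\sum_j c_j\e^{-\omega_j/2}$; this is exactly the identity already flagged in the text just before the definitions of $\|f\|_{\operatorname{Lip,g}}$ and $\|f\|_{\operatorname{Lip,H}}$. Everything else is the observation that enlarging a seminorm shrinks the dual ball, plus an invocation of the already-proved \Cref{W1W2}.
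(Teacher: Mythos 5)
Your proposal is correct and follows essentially the same route as the paper: the paper likewise derives the proposition directly from the pointwise chain $\|f\|_{\operatorname{Lip}}\le \|f\|_{\operatorname{Lip,2}}\le \|f\|_{\operatorname{Lip,g}}\le \|f\|_{\operatorname{Lip,H}}$ (using the same $\pm\omega_j$ pairing to identify $\sum_j c_j(\e^{-\omega_j/2}+\e^{\omega_j/2})$ with $2\sum_j c_j\e^{-\omega_j/2}$), the reversal of inequalities under the sup-over-unit-ball duality, and an invocation of \Cref{W1W2}. You merely spell out the bookkeeping the paper leaves implicit.
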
	
\noindent
The reason behind the use of the denominations $\operatorname{g}$ and $\operatorname{H}$ comes from the fact that the associated Lipschitz constants reduce to the classical Lipschitz constants associated to the graph and Hamming distances respectively as follows: 	Let $\Omega=\{1,...,d\}$ be a finite classical state space, and fix an orthonormal basis $\{|i\rangle\}_{i=1,...,d}$ of vectors associated to each element of $\Omega$. Moreover, define the quantum Markov semigroup of generator $\LL$ of the form given in \Cref{LLDBC} with $\tilde{L}_{j}\equiv \tilde{L}_{kl}=\sqrt{d}~|k\rangle\langle l|$, where we assume here that $\cJ$ is a finite set of pairs $j=(k,l)$. Then, define 
		\begin{align*}
		\LL(f)=\sum_{k\ne l}   ~    d~c_{k,l}~\e^{-\omega_{k,l}/2}(~ |l\rangle\langle k|~[f,|k\rangle\langle l|]   ~+~[|l\rangle\langle k|,f] ~|k\rangle\langle l|~).
		\end{align*}
		Assume further that $f=\sum_{k=1}^d \varphi(k)|k\rangle\langle k|$, for some real valued function $\varphi:\Omega\mapsto \RR$. In this case (see Theorem 4.2 of \cite{[CM16]}), the quantum Markov semigroup of generator $\LL$ induces a classical Markov semigroup on $\Omega$ with associated $Q$ matrix \cite{norris1998markov}
		\begin{align*}
		Q_{k,l}:=2d\sum_{k'\ne l'} c_{k',l'}\e^{-\omega_{k',l'}/2} \langle k| k'\rangle \langle l'|l\rangle\langle l|l'\rangle\langle l'|k\rangle= 2~d~c_{k,l}~\e^{-\omega_{k,l}/2},~~~~k\ne l
		\end{align*}
		and so that $\sum_{l\in\Omega}Q_{k,l}=0$. Then,
		\begin{align*}
		\LL(f)=\sum_{k\in\Omega} (Q\varphi)(k)~|k\rangle\langle k|.
		\end{align*}
		Assuming further that, for each $k=1,...,d$, $a_k:=\sum_{l\ne k} Q_{k,l}=1$, define the transition matrix
		\begin{align*}
		P_{k,l}:=\left\{\begin{aligned}
		&Q_{k,l}~~~~~~~~~~l\ne k,\\
		&0~~~~~~~~~~~~~~l=k.
		\end{aligned}\right.
		\end{align*}
Therefore, the expressions of the Lipschitz constants $\|f\|_{\operatorname{Lip},\operatorname{g}}$ and $\|f\|_{\operatorname{Lip},\operatorname{H}}$ reduce to
		\begin{align*}
\|f\|_{\operatorname{Lip,g}}= \sup_{c_{k,l}\ne 0}   \|[|k\rangle\langle l|,f ]\|_2 \left(\frac{1}{d}\sum_{k'\ne l'}P_{k',l'}\right)^{1/2}=\sup_{P_{k,l}\ne 0}|\varphi(k)-\varphi(l)|.
		\end{align*}
This is exactly the Lipschitz constant
\begin{align*}
\|\varphi\|_{\operatorname{lip},\operatorname{d_g}}:=\sup_{k\ne l}\frac{|\varphi(k)-\varphi(l)|}{\operatorname{d_g}(k,l)},
\end{align*}
associated to the graph distance 
\begin{align*}
 \operatorname{d}_{\operatorname{g}}:=\left\{ 
\begin{aligned}
&1~~~~~~~~~~ c_{k,l}\ne 0\\
&0~~~~~~~~~~c_{k,l}=0.
\end{aligned}\right.
\end{align*}
Similarly, the Lipschitz constant $\|f\|_{\operatorname{Lip,H}}$ reduces to the Lipschitz constant
\begin{align*}
\|\varphi\|_{\operatorname{lip},\operatorname{d_H}}:=\sup_{k\ne l}\frac{|\varphi(k)-\varphi(l)|}{\operatorname{d_H}(k,l)},
\end{align*}
associated to the Hamming distance 
\begin{align*}
\operatorname{d}_{\operatorname{H}}:=\left\{ 
\begin{aligned}
&1~~~~~~~~~~ {k\ne l},\\
&0~~~~~~~~~~k=l.
\end{aligned}\right.
\end{align*}
Hence, \Cref{proplip} extends Proposition 2.12 of \cite{[EM12]}. Note that the bound obtained in \Cref{W1W2} is weaker than its classical counterpart $\sqrt{2}~W_1(p,q)\le W_2(p,q)$, for $p,q$ positive probability vectors. The reason behind this weaker bound comes from the very last line \reff{lost} of the proof of Lemma 6.
\section{Quantum functional- and transportation cost inequalities}\label{qfunc}
In this section, we consider a primitive quantum Markov semigroup $(\Lambda_t)_{t\ge 0}$ on $\cB(\cH)$ with invariant state $\sigma\in\cD_+(\cH)$, whose generator is of the form given in \reff{LLDBC}. Such a semigroup is said to satisfy the following:
\begin{itemize}
	\item[1.] A modified logarithmic Sobolev (or $1$-log-Sobolev) inequality with constant $\alpha_1>0$ if for all $f\in \cP(\cH)$,
	\begin{align}
		\tag{MLSI($\alpha_1$)}
		\alpha_1 \operatorname{Ent}_{1,\sigma}(f)\le \mathcal{E}_{1}(f,f),\label{ls1}
		\end{align}
		or equivalently if for all $\rho\in\cD_+(\cH)$,
		\begin{align}\label{ls11}
		2	\alpha_1 D(\rho\|\sigma)\le  \operatorname{EP}_\sigma(\rho)=\operatorname{I}_\sigma(\rho).
			\end{align}
				\item[2.] A transportation cost inequality of order $1$ with constant $c_1>0$ if for all $\rho\in \cD_+(\cH)$
				\begin{align}
					\tag{TC$_1$($c_1$)}\label{t1}
					W_{1,\LL}(\rho,\sigma)\le \sqrt{2c_1D(\rho\|\sigma)}.
				\end{align}
		\item[3.] A transportation cost inequality of order $2$ with constant $c_2>0$ if for all $\rho\in \cD_+(\cH)$
		\begin{align}
			\tag{TC$_2$($c_2$)}\label{t2}
			W_{2,\LL}(\rho,\sigma)\le \sqrt{2c_2D(\rho\|\sigma)}.
			\end{align}
			\item[4.] A Poincar\'{e} inequality with constant $\lambda>0$, with respect to $\langle .,.\rangle_{\varphi,\sigma}$, where $\varphi:(0,\infty)\to (0,\infty)$,  if for all $f\in\cB(\cH)$ with $\tr(\sigma f)=0$,
			\begin{align}\label{p}
				\tag{PI($\lambda$)}
			\lambda	\|f\|_{\varphi,\sigma}^2\le \mathcal{E}_{\varphi,2}(f,f).
			\end{align}
\end{itemize}
Classically, it is proved that (\ref{ls1})$\Rightarrow$(\ref{t2})$\Rightarrow$(\ref{p}), for $\alpha_1\le 1/{c_2}\le\lambda$, as well as (\ref{t2})$\Rightarrow$(\ref{t1}) for $c_2\ge c_1$. Moreover, (\ref{ls1}) and (\ref{t2}) imply Gaussian concentration, whereas (\ref{p}) only implies exponential concentration (see \cite{[V08],[RS14],[EM12]} and references therein). Here we state and prove the non-commutative analogues of these results. Note that the implication \reff{ls1}$\Rightarrow$(\ref{p}) for $\alpha_1=\lambda$ was proved in \cite{[KT13]} in the quantum framework for primitive semigroups which are self-adjoint with respect to $\langle .,.\rangle_{1/2,\sigma}$. Finally, the implication \reff{ls1}$ \Rightarrow$\reff{t2} was already proved in \cite{[CM14]} is the particular case of the Fermionic Fokker-Planck equation.
\begin{remark}
		Even though we defined several quantum Lipschitz constants above, among which $\|f\|_{\operatorname{Lip},2}$ does not depend on the particular choice of representation $(\{c_j\},\{\tilde{L}_j\},\{\omega_j\})$ in \Cref{LLDBC}, we only work with $\|f\|_{\operatorname{Lip}}$, unless otherwise stated. This is because, by \Cref{proplip}, the transportation cost inequality \reff{t1} for the associated Wasserstein distance $W_{1,\LL}$ implies any transportation cost inequality with respect to the other Wasserstein distances $W_{1,\LL,*}$.
		\end{remark}
\begin{remark}\label{onpoincare}
	If we assume $\sigma$ non-degenerate and $\LL$ to be self-adjoint with respect to $\langle.,. \rangle_{1,\sigma}$, we know from \Cref{selfadjointness} that it is also self-adjoint with respect to $\langle.,.\rangle_{\varphi,\sigma}$ for any $\varphi:(0,\infty)\to (0,\infty)$. Therefore $\LL$ admits a spectral decomposition: 
	\begin{align*}
		\LL=\sum_{\lambda\in \spec(\LL)} \lambda P_\lambda(\LL),
		\end{align*}
		 However, since for each $t\ge 0$, $\Lambda_t:=\e^{t\LL}$, one can similarly write
		\begin{align*}
			\Lambda_t=\sum_{\lambda\in \spec(\LL)} \e^{t\lambda} P_\lambda(\LL).
			\end{align*}
		The eigenvalues of $\LL$ are non-positive, and the highest one $\lambda_0=0$ is associated to the eigenspace $\CC \mathbb{I}$. Therefore, the inequality (\ref{p}) holds for any $\lambda<\lambda_1$ and any $\varphi:(0,\infty)\to(0,\infty)$, where the so-called \textit{spectral gap} $\lambda_1$ is nothing but the absolute value of the second (negative) highest eigenvalue of $\LL$.
\end{remark}
\noindent
The following result, is a direct consequence of \Cref{W1W2}. 
\begin{theorem}\label{t2t1}		
	If $(\Lambda_t)_{t\ge0}$ satisfies the $2$-transportation cost inequality (\ref{t2}), then it satisfies the $1$-transportation cost inequality \reff{t1} with $c_1=d ~c_2$.
\end{theorem}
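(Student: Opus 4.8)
The plan is to derive the implication directly from the dimensional comparison between the two quantum Wasserstein distances established earlier. The key observation is that \Cref{W1W2} already gives the pointwise inequality $W_{1,\LL}(\rho,\sigma)\le \sqrt{d}\,W_{2,\LL}(\rho,\sigma)$ for every pair of full-rank states, so the whole argument is essentially a one-line chain of inequalities once the transportation cost hypothesis is plugged in.

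Concretely, I would proceed as follows. Fix an arbitrary $\rho\in\cD_+(\cH)$. By \Cref{W1W2}, $W_{1,\LL}(\rho,\sigma)\le \sqrt{d}\,W_{2,\LL}(\rho,\sigma)$. Since $(\Lambda_t)_{t\ge0}$ satisfies \reff{t2}, we have $W_{2,\LL}(\rho,\sigma)\le \sqrt{2c_2 D(\rho\|\sigma)}$. Combining these two gives
\begin{align*}
W_{1,\LL}(\rho,\sigma)\le \sqrt{d}\,\sqrt{2c_2 D(\rho\|\sigma)}=\sqrt{2(d\,c_2)D(\rho\|\sigma)},
\end{align*}
which is precisely \reff{t1} with $c_1=d\,c_2$. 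Since $\rho$ was arbitrary, the semigroup satisfies (TC$_1$($d\,c_2$)).

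There is genuinely no obstacle here: the content has already been extracted into \Cref{W1W2}, and the present theorem is simply its combination with the definition of \reff{t2}. If anything, the only point worth a remark is that the constant $c_1=d\,c_2$ carries the dimensional factor $d$, which is weaker than the classical bound (as already noted after \Cref{W1W2}); one could mention that this loss is inherited from the estimate \reff{lost} in the proof of \Cref{W1W2} and is not an artifact of the present argument. I would keep the proof to two or three lines, essentially displaying the chain of inequalities above and invoking \Cref{W1W2} and the hypothesis \reff{t2}.
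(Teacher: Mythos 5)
Your proposal is correct and is exactly the paper's argument: the paper states \Cref{t2t1} as a direct consequence of \Cref{W1W2}, and your two-line chain $W_{1,\LL}(\rho,\sigma)\le\sqrt{d}\,W_{2,\LL}(\rho,\sigma)\le\sqrt{2(d\,c_2)D(\rho\|\sigma)}$ is precisely that deduction.
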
	
\noindent
The following result, namely that the modified log-Sobolev inequality \reff{ls1} implies the transportation cost inequality (\ref{t2}) was first proven in the classical, continuous case by Otto and Villani in \cite{[OV00]} (see also \cite{[BGL01],[G09]} for alternative proofs, Theorem 7.5 of \cite{[EM12]} for the classical, discrete case, and \cite{[CM14]} for the case of the fermionic Fokker-Planck semigroup).
\begin{theorem}\label{logsobtalagrand}
If $(\Lambda_t)_{t\ge0}$ satisfies \reff{ls1}, then \reff{t2} holds with $c_2=\alpha_1^{-1}$.
\end{theorem}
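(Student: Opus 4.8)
The plan is to mimic the Otto--Villani argument in its gradient-flow formulation, which is the cleanest route in the present Riemannian setting. The key objects are already in place: the semigroup evolution $\rho_t := \Lambda_{*t}(\rho)$ is the gradient flow of the functional $D(\cdot\|\sigma)$ with respect to the metric $g_{\LL}$ (equation \reff{gflow}), it converges to $\sigma$ as $t\to\infty$ (since the semigroup is primitive), and the quantum de Bruijn identity \reff{debruijn} identifies $-\tfrac{d}{dt}D(\rho_t\|\sigma)$ with the quantum Fisher information $\operatorname{I}_\sigma(\rho_t)=\operatorname{EP}_\sigma(\rho_t)$. I would first record the differential inequality that MLSI gives along the flow: by \reff{ls11}, $-\tfrac{d}{dt}D(\rho_t\|\sigma) = \operatorname{I}_\sigma(\rho_t) \ge 2\alpha_1 D(\rho_t\|\sigma)$, so $D(\rho_t\|\sigma) \le e^{-2\alpha_1 t}D(\rho\|\sigma)$ by Grönwall.

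The heart of the argument is to bound the Wasserstein length of the flow path $t\mapsto\rho_t$ joining $\rho$ (at $t=0$) to $\sigma$ (at $t=\infty$, after a time reparametrization of $[0,\infty)$ to $[0,1]$). By \Cref{characwass}, $W_{2,\LL}(\rho,\sigma) \le \int_0^\infty \|\dot\rho_t\|_{g_{\LL,\rho_t}}\,dt$. Now the crucial identity is that along the gradient flow, the speed equals the square root of the Fisher information: combining the gradient-flow relation \reff{gradflow}, namely $\tfrac{d}{dt}D(\rho_t\|\sigma) = -\|\LL_*(\rho_t)\|_{g_{\LL,\rho_t}}^2 = -\|\dot\rho_t\|_{g_{\LL,\rho_t}}^2$, with the de Bruijn identity $\tfrac{d}{dt}D(\rho_t\|\sigma) = -\operatorname{I}_\sigma(\rho_t)$, we get $\|\dot\rho_t\|_{g_{\LL,\rho_t}} = \sqrt{\operatorname{I}_\sigma(\rho_t)}$. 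Hence
\begin{align*}
W_{2,\LL}(\rho,\sigma) \;\le\; \int_0^\infty \sqrt{\operatorname{I}_\sigma(\rho_t)}\,dt \;=\; \int_0^\infty \sqrt{-\tfrac{d}{dt}D(\rho_t\|\sigma)}\;dt.
\end{align*}
Writing $\phi(t):=D(\rho_t\|\sigma)$, this is $\int_0^\infty \sqrt{-\phi'(t)}\,dt$, and using $-\phi'(t)\le -\tfrac{1}{2\alpha_1}\cdot\tfrac{(\phi'(t))^2}{ -\phi'(t)}$... more directly: by MLSI, $-\phi'(t)=\operatorname{I}_\sigma(\rho_t)\ge 2\alpha_1\phi(t)$, so $\sqrt{-\phi'(t)} = \tfrac{-\phi'(t)}{\sqrt{-\phi'(t)}} \le \tfrac{-\phi'(t)}{\sqrt{2\alpha_1\phi(t)}}$. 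Integrating,
\begin{align*}
W_{2,\LL}(\rho,\sigma) \;\le\; \frac{1}{\sqrt{2\alpha_1}}\int_0^\infty \frac{-\phi'(t)}{\sqrt{\phi(t)}}\,dt \;=\; \frac{1}{\sqrt{2\alpha_1}}\Big[2\sqrt{\phi(0)}-2\sqrt{\phi(\infty)}\Big] \;=\; \sqrt{\frac{2\,\phi(0)}{\alpha_1}} \;=\; \sqrt{2\alpha_1^{-1}D(\rho\|\sigma)},
\end{align*}
which is exactly \reff{t2} with $c_2=\alpha_1^{-1}$.

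The main obstacle I anticipate is not the formal computation above but the analytic justification of the two steps that feed it: first, that the flow path actually attains (or approximates, via truncation at large finite time and a separate estimate showing $W_{2,\LL}(\rho_T,\sigma)\to 0$) the infimum defining $W_{2,\LL}$, i.e.\ that $t\mapsto\rho_t$ is an admissible curve with the claimed metric speed — this requires knowing $\rho_t\in\cD_+(\cH)$ for all $t$ (true here since $\sigma$ is full-rank and the dimension is finite) and that $\dot\rho_t=\LL_*(\rho_t)$ lies in the tangent space in the sense of the continuity equation \reff{continuity2}, which is guaranteed by the gradient-flow characterization already established in the excerpt; and second, the integrability near $t=\infty$, which is handled by the exponential decay $\phi(t)\le e^{-2\alpha_1 t}\phi(0)$ forcing $\int^\infty \phi(t)^{-1/2}(-\phi'(t))\,dt$ to converge. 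A clean way to sidestep the $t=\infty$ endpoint entirely is to integrate over $[0,T]$, obtain $W_{2,\LL}(\rho,\rho_T)\le \sqrt{2\alpha_1^{-1}}(\sqrt{\phi(0)}-\sqrt{\phi(T)})$, use the triangle inequality $W_{2,\LL}(\rho,\sigma)\le W_{2,\LL}(\rho,\rho_T)+W_{2,\LL}(\rho_T,\sigma)$, and let $T\to\infty$ using that $W_{2,\LL}(\rho_T,\sigma)\to0$ (again from $\rho_T\to\sigma$ and continuity/finite-dimensionality of the metric $g_\LL$ on the compact set of states bounded below).
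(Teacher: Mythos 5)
Your proposal is correct and follows essentially the same route as the paper's proof: the Otto--Villani gradient-flow argument, resting on the same three ingredients (the length characterization of $W_{2,\LL}$ from \Cref{characwass}, the identity $\|\dot\rho_t\|_{g_{\LL,\rho_t}}^2=\operatorname{I}_\sigma(\rho_t)$ from \reff{gradflow} and de Bruijn, and the pointwise MLSI bound $\sqrt{\operatorname{I}_\sigma(\rho_t)}\le \operatorname{I}_\sigma(\rho_t)/\sqrt{2\alpha_1 D(\rho_t\|\sigma)}$). The only cosmetic difference is that the paper packages your integral estimate as the monotonicity of $F(t):=W_{2,\LL}(\rho_t,\rho)+\sqrt{2D(\rho_t\|\sigma)/\alpha_1}$, whose limits at $t=0$ and $t\to\infty$ give the claim; your truncation-plus-triangle-inequality handling of the $t=\infty$ endpoint is a sound way to justify the same passage to the limit.
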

\begin{proof} Here we adapt the proof of \cite{[EM12]} to the case of full-rank quantum states. We first state and prove the following lemma:
	\begin{lemma}\label{lemmauseful}
		Let $\rho,\tau\in\cD_+(\cH)$. Then for all $t>0$, $\rho_t\equiv \Lambda_{*t}(\rho)$ satisfies
		\begin{align*}
			\frac{d}{dt}W_{2,\LL}(\rho_t,\tau)\le \sqrt{\operatorname{I}_\sigma(\rho_t)},
		\end{align*} 
	where $\operatorname{I}_\sigma$ is the Fisher information defined in \Cref{Fisher}.
	\end{lemma}
	\begin{proof}
		We proceed here similarly to the proof of Proposition 7.1 of \cite{[EM12]}: Firstly, by the triangle inequality,
		\begin{align}\label{eq8}
			\frac{d}{dt}W_{2,\LL}(\rho_t,\tau)=\lim_{s\to0}\frac{1}{s}(W_{2,\LL}(\rho_{t+s},\tau)-W_{2,\LL}(\rho_t,\tau))\le \lim_{s\to0}\frac{1}{s}W_{2,\LL}(\rho_t,\rho_{t+s}),
		\end{align}
Now, by \Cref{characwass},
			\begin{align*}
				W_{2,\LL}(\rho_t,\rho_{t+s})=\inf_{\gamma(s)}\left\{\int_0^1 \|\dot{\gamma}(u)\|_{g_{\LL,\gamma(u)}}du:~\gamma(0)=\rho_t,~\gamma(1)=\rho_{t+s}\right\},
			\end{align*}
			 This implies by a change of variable $v=t+us$ that for any smooth curve $\gamma$ such that $\gamma(t)=\rho_t$ and $\gamma(t+s)=\rho_{t+s}$,
			\begin{align}\label{eq29}
				W_{2.\LL}(\rho_t,\rho_{t+s})\le \int_{t}^{t+s}\|\dot{\gamma}(v)\|_{g_{\LL,\gamma(v)}}dv.
			\end{align}
			 Moreover, from \Cref{gradflow}:
			 \begin{align}\label{chain}
			 	\|\dot{\rho}_t\|^2_{g_{\LL,\rho_t}}&=-\frac{d}{dt}D(\rho_t\|\sigma)=\operatorname{I}_\sigma(\rho_t),
			 \end{align}
		 		where the second identity holds by de Bruijn's identity (\ref{debruijn}). Hence, choosing $\gamma(v)=\rho_v$, we bound the right hand side of \reff{eq8} as follows:
		\begin{align*}
			\frac{d}{dt}W_{2,\LL}(\rho_t,\tau)\le \lim_{s\to 0} \frac{1}{s}\int_{t}^{t+s} \sqrt{\operatorname{I}_\sigma(\rho_v)}dv=\sqrt{\operatorname{I}_\sigma(\rho_t)},
		\end{align*}
		where the last equality holds since $t\to \sqrt{\operatorname{I}_\sigma(\rho_t)}$ is continuous.
		\qed
	\end{proof}
\bigskip
\noindent
	We now proceed with the proof of \Cref{logsobtalagrand}: Fix $\rho\in\cD_+(\cH)$, and set $\rho_t=\Lambda_{*t}(\rho)$. First note that as $t\to\infty$, 
	\begin{align}
		D(\rho_t\|\sigma)\to0 \text{ and }W_{2,\LL}(\rho,\rho_t)\to W_{2,\LL}(\rho,\sigma) \label{conv}
	\end{align}
	Define now the function 
	\begin{align*}
		F(t):=W_{2,\LL}(\rho_t,\rho)+\sqrt{\frac{2}{\alpha_1}D(\rho_t\|\sigma)}.
	\end{align*}
	Obviously $F(0)=\sqrt{2D(\rho\|\sigma)/\alpha_1}$, and by \reff{conv}, $F(t)\to W_{2,\LL}(\sigma,\rho)$ as $t\to\infty$. Hence it is sufficient to prove that $F$ is non-increasing. In order to do so, we only need to show that its derivative is non-positive. If $\rho_t\ne \sigma$, we know from \Cref{lemmauseful} that 
	\begin{align*}
		\frac{d}{dt}F(t)\le \sqrt{\operatorname{I}_\sigma(\rho_t)}+\sqrt{\frac{2}{\alpha_1}}\frac{\frac{d}{dt}D(\rho_t\|\sigma)}{2\sqrt{D(\rho_t\|\sigma)}}=\sqrt{\operatorname{I}_\sigma(\rho_t)}-\frac{\operatorname{I}_\sigma(\rho_t)}{\sqrt{2\alpha_1 D(\rho_t\|\sigma)}}\le 0.
	\end{align*}
	where we used \Cref{ls1} in the last inequality. If $\rho_t=\sigma$, then the relation also holds true, since this implies that $\rho_r=\sigma$ for all $r\ge t$.
	\qed
\end{proof}	
\noindent
\begin{remark}
	In the last version of \cite{[CM16]}, the authors independently added a slightly different proof of \Cref{logsobtalagrand}.
\end{remark}	
\noindent
We show that a refinement of \Cref{W1W2} as well as \Cref{logsobtalagrand} can be used to provide a new proof of the quantum Pinsker inequality.
\begin{theorem}[Quantum Pinsker's inequality] \label{pinsker}
	For any $\rho,\sigma\in\cD_+(\cH)$,
	\begin{align*}
	\|\rho-\sigma\|_1\le \sqrt{2D(\rho\|\sigma)}.
	\end{align*}
	\end{theorem}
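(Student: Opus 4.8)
The plan is to run the chain of implications of \Cref{fig2} for the \emph{generalized depolarizing semigroup} with invariant state $\sigma$ — where every ingredient is explicit — and then to sharpen the resulting constant by revisiting the proof of \Cref{W1W2}. First I would fix $\rho,\sigma\in\cD_+(\cH)$ and take $(\Lambda_t)_{t\ge0}$ generated by $\LL(f)=\tr(\sigma f)\,\mathbb I-f$, equivalently $\LL_*(\rho)=\tr(\rho)\,\sigma-\rho$, so that $\Lambda_{*t}(\rho)=\e^{-t}\rho+(1-\e^{-t})\sigma$. One checks that $\LL$ is adjoint-preserving and self-adjoint with respect to $\langle\cdot,\cdot\rangle_{1,\sigma}$, hence the semigroup is primitive with invariant state $\sigma$ and admits the detailed-balance form of \Cref{theo2}, with data that can be written down explicitly in the eigenbasis $\sigma=\sum_i s_i\,|e_i\rangle\langle e_i|$ (off-diagonal generators $\tilde L_{(k,l)}=\sqrt d\,|e_k\rangle\langle e_l|$ with $\e^{-\omega_{(k,l)}}=s_k/s_l$, $c_{(k,l)}=\tfrac{1}{2d}\sqrt{s_ks_l}$, plus diagonal generators for the relaxation of populations). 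In particular its $s$-Dirichlet forms reduce to $\mathcal{E}_{s,2}(f,f)=\tr(\sigma^s f\,\sigma^{1-s}f)-(\tr\sigma f)^2\le\operatorname{Var}_\sigma(f)$ for all $s\in[0,1]$, with equality at $s=0,1$.

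Next I would record that this semigroup satisfies \eqref{ls11}: by joint convexity of the relative entropy,
\[
D(\Lambda_{*t}(\rho)\|\sigma)=D\!\big(\e^{-t}\rho+(1-\e^{-t})\sigma\,\big\|\,\sigma\big)\le \e^{-t}D(\rho\|\sigma),
\]
so differentiating at $t=0$ and using $\operatorname{I}_\sigma(\rho)=-\tfrac{d}{dt}\big|_{t=0}D(\Lambda_{*t}(\rho)\|\sigma)$ gives $\operatorname{I}_\sigma(\rho)\ge D(\rho\|\sigma)$, i.e.\ $\alpha_1=\tfrac12$. \Cref{logsobtalagrand} then yields \eqref{t2} with $c_2=\alpha_1^{-1}=2$, so $W_{2,\LL}(\rho,\sigma)\le 2\sqrt{D(\rho\|\sigma)}$.

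It then remains to relate $\|\rho-\sigma\|_1$ to $W_{2,\LL}(\rho,\sigma)$ sharply. Running the proof of \Cref{W1W2} along a $W_{2,\LL}$-geodesic $(\gamma(s))_{s\in[0,1]}$ joining $\rho$ and $\sigma$, for any self-adjoint $f$ with $\|f\|_\infty\le1$ one gets
\[
|\tr f(\rho-\sigma)|\le\Big(\int_0^1\sum_j c_j\,\langle\partial_j f,[\gamma(s)]_{\omega_j}\partial_j f\rangle_{_{HS}}\,ds\Big)^{1/2}W_{2,\LL}(\rho,\sigma),
\]
and rather than relaxing to $\|\partial_j f\|_\infty$ as in line \reff{lost} — which is exactly what introduces the lossy $\sqrt d$ — I would keep the sharp estimate \reff{eq10} and evaluate the quadratic form $\sum_j c_j\langle\partial_j f,[\gamma(s)]_{\omega_j}\partial_j f\rangle_{_{HS}}$ with the explicit depolarizing data of the first step, which collapses it to an expression involving only $\tr(\gamma(s)f^2)$, $\tr(\sigma f^2)$, $\tr(\gamma(s)f)$ and $\tr(\sigma f)$. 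Bounding this using $\|f\|_\infty\le1$ — and, at the extremal $f=\mathrm{sign}(\rho-\sigma)$, using the finer structure $f^2\le\mathbb I$ together with $\mathcal{E}_{s,2}(f,f)\le\operatorname{Var}_\sigma(f)$ — and taking the supremum over such $f$, I expect to obtain $\|\rho-\sigma\|_1\le\tfrac{1}{\sqrt2}\,W_{2,\LL}(\rho,\sigma)$; combined with the previous step this gives $\|\rho-\sigma\|_1\le\sqrt{2D(\rho\|\sigma)}$.

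The hard part will be precisely this last step: one must estimate the quadratic form along the geodesic accurately enough — exploiting the depolarizing structure and the finer information about the extremal $f$ rather than the crude operator-norm bound — to land on the \emph{optimal} constant $\sqrt2$ and not a larger numerical one, which in turn requires control of (the energy integral along) the $W_{2,\LL}$-geodesics of the depolarizing semigroup. An alternative that sidesteps the geodesics is to fold the argument into \Cref{logsobtalagrand}: show, via the refined bound for $\tfrac{d}{dt}\|\rho-\Lambda_{*t}(\rho)\|_1$, that $t\mapsto\|\rho-\Lambda_{*t}(\rho)\|_1+\sqrt{2D(\Lambda_{*t}(\rho)\|\sigma)}$ is non-increasing, and then compare its values at $t=0$ and $t\to\infty$.
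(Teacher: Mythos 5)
Your overall strategy --- run MLSI $\Rightarrow$ TC$_2$ for a depolarizing-type semigroup and then compare $\|\rho-\sigma\|_1$ with $W_{2,\LL}$ by revisiting the proof of \Cref{W1W2} without the lossy operator-norm step \reff{lost} --- is indeed the paper's strategy. But your choice of semigroup breaks the constants, and the step you defer as ``the hard part'' is precisely where the argument fails to close. With the generalized depolarizing semigroup $\LL_\sigma$ your joint-convexity argument gives $\alpha_1=\tfrac12$ (this part is correct, and is essentially optimal: the exact constant $\alpha_1(\sigma)$ of \reff{alphadepo} tends to $\tfrac12$ as $\sigma_{\min}\to0$, so no better MLSI constant is available in general). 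Hence $c_2=2$ and you need $\|\rho-\sigma\|_1\le\tfrac{1}{\sqrt2}\,W_{2,\LL_\sigma}(\rho,\sigma)$. The paper's own careful evaluation of the quadratic form for this exact semigroup (\Cref{propdepol}, inequality \reff{eq30}) only yields $\|\rho-\sigma\|_1\le\sqrt2\,W_{2,\LL_\sigma}(\rho,\sigma)$ --- a factor of $2$ short of what you need --- and moreover only for $f$ whose classical Hamming--Lipschitz constant is controlled. The structural obstruction is that for $\LL_\sigma$ the Lindblad operators $\tilde L_{ij}=\sqrt d\,|e_i\rangle\langle e_j|$ are pinned to the eigenbasis of $\sigma$, while the dual-extremal $f=\mathrm{sign}(\rho-\sigma)$ is generally not diagonal in that basis, so the quadratic form does not collapse to the classical quantities $\tr(\gamma(s)f^2)$, $\tr(\sigma f)$, etc., as you hope. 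Your fallback (monotonicity of $t\mapsto\|\rho-\Lambda_{*t}(\rho)\|_1+\sqrt{2D(\Lambda_{*t}(\rho)\|\sigma)}$) reduces, via $\frac{d}{dt}\|\rho-\rho_t\|_1\le\|\rho_t-\sigma\|_1$ and $\operatorname{I}_\sigma\ge D$, to needing $\|\rho_t-\sigma\|_1\le\sqrt{D(\rho_t\|\sigma)/2}$, which is stronger than Pinsker itself --- i.e.\ it is circular.

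The paper avoids all of this by taking the semigroup with invariant state $\mathbb{I}/d$, i.e.\ $\LL_{\mathbb{I}/d}(f)=\tfrac1d\tr(f)\mathbb{I}-f$. Two things then conspire: (i) $\alpha_1=1$ exactly (Lemma 25 of \cite{[KT13]}), so TC$_2$ holds with $c_2=1$; and (ii) since all $\omega_j=0$, the GKLS representation can be written in \emph{any} orthonormal basis, in particular the eigenbasis of the extremal $f$ with $0\le f\le\mathbb{I}$, whereupon the bracket in \reff{eq20} collapses to $\|\varphi\|_{\operatorname{lip,H}}^2\le1$ and one gets the constant-$1$ bound $\|\rho-\sigma\|_1\le W_{2,\LL}(\rho,\sigma)$. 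The product of these two exact constants is $\sqrt{2D(\rho\|\sigma)}$. This basis-freedom is the crucial trick, and it is exactly what you give up by making $\sigma$ the invariant state. To repair your proposal you would either have to prove the unsupported sharp comparison $\|\rho-\sigma\|_1\le\tfrac{1}{\sqrt2}W_{2,\LL_\sigma}(\rho,\sigma)$ (for which there is no evidence here, and which would require genuine control of the $W_{2,\LL_\sigma}$-geodesics), or switch to $\LL_{\mathbb{I}/d}$ as the paper does.
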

\begin{proof}
Let $\LL_{\mathbb{I}/d}$ be the generator of the quantum depolarizing semigroup with unique invariant state $\mathbb{I}/d$:
\begin{align*}
\LL_{\mathbb{I}/d}(f)=\frac{1}{d}\tr(f)\mathbb{I}-f,~~~~~~~~~~ f\in\cB(\cH).
\end{align*}
It is shown in \Cref{depo} of \Cref{qconc} that $\LL_{\mathbb{I}/d}$ can take the following form:
\begin{align*}
\LL_{\mathbb{I}/d}(f)=\frac{1}{2d}\sum_{k, l=1}^d   ~  |k\rangle\langle l|~[~f,~  |l\rangle\langle k|~]   +[~|k\rangle\langle l|,~f~] ~|l\rangle\langle k|,~~~~~~~~~~ f\in\cB(\cH),
\end{align*}
for any orthonormal basis $\{|i\rangle\}$. Recall the proof of \Cref{W1W2} until its last line \reff{lost}, where we showed that for any $\delta>0$, any smooth path $(\gamma(s))_{s\in [0,1]}$ such that $\gamma(0)=\rho$, $\gamma(1)=\sigma$, and $$\left(\int_0^1 \|\dot{\gamma}(s)\|_{g_{\LL,\gamma(s)}}\right)^{1/2}\le W_{2,\LL}(\rho,\sigma)+\delta,$$ and for any self-adjoint operator $f$,
\begin{align}\label{eq20}
|\tr (f(\rho-\sigma))|\le  \left( \int_0^1 \sum_{j\in\mathcal{J}} c_{j} (\tr(\gamma(s)(\partial_jf)^*\partial_jf)+\tr(\gamma(s)\partial_jf(\partial_jf)^*)) ds\right)^{1/2} (W_{2,\LL}(\rho,\sigma)+\delta),
\end{align}
where for the depolarizing semigroup, the index $j\in\cJ$ represents a couple $(k,l)$, so that $\tilde{L}_{kl}=\sqrt{d}~|k\rangle\langle l|$, $c_{kl}=\frac{1}{2d^2}$, for any given orthonormal basis $\{|k\rangle\}_{k=1}^d$. One can verify that, in this case, choosing the basis $\{|k\rangle\}_{k=1}^d$ to be the one diagonalizing the operator $f:=\sum_{k=1}^d \varphi(k)|k\rangle\langle k|$, the term in brackets on the right hand side of \reff{eq20} reduces to $\|\varphi\|_{\operatorname{lip,H}}^2$, so that, letting $\delta$ tend to $0$,
\begin{align*}
|\tr(f(\rho-\sigma))|\le \|\varphi\|_{\operatorname{lip,H}} ~W_{2,\LL}(\rho,\sigma).
\end{align*}
Assuming, moreover, that $0\le f\le \mathbb{I}$, this implies that, for any $k\ne l$, $|\varphi(k)-\varphi(l)|\le 1$, and thus $\|\varphi\|_{\operatorname{lip,H}}\le 1$. By duality,
\begin{align*}
\|\rho-\sigma\|_1\equiv \sup_{0\le f\le \mathbb{I}}|\tr f(\rho-\sigma)|\le \sup_{f=\sum_{j}\varphi(j)|j\rangle\langle j|:~\|\varphi\|_{\operatorname{lip,H}}\le 1} |\tr f(\rho-\sigma)|\le W_{2,\LL}(\rho,\sigma).
\end{align*}
We conclude using \Cref{logsobtalagrand} as well as the well-known fact that in the case of the depolarizing semigroup, $\alpha_1=1$ (see e.g. Lemma 25 of \cite{[KT13]}).
\qed
\end{proof}
\noindent
In \cite{[OV00]} it was also proved that, in the classical, continuous case, the $2$-transportation cost inequality implies the Poincar\'{e} inequality. In the discrete setting, this was proved in Proposition 7.6 of \cite{[EM12]}. \Cref{talagrandpoincare} below extends these results to the quantum regime.
\begin{theorem}\label{talagrandpoincare}
If $(\Lambda_t)_{t\ge0}$ satisfies \reff{t2}, then \reff{p} holds with respect to $\langle .,. \rangle_{1/2,\sigma}$, with $\lambda=(c_2~\kappa_\LL)^{-1}$, where $\kappa_\LL=\sup_{j\in\mathcal{J}}\|[\sigma]_{\omega_j}\circ [\sigma]_{-\omega_j}^{-1}\|_{\infty\to \infty}$.
\end{theorem}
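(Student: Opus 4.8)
The plan is to linearise the transportation‑cost inequality \reff{t2} at the invariant state $\sigma$ and to extract a Poincar\'e inequality from the resulting comparison of quadratic forms, in the spirit of the classical argument of Otto--Villani \cite{[OV00]} (cf.\ Proposition~7.6 of \cite{[EM12]}). The argument I have in mind actually yields the spectral‑gap bound $\lambda_1(-\LL)\ge 1/c_2$; since by \Cref{onpoincare} the inequality \reff{p} with respect to $\langle\cdot,\cdot\rangle_{1/2,\sigma}$ holds for every $\lambda$ not exceeding $\lambda_1(-\LL)$, and $\kappa_\LL\ge 1$, this proves the theorem.

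Fix a traceless $h=h^*\in\cB_{sa}(\cH)$ and set $\rho_\eps:=\sigma+\eps h$, a full‑rank state for $|\eps|$ small; feeding it into \reff{t2} gives $W_{2,\LL}(\rho_\eps,\sigma)^2\le 2c_2\,D(\rho_\eps\|\sigma)$, and I would Taylor‑expand both sides to second order in $\eps$. As $D(\cdot\|\sigma)$ is minimised at $\sigma$ with vanishing first variation and Hessian there equal to the Kubo--Mori metric $h\mapsto\int_0^\infty(\sigma+t)^{-1}h(\sigma+t)^{-1}\,\d t=[\sigma]_0^{-1}(h)$ (the identity being \Cref{lemma} at $\omega=0$), the right‑hand side is $\eps^2 c_2\langle h,[\sigma]_0^{-1}(h)\rangle_{_{HS}}+O(\eps^3)$. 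As $W_{2,\LL}$ is, by \Cref{characwass}, the Riemannian distance of the smooth metric $g_\LL$ on $\cD_+(\cH)$, the left‑hand side is $\eps^2\|h\|^2_{g_{\LL,\sigma}}+O(\eps^3)$. Dividing by $\eps^2$ and letting $\eps\to0$ gives
$$\|h\|^2_{g_{\LL,\sigma}}\ \le\ c_2\,\langle h,[\sigma]_0^{-1}(h)\rangle_{_{HS}}\qquad\text{for all }h=h^*,\ \tr h=0.$$

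I then specialise: let $f=f^*$ be an eigenvector of $\LL$, $\LL f=-\mu f$, $\mu>0$ (so $\tr(\sigma f)=\langle\mathbb I,f\rangle_{1,\sigma}=0$, because $\mathbb I$ spans $\ker\LL$ and $\LL$ is $\langle\cdot,\cdot\rangle_{1,\sigma}$‑self‑adjoint), and apply the displayed inequality with $h:=-\operatorname{div}([\sigma]_{\vec\omega}\nabla f)$. One checks (using $\{\tilde L_j\}=\{\tilde L_j^*\}$ and $[\sigma]_\omega(A)=\int_0^1 e^{\omega(1/2-s)}\sigma^s A\sigma^{1-s}\,\d s$ from \Cref{lemma}) that $h$ is traceless and self‑adjoint. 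Linearising at $\sigma$ the gradient‑flow identity $\LL_*\rho=\operatorname{div}([\rho]_{\vec\omega}\nabla(\log\rho-\log\sigma))$ — which is what it means for $\rho_t$ to be the $g_\LL$‑gradient flow of $D(\cdot\|\sigma)$ (cf.\ the excerpt, together with the non‑commutative integration by parts used in the proof of \Cref{W1W2}) — yields $\LL_*k=\operatorname{div}([\sigma]_{\vec\omega}\nabla([\sigma]_0^{-1}k))$ for all $k$. Taking $k=[\sigma]_0 f$ and using $\LL_*=R_\sigma\LL R_\sigma^{-1}$ ($\langle\cdot,\cdot\rangle_{1,\sigma}$‑self‑adjointness) together with $[\LL,\Delta_\sigma]=0$ (\Cref{selfadjointness}) gives $\LL_*[\sigma]_0=[\sigma]_0\LL$, hence $h=-[\sigma]_0\LL f=\mu[\sigma]_0 f$. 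A short computation with the integration‑by‑parts identity now gives $\|h\|^2_{g_{\LL,\sigma}}=\langle f,h\rangle_{_{HS}}=\mu\langle f,[\sigma]_0 f\rangle_{_{HS}}$ and $\langle h,[\sigma]_0^{-1}h\rangle_{_{HS}}=\mu^2\langle f,[\sigma]_0 f\rangle_{_{HS}}$ ($[\sigma]_0$ being $\langle\cdot,\cdot\rangle_{_{HS}}$‑self‑adjoint), so the displayed inequality becomes $\mu\le c_2\mu^2$, i.e.\ $\mu\ge 1/c_2$. This holds for every non‑zero eigenvalue of $-\LL$, so $\lambda_1(-\LL)\ge 1/c_2$.

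The hard part is making the linearisation rigorous: above all the $O(\eps^3)$ expansion of $W_{2,\LL}(\sigma+\eps h,\sigma)^2$, which rests on the smoothness of the Riemannian structure $(\cD_+(\cH),g_\LL)$ (Carlen--Maas) and on every traceless self‑adjoint $h$ arising as $-\operatorname{div}([\sigma]_{\vec\omega}\nabla U)$, together with the operator identities of the previous paragraph. I would expect the stated constant $\kappa_\LL=\sup_j\|[\sigma]_{\omega_j}\circ[\sigma]_{-\omega_j}^{-1}\|_{\infty\to\infty}$ to appear if, instead of passing through the spectrum of $\LL$, one runs the linearisation directly against $\langle\cdot,\cdot\rangle_{1/2,\sigma}$ and the Dirichlet form $\mathcal E_{1/2,2}$ — applying the displayed inequality to $h=\Gamma_\sigma(f)$, bounding its right side by $c_2\|f\|^2_{1/2,\sigma}$ via $[\sigma]_0\ge\Gamma_\sigma$ (i.e.\ $\sqrt t\le(t-1)/\log t$ in the functional calculus of $\Delta_\sigma$, via \Cref{fomega}) and its left side below by the Legendre‑dual form of $\|\cdot\|_{g_{\LL,\sigma}}$ tested against $tf$, and then comparing the ``Wasserstein Dirichlet form'' $\sum_j c_j\langle\partial_j f,[\sigma]_{\omega_j}\partial_j f\rangle_{_{HS}}$ with $\mathcal E_{1/2,2}(f,f)=\sum_j c_j\langle\partial_j f,\Gamma_\sigma\partial_j f\rangle_{_{HS}}$ through operator means on the mutually commuting operators $[\sigma]_{\pm\omega_j},\Gamma_\sigma$; this last comparison is the delicate non‑commutative point, and it is trivial ($\kappa_\LL=1$) exactly when all $\omega_j=0$.
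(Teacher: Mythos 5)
Your route is genuinely different from the paper's and, modulo one analytic step that you yourself flag, it is sound --- indeed it would prove the sharper bound $\lambda_1(-\LL)\ge 1/c_2$, from which the stated result follows because $\kappa_\LL\ge 1$ (by \Cref{theo2} the $\omega_j$ occur in $\pm$ pairs, so the supremum defining $\kappa_\LL$ contains an operator together with its inverse, and submultiplicativity of $\|\cdot\|_{\infty\to\infty}$ forces the larger of the two norms to be at least $1$). The paper does not linearise \reff{t2} at $\sigma$. Instead it fixes $f$ with $\tr(\sigma f)=0$, sets $\rho^\eps=\Gamma_\sigma(\mathbb I+\eps f)$, and tests a near-optimal $W_{2,\LL}$-path from $\rho^\eps$ to $\sigma$ against the \emph{fixed} observable $\Xi_\sigma(f)=\int_0^\infty\sigma^{1/2}(t+\sigma)^{-1}f(t+\sigma)^{-1}\sigma^{1/2}\,\d t$ via the continuity equation and Cauchy--Schwarz; the commutation identity $\partial_j\Xi_\sigma(f)=[\sigma]_{-\omega_j}^{-1}\circ\Gamma_\sigma\circ\partial_j f$ then converts the resulting quadratic form into $\kappa_\LL\,\mathcal E_{1/2,2}(f,\Xi_\sigma(f))$, and one concludes as in Theorem 16 of \cite{[KT13]}. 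The factor $\kappa_\LL$ is thus an artefact of that method (the non-commutation of $\Gamma_\sigma$ with $[\sigma]_{\pm\omega_j}$), not of the result, so your closing paragraph speculating on how to recover $\kappa_\LL$ is unnecessary: you only need to beat the paper's constant, not reproduce it.

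The one place where your argument requires genuine work is the lower bound $W_{2,\LL}(\sigma+\eps h,\sigma)^2\ge\eps^2\|h\|^2_{g_{\LL,\sigma}}+o(\eps^2)$. (The upper bound is immediate from the straight-line path; your expansion of $D(\cdot\|\sigma)$ with Kubo--Mori Hessian $[\sigma]_0^{-1}$, the identity $h=\mu[\sigma]_0 f$ for an eigenvector $f$, and the integration by parts giving $\|h\|^2_{g_{\LL,\sigma}}=\langle f,h\rangle_{_{HS}}=\mu\langle f,[\sigma]_0f\rangle_{_{HS}}$ all check out, as does the reduction of \reff{p} for $\langle\cdot,\cdot\rangle_{1/2,\sigma}$ to a spectral-gap statement via \Cref{onpoincare}.) This lower bound is precisely what the paper's Cauchy--Schwarz-along-a-near-geodesic argument is engineered to avoid: the paper never bounds $W_{2,\LL}$ from below by the metric, it only bounds a linear functional of the endpoints by the path length. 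To close your gap you must show that a near-minimising path from $\sigma$ to $\sigma+\eps h$ cannot profitably leave a small ball around $\sigma$, and that $\rho\mapsto g_{\LL,\rho}$ is continuous there so that the length functional is $(1-o(1))$-comparable to the flat norm $\|\cdot\|_{g_{\LL,\sigma}}$; both follow from the Carlen--Maas smoothness of $g_\LL$ on the open manifold $\cD_+(\cH)$, but this should be spelled out rather than asserted.
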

\begin{proof}
Let $f\in\cB(\cH)$ such that $\tr(\sigma f)=0$, and for some $\eps$ small enough, define $f^\eps:=\mathbb{I}+\eps f>0$. Then, define the completely positive, trace-preserving map $\Xi_\sigma$ through the following equation: for any $A\in\cB(\cH)$,
\begin{align*}
	\Xi_\sigma (A):=\int_0^\infty \frac{\sigma^{1/2}}{t+\sigma}A\frac{\sigma^{1/2}}{t+\sigma}dt
\end{align*}
In order to get the result we will need the following two technical lemmas:
		\begin{lemma}\label{lemma8} With the notations of \Cref{LLDBC},
			\begin{align}
			&\tilde{L}_j(t+\sigma)^{-1}=\frac{\e^{-\omega_j}\sigma^{-1}}{1+t\e^{-\omega_j}\sigma^{-1}}\tilde{L}_j\label{eq3}\\
			&(t+\sigma)^{-1} \tilde{L}_j=\tilde{L}_j\frac{\sigma^{-1}\e^{\omega_j}}{1+t\sigma^{-1}\e^{\omega_j}}.\label{eq4}
			\end{align}
		\end{lemma}	
		\begin{proof}
			We first prove \Cref{eq3}:
			\begin{align*}
				\tilde{L}_j(t+\sigma)^{-1}=\e^{-\omega_j}\sigma^{-1} \tilde{L}_j(t+\sigma)(t+\sigma)^{-1}-t\e^{-\omega_j}\sigma^{-1}\tilde{L}_j(t+\sigma)^{-1},
				\end{align*}
				where we used that $\tilde{L}_j$ is an eigenvector of $\Delta_\sigma$ with associated eigenvalue $\e^{-\omega_j}$. Therefore, 
				\begin{align*}
					(1+t\e^{-\omega_j}\sigma^{-1})\tilde{L}_j(t+\sigma)^{-1}=
					\e^{-\omega_j}\sigma^{-1}\tilde{L}_j
	\end{align*}
	and the result follows. Similarly for \Cref{eq4},
	\begin{align*}
		(t+\sigma)^{-1}\tilde{L}_j =(t+\sigma)^{-1}(t+\sigma) \tilde{L}_j\sigma^{-1}\e^{\omega_j}-(t+\sigma)^{-1}t\tilde{L}_j\sigma^{-1}\e^{\omega_j}.
		\end{align*}
		The result again follows by rearranging the above terms.
						\qed
						\end{proof}
						\begin{lemma} For $\Gamma_\sigma(f)\equiv \sigma^{1/2}f\sigma^{1/2}$,
							\begin{align}
								\partial_j(\Xi_\sigma(f))=[\sigma]_{-\omega_j}^{-1}\circ \Gamma_\sigma\circ\partial_jf.\label{eq5}
								\end{align}
										\end{lemma}
		\begin{proof}
			Start from the left hand side of \Cref{eq5}. Using that for each $j\in\cJ$,  $\Delta_{\sigma}^{\pm 1/2}(\tilde{L}_j)=\e^{\mp 1/2}\tilde{L}_j$,
			\begin{align}
				\partial_j(\Xi_\sigma(f))=~&\e^{\omega_j/2}\sigma^{1/2} \tilde{L}_j\int_0^\infty (t+\sigma)^{-1}f(t+\sigma)^{-1}\sigma^{1/2}dt\label{eq6}\\
				& -\e^{-\omega_j/2}\sigma^{1/2}\int_0^\infty (t+\sigma)^{-1}f(t+\sigma)^{-1}\tilde{L}_j \sigma^{1/2}dt\nonumber
				 \end{align}
			Let us first consider the first term on the right hand side of \Cref{eq6}. By \Cref{eq3} it is equal to
			\begin{align*}
				\int_0^\infty \frac{\e^{\omega_j/2}}{\e^{\omega_j}\sigma+t}\sigma^{1/2}\tilde{L}_j f\sigma^{1/2}(t+\sigma)^{-1}dt&=\int_0^\infty \frac{1}{\e^{\omega_j}\sigma+\e^{\omega_j/2}u}\Gamma_\sigma(\tilde{L}_jf)\frac{\e^{\omega_j}}{\e^{\omega_j/2}u+\sigma}du\\
				&=\int_0^\infty \frac{\e^{-\omega_j/2}}{\e^{\omega_j/2}\sigma+u}\Gamma_\sigma(\tilde{L}_jf)\frac{\e^{-\omega_j/2}}{u+\e^{-\omega_j/2}\sigma}\e^{\omega_j}du\\
				&=[\sigma]_{-\omega_j}^{-1}\circ\Gamma_\sigma(\tilde{L}_j f),
			\end{align*}	
			where we made the change of variable $\e^{\omega_j/2}u=t$ on the first line, and used \Cref{lemma} in the last line. Similarly, using \reff{eq4}, the second term on the right hand side of \Cref{eq6} is equal to
			\begin{align*}
				\int_0^\infty \frac{\e^{-\omega_j/2}}{t+\sigma}\Gamma_\sigma(f\tilde{L}_j)\frac{1}{\e^{-\omega_j}\sigma+t}dt&=\int_0^\infty \frac{1}{u+\e^{\omega_j/2}\sigma}\Gamma_\sigma(f \tilde{L}_j)\frac{1}{\e^{-\omega_j/2}\sigma+u}du\\
				&=[\sigma]_{-\omega_j}^{-1}(\Gamma_\sigma(f\tilde{L}_j)),
				\end{align*}
where we made the change of variable $t=\e^{-\omega_j/2}u$. Hence \Cref{eq5} follows.
			\qed
			\end{proof}
		\noindent	We are now ready to prove \Cref{talagrandpoincare}. Start by the following:		
\begin{align*}
\langle f,\Xi_\sigma(f)\rangle_{1/2,\sigma}=	\tr(\Gamma_{\sigma}(f)\Xi_\sigma(f))&=\frac{1}{\eps}\tr(\sigma^{1/2}\Xi_\sigma(f)\sigma^{1/2}(f^{\eps}-\mathbb{I}))\\
&=\frac{1}{\eps}\tr(\Xi_\sigma(f)(\sigma^{1/2}f^\eps\sigma^{1/2}-\sigma)).
\end{align*}
 For any $\delta>0$, there exists a smooth path $(\gamma^\eps(s))_{s\in[0,1]}$, with associated vector field $(\mathbf{V}^\eps(s))_{s\in[0,1]}$ (cf. \Cref{continuity2}), interpolating between $\rho^\eps:=\Gamma_\sigma(f^\eps)$ and $\sigma$, and such that
 \begin{align}\label{eq16}
 	\left(\int_0^1 \|\dot{\gamma}^{\eps}(s)\|^2_{g_{\LL,\gamma^{\eps}(s)}}\right)^{1/2}\le W_{2,\LL}(\rho^{\eps},\sigma)+\delta
 	\end{align}
 This implies that 
 \begin{align}
 \langle f,\Xi_\sigma(f)\rangle_{1/2,\sigma}&=-\frac{1}{\eps}\tr\left( \Xi_\sigma(f)\int_0^1 \frac{d}{ds}{\gamma}^\eps(s)ds\right)
 	=\frac{1}{\eps}\tr\left(\Xi_\sigma( f)\int_0^1 \operatorname{div}([\gamma^\eps(s)]_{\vec{\omega}} \mathbf{V}^\eps(s))ds\right)\nonumber\\
 	&=-\frac{1}{\eps}\int_0^1\sum_j c_j\langle \partial_j \Xi_\sigma(f),[\gamma^{\eps}(s)]_{\omega_j}(V^\eps(s))_j\rangle_{_{HS}} ds\nonumber\\
 	&\le  \frac{1}{\eps}\left(\sum_j c_j\int_0^1\langle\partial_j\Xi_\sigma(f), [\gamma^\eps(s)]_{\omega_j}\circ \partial_j\Xi_\sigma(f)\rangle_{_{HS}}  ds\right)^{1/2}\left(\int_0^1 \|\mathbf{V}^\eps(s)\|^2_{\LL,\gamma^\eps(s)}ds\right)^{1/2}\nonumber\\
 	&\le \left(\sum_j c_j\int_0^1\langle\partial_j\Xi_\sigma(f), [\gamma^\eps(s)]_{\omega_j}\circ \partial_j\Xi_\sigma(f)\rangle_{_{HS}}  ds\right)^{1/2}\frac{W_{2,\LL}(\rho^\eps,\sigma)+\delta}{\eps} \nonumber\\
 	&\le \left(\sum_j c_j\int_0^1\langle\partial_j\Xi_\sigma(f), [\gamma^\eps(s)]_{\omega_j}\circ \partial_j\Xi_\sigma(f)\rangle_{_{HS}}  ds\right)^{1/2}\frac{\sqrt{2c_2D(\rho^{\eps}\|\sigma)}+\delta}{\eps},\label{eq2}
 	\end{align}
where the first inequality comes from a use of the Cauchy-Schwarz inequality with respect to the inner product $\sum_{j\in \cJ}c_j\langle .~,\int_0^1[\gamma^\eps(s)]_{\omega_j}ds~.\rangle_{_{HS}} $, the second from \Cref{eq16} as well as \Cref{gammav}, and the last one from \Cref{t2}. As $\eps\to 0$, the term in brackets in \reff{eq2} converges to
 	\begin{align*}
 		\sum_j c_j \langle \partial_j\Xi_\sigma(f),[\sigma]_{\omega_j}\circ\partial_j\Xi_\sigma(f)\rangle_{_{HS}} &=\sum_j c_j \langle \partial_j\Xi_\sigma(f),[\sigma]_{\omega_j}\circ[\sigma]_{-\omega_j}^{-1}\circ\Gamma_\sigma\circ\partial_jf\rangle_{_{HS}} \\
 		&\le \sup_{j\in\cJ}\|[\sigma]_{\omega_j}\circ [\sigma]^{-1}_{-\omega_j}\|_{\infty\to \infty}~\mathcal{E}_{1/2,2}(f,\Xi_\sigma(f)),
 		\end{align*}
 		where we used \Cref{lemma8} as well as \Cref{2dirichletform}. Denote
 	$\kappa_\LL:=\sup_{j\in\mathcal{J}}\|[\sigma]_{\omega_j}\circ [\sigma]_{-\omega_j}^{-1}\|_{\infty\to \infty}$. As $\delta>0$ was chosen arbitrarily, we can now take the limit $\delta\to 0$. Moreover, following the approach of the proof of Theorem 16 of \cite{[KT13]}, one can prove that
	\begin{align*}
		D(\rho^\eps\|\sigma)/\eps^2\to \frac{1}{2}(\tr(\Gamma_\sigma(f)~ \Xi_\sigma(f)))
\end{align*}
 Substituting into \reff{eq2}, we get
	\begin{align*}
	\frac{1}{c_2~\kappa_\LL}\langle f,\Xi_\sigma(f)\rangle_{1/2,\sigma}	\le \mathcal{E}_{1/2,2}(f,\Xi_\sigma(f)).
		\end{align*}
	This is exactly the form that was derived at the end of the proof of Theorem 16 of \cite{[KT13]} which led to the Poincar\'{e} inequality.
\qed
\end{proof}
\bigskip
\begin{remark}
	In the classical, commutative case, \Cref{talagrandpoincare} reduces to Proposition 7.6 of \cite{[EM12]}. Indeed, in this case, one can easily verify that for any $j\in\cJ$,  $[\sigma]_{\omega_j}(f)=\frac{1}{2}\sinh(\omega_j/2)\sigma f$ and $[\sigma]_{-\omega_j}^{-1}(f)=2f/(\sigma\sinh(\omega_j/2))$. Therefore $[\sigma]_{\omega_j}\circ [\sigma]_{-\omega_j}^{-1}(f)=f$, and the result follows.
\end{remark}	
\section{Quantum concentration inequalities}\label{qconc}
The following theorem is a generalization of the classical results of \cite{[GM83]} (see also the review \cite{[M09]}). It states that the Poincar\'{e} inequality implies exponential concentration.
\begin{theorem}\label{poincare}	Let $\sigma$ be a non-degenerate, full-rank state, and $(\Lambda_t)_{t\ge 0}$ be a primitive quantum Markov semigroup on $\cB(\cH)$ whose generator $\LL$ is self adjoint with respect to $\langle .,. \rangle_{1,\sigma}$. If $(\Lambda_t)_{t\ge0}$ satisfies (\ref{p}), for a given function $\varphi:(0,\infty)\to (0,\infty)$, then for any self-adjoint operator $f$,
	\begin{align*}
		\tr(\sigma\mathbf{1}_{[r,\infty)}(f-\tr(\sigma f)))\le 3 \e^{-r  \sqrt{\lambda}/(\|f\|_{\operatorname{Lip}}C_{f,\lambda})}.
	\end{align*}
	where $\|.\|_{\operatorname{Lip}}$ is defined in \Cref{lipnorm}, and $C_{f,\lambda}\equiv \frac{\e^{2\sqrt{\lambda} \|f\|_{\infty}/\|f\|_{\operatorname{Lip}}}-1}{\sqrt{2\lambda}\|f\|_{\infty}/\|f\|_{\operatorname{Lip}}}$.
\end{theorem}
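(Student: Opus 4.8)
The plan is to transcribe into the non-commutative setting the Gromov--Milman argument \cite{[GM83]} that converts a spectral gap into exponential concentration: control the moment generating function of the centred observable by iterating the Poincar\'e inequality across dyadic scales, and then conclude with an exponential Markov (Chernoff) bound. I would first reduce to the upper tail of $f_0:=f-\tr(\sigma f)\,\mathbb I$, which satisfies $\tr(\sigma f_0)=0$, $\|f_0\|_\infty\le 2\|f\|_\infty$, $\partial_j f_0=\partial_j f$ (so $\|f_0\|_{\operatorname{Lip}}=\|f\|_{\operatorname{Lip}}$), and $\mathbf 1_{[r,\infty)}(f_0)=\mathbf 1_{[r,\infty)}(f-\tr(\sigma f))$. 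Since $\mathbf 1_{[r,\infty)}(f_0)\le \e^{-sr}\e^{sf_0}$ as operators for every $s\ge 0$ (comparing eigenvalues), it is enough to produce a bound $Z(s):=\tr(\sigma\,\e^{sf_0})\le \e^{\Phi(s)}$ with a suitable $\Phi$ and then optimise $-sr+\Phi(s)$ over $s$.

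Because $\sigma$ is non-degenerate and $\LL$ is self-adjoint with respect to $\langle\cdot,\cdot\rangle_{1,\sigma}$, \Cref{selfadjointness} together with \Cref{onpoincare} let me assume that \reff{p} holds with the \emph{same} constant $\lambda$ and with respect to $\langle\cdot,\cdot\rangle_{1,\sigma}$, i.e. $\lambda\big(\tr(\sigma g^*g)-|\tr(\sigma g)|^2\big)\le \mathcal E_{1,2}(g,g)$ for all $g$ (subtracting $\tr(\sigma g)\,\mathbb I$ off $g$ leaves the Dirichlet form unchanged). Testing this with the self-adjoint positive operator $g_s:=\e^{sf_0/2}$, and using $\tr(\sigma g_s^2)=Z(s)$ and $\tr(\sigma g_s)=Z(s/2)$, gives
\begin{align*}
\lambda\big(Z(s)-Z(s/2)^2\big)\le \mathcal E_{1,2}(g_s,g_s).
\end{align*}
The crucial step is to estimate the right-hand side. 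By \Cref{2dirichletform}, $\mathcal E_{1,2}(g_s,g_s)=\sum_{j}c_j\,\e^{-\omega_j/2}\,\tr\!\big(\sigma\,(\partial_j g_s)^*\partial_j g_s\big)$, and the chain-rule identity \Cref{chainrule} yields $\partial_j g_s=\tfrac{s}{2}\int_0^1 \e^{us f_0/2}\,(\partial_j f)\,\e^{(1-u)sf_0/2}\,\d u$. Combining $\tr(\sigma A^*A)\le\|A\|_\infty^2$ with $\|\e^{tf_0/2}\|_\infty=\e^{|t|\|f_0\|_\infty/2}$ gives $\|\partial_j g_s\|_\infty\le \tfrac{s}{2}\e^{s\|f_0\|_\infty/2}\|\partial_j f\|_\infty$, whence, pairing $j\leftrightarrow j'$ with $\tilde L_{j'}=\tilde L_j^*$ and invoking the definition \Cref{lipnorm} of $\|f\|_{\operatorname{Lip}}$,
\begin{align*}
\mathcal E_{1,2}(g_s,g_s)\;\le\;\frac{s^2}{4}\,\e^{s\|f_0\|_\infty}\sum_j c_j\,\e^{-\omega_j/2}\|\partial_j f\|_\infty^2\;\le\;\kappa\, s^2\,\e^{s\|f_0\|_\infty}\,\|f\|_{\operatorname{Lip}}^2,
\end{align*}
where $\kappa$ collects the explicit (dimension-dependent) constant hidden in \Cref{lipnorm}.

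Feeding this back in, and using that $Z$ is convex with $Z(0)=1$ and $Z'(0)=\tr(\sigma f_0)=0$ — so $Z(s/2)\ge 1$ — I obtain the dyadic recursion $\log Z(s)\le 2\log Z(s/2)+\tfrac{\kappa}{\lambda}s^2\e^{s\|f_0\|_\infty}\|f\|_{\operatorname{Lip}}^2$. Iterating, and letting the number of halvings tend to infinity (the boundary term $2^n\log Z(s/2^n)\to 0$ since $Z(0)=1$, $Z'(0)=0$, and $\sum_k 2^{-k}\e^{s\|f_0\|_\infty/2^k}\le 2\e^{s\|f_0\|_\infty}$), produces
\begin{align*}
\tr(\sigma\,\e^{sf_0})\;\le\;\exp\!\Big(\frac{2\kappa}{\lambda}\,s^2\,\e^{s\|f_0\|_\infty}\,\|f\|_{\operatorname{Lip}}^2\Big),\qquad s\ge 0.
\end{align*}
Chernoff's bound $\tr(\sigma\mathbf 1_{[r,\infty)}(f_0))\le \e^{-sr}\,\tr(\sigma\e^{sf_0})$, evaluated at a single value $s=\Theta\!\big(\sqrt{\lambda}/\|f\|_{\operatorname{Lip}}\big)$, then gives decay of the form $\e^{-r\sqrt\lambda/(\|f\|_{\operatorname{Lip}}C_{f,\lambda})}$; the factor $\e^{s\|f_0\|_\infty}$ evaluated at that $s$, bounded via $\|f_0\|_\infty\le 2\|f\|_\infty$, is exactly what forces the constant into the shape $C_{f,\lambda}=\big(\e^{2\sqrt\lambda\|f\|_\infty/\|f\|_{\operatorname{Lip}}}-1\big)/\big(\sqrt{2\lambda}\|f\|_\infty/\|f\|_{\operatorname{Lip}}\big)$, with the numerical prefactor of the optimisation (and the trivial regime where the right-hand side already exceeds $1$) absorbed into the factor $3$.

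The main obstacle is the Dirichlet-form estimate. In the commutative (or diffusion) case the chain rule for the carr\'e du champ is exact, so one gets directly $\mathcal E(\e^{sf/2},\e^{sf/2})\le \tfrac{s^2}{4}\|f\|_{\operatorname{Lip}}^2\,\tr(\sigma\e^{sf})$ with no operator-norm correction; here \Cref{chainrule} only supplies an \emph{integral} representation of $\partial_j\e^{sf_0/2}$, and pushing it through $\tr(\sigma(\partial_j\e^{sf_0/2})^*\partial_j\e^{sf_0/2})$ unavoidably brings in $\|f_0\|_\infty$, because the intertwining factors $\e^{us f_0/2}$, $\e^{(1-u)sf_0/2}$ do not commute with $\partial_j f$. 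Making this estimate tight enough to close the dyadic recursion, and clean enough to deliver precisely the stated $C_{f,\lambda}$ rather than merely \emph{some} $\|f\|_\infty$-dependent constant, is where the genuine work lies; the remainder is a faithful copy of the classical Gromov--Milman scheme.
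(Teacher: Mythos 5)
Your proposal follows the same overall strategy as the paper's proof: reduce to $\varphi=\varphi_1$ via \Cref{onpoincare}, test the Poincar\'e inequality with $\e^{\theta f/2}$, express $\partial_j \e^{\theta f/2}$ through the chain-rule identity of \Cref{chainrule}, iterate the resulting relation between $M_f(\theta)$ and $M_f(\theta/2)$ dyadically, and finish with the Chernoff-type bound $\mathbf 1_{[r,\infty)}(f)\le \e^{-\theta r}\e^{\theta f}$. The two executions differ only in how the Dirichlet form is estimated, and the paper's choices there are exactly what produce the stated constants. First, instead of discarding the state via $\tr(\sigma A^*A)\le\|A\|_\infty^2$, the paper rewrites the integrand by cyclicity as $\tr\bigl[(\e^{\theta f/2}\sigma\e^{\theta f/2})(\cdots)\bigr]$ and applies H\"older, which retains a factor $M_f(\theta)$; the recursion then takes the multiplicative form $M_f(\theta)\le\bigl(1-\theta^2\|f\|_{\operatorname{Lip}}^2C_{f,\lambda}^2/(4\lambda)\bigr)^{-1}M_f(\theta/2)^2$, and the infinite product of the prefactors evaluated at $\theta=\sqrt\lambda/(\|f\|_{\operatorname{Lip}}C_{f,\lambda})$ is a universal number between $\e$ and $3$ --- that is where the $3$ comes from. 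Second, the paper keeps the double integral $\iint_{[0,1]^2}\e^{\theta(s+u)\|f\|_\infty}\,ds\,du=\bigl(\tfrac{\e^{\theta\|f\|_\infty}-1}{\theta\|f\|_\infty}\bigr)^2$ exactly rather than majorizing the integrand by $\e^{\theta\|f\|_\infty}$; at the chosen $\theta$ this is precisely $C_{f,\lambda}^2$, so the shape of $C_{f,\lambda}$ is not something to be reverse-engineered afterwards --- it falls out of the integration. Third, the paper simply assumes $\tr(\sigma f)=0$ at the outset instead of carrying $\|f_0\|_\infty\le 2\|f\|_\infty$ through, which is why its exponent involves $\|f\|_\infty$ rather than $2\|f\|_\infty$. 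Your additive recursion on $\log Z$ does close (given $Z(s/2)\ge1$ and the vanishing boundary term) and yields exponential concentration of the correct order, but, as you yourself anticipate, the cruder estimates leave a prefactor depending on your constant $\kappa$ in place of the universal $3$ and a rate constant that only matches $C_{f,\lambda}$ up to these losses; the refinements above are exactly what close that gap.
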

\begin{proof}
	Assume without loss of generality that $\tr(\sigma f)=0$. For $\theta\ge 0$, and $f\ne 0$ self-adjoint, let
	\begin{align*}
		M_f(\theta):= \tr\left(\sigma \e^{\theta f}\right).
	\end{align*}
	By \Cref{onpoincare}, one can reduce to the case of $\varphi=\varphi_1$ without loss of generality. In this case, the Poincar\'{e} inequality \ref{p} applied to ${\e^{\theta f/2}}-\tr\sigma\e^{\theta f /2}$ can be written as:
	\begin{align}\label{poinc}
		{\lambda}~(M_f(\theta)-M_f(\theta/2)^2)\le \cE_{1,2}\left(\e^{\theta f/2},\e^{\theta f/2}\right).
	\end{align}
	However, as by assumption $(\Lambda_t)_{t\ge 0}$ is self-adjoint with respect to $\langle.,.\rangle_{1,\sigma}$, one can rewrite $\mathcal{E}_{1,2}$ as follows (cf. \Cref{2dirichletform}):
	\begin{align}
		\cE_{1,2}\left(\e^{\theta f/2},\e^{\theta f/2}\right)&=\sum_{j\in\cJ}c_j\e^{-\omega_j/2}\tr \left[\sigma\left( \partial_j \e^{\theta f/2}\right)^*\partial_j\e^{\theta f/2}\right]\nonumber\\
		&=\frac{\theta^2}{4}\sum_{j\in\cJ}c_j\e^{-\omega_j/2}\iint_{[0,1]^2} \tr \left(\sigma  \e^{\frac{(1-s) \theta f}{2}}(\partial_jf)^* \e^{\frac{s\theta f}{2}}\e^{\frac{u\theta f}{2}}\partial_j f\e^{\frac{(1-u)\theta f}{2}}\right)duds.\label{eq1}
	\end{align}
	where we used \Cref{chainrule} in the second line. Moreover, for each $u,s\in[0,1]$, the trace in \Cref{eq1} is equal, by cyclicity,  to
	\begin{align*}
		\tr	&\left[\left(\e^{\theta f/2}\sigma \e^{\theta f/2}\right)^{*}\left(\e^{ \frac{-s\theta f}{2}}(\partial_jf)^* \e^{\frac{(s+u)\theta f}{2}}\partial_j f\e^{\frac{-u\theta f}{2}}\right)\right]\\
		&	\le \tr\left(\e^{\theta f/2}\sigma \e^{\theta f/2}\right)\e^{\theta (s+u)\|f\|_\infty}\|(\partial_j f)^*\|_\infty\|\partial_j f\|_\infty\\
		&=M_f(\theta)\e^{\theta(s+u) \|f\|_\infty} \|\partial_j f\|_\infty^2,
	\end{align*}
	where we used H\"{o}lder's inequality as well as the submultiplicativity of the operator norm in the second line. Substituting into \reff{eq1}, we thus get:
	\begin{align*}
		\mathcal{E}_{1,2}\left(\e^{\theta f/2},\e^{\theta f/2}\right)&\le \frac{M_f(\theta)}{4\|f\|_\infty^2}\sum_{j\in\cJ} c_j\e^{-\omega_j/2}\|\partial_j f\|_\infty^2\left(\e^{\theta \|f\|_\infty}-1\right)^2\\
		&\le \|f\|_{\operatorname{Lip}}^2~M_f(\theta)\frac{\left(\e^{\theta \|f\|_\infty}-1\right)^2}{4\|f\|_\infty^2},
	\end{align*}
	However, for any $0\le \theta<2\sqrt{\lambda}/\|f\|_{\operatorname{Lip}}$,
	\begin{align*}
		\frac{\e^{\theta \|f\|_{\infty}}-1}{\theta\|f\|_{\infty}}\le \frac{\e^{2\sqrt{\lambda} \|f\|_{\infty}/\|f\|_{\operatorname{Lip}}}-1}{\sqrt{2\lambda}\|f\|_{\infty}/\|f\|_{\operatorname{Lip}}}\equiv  C_{f,\lambda}>1.
	\end{align*}
	Hence, substituting into \Cref{poinc}:
	\begin{align*}
		{\lambda} ~(M_f(\theta)-M_f(\theta/2)^2)\le {\theta^2 \|f\|^2_{\operatorname{Lip}} ~C_{f,\lambda}^2}~ M_f(\theta)/4.
	\end{align*}
	This last inequality implies that
	\begin{align*}
		M_f(\theta)\le\frac{1}{1-\theta^2 \|f\|_{\operatorname{Lip}}^2~C_{f,\lambda}^2/{(4\lambda)}}M_f(\theta/2)^2,
	\end{align*}
	for every $\theta< 2\sqrt{\lambda}/(C_{f,\lambda} \|f\|_{\operatorname{Lip}})$. A simple iteration procedure yields
	\begin{align*}
		M_f(\theta)\le \prod_{k=0}^{n-1} \left( \frac{1}{1-\theta^2 \|f\|_{\operatorname{Lip}}^2~C_{f,\lambda}^2/(4^{k+1}{\lambda})}\right)^{2^k}M_f(\theta/2^n)^{2^n}.
	\end{align*}
	Note that $M_f(\theta)=1+\theta\tr(\sigma f)+\mathcal{O}(\theta^2)$, and we have assumed that $\tr(\sigma f)=0$. Thus letting $n\to \infty$:
	\begin{align*}
		M_f(\theta)\le \prod_{k=0}^{\infty} \left( \frac{1}{1-\theta^2 \|f\|_{\operatorname{Lip}}^2~C_{f,\lambda}^2/(4^{k+1}{\lambda})}\right)^{2^k}.
	\end{align*}
	Set $\theta=\sqrt{\lambda}/(\|f\|_{\operatorname{Lip}}C_{f,\lambda})$, then the right hand side is a universal constant contained between $\e$ and $3$. So we proved that
	\begin{align*}
		M_f\left(\sqrt{\lambda}/(\|f\|_{\operatorname{Lip}}C_{f,\lambda})\right)\le 3.
	\end{align*}
	Now by functional calculus, for any $r\in\RR$ and $\theta>0$:
	\begin{align*}
		\mathbf{1}_{[r,\infty)}(f)&=	\mathbf{1}_{[\exp(\theta r),\infty)}(\exp(\theta f)) \le \e^{-\theta r}\e^{\theta f}.
	\end{align*}	
	This leads to the following Markov-type inequality:
	\begin{align}\label{markov}
		\tr(\sigma  \mathbf{1}_{[r,\infty)}(f))\le \e^{-r\theta}\tr(\sigma \exp(\theta f))=\e^{-r\theta}M_f(\theta).
	\end{align}
	Therefore
	\begin{align*}
		\tr(\sigma\mathbf{1}_{[r,\infty)}(f))\le 3 \e^{-r \sqrt{\lambda}/(\|f\|_{\operatorname{Lip}}C_{f,\lambda})}.
	\end{align*}
	\qed
\end{proof}
\noindent The first proof that the classical transportation cost inequality of order $1$ implies Gaussian concentration is due to Marton \cite{[M96]}. The following theorem is a quantum generalization of Bobkov-G\"{o}tze's proof \cite{[BG99]} which relies on the variational representations of the $1$ Wasserstein distance (see also Theorem 36 of \cite{[RS14]} or Proposition 7.7 of \cite{[EM12]}):
\begin{theorem}\label{talconc}		
	Let $\sigma$ be a full-rank state, and $(\Lambda_t)_{t\ge 0}$ be a primitive quantum Markov semigroup on $\cB(\cH)$ whose generator $\LL$ is self adjoint with respect to $\langle .,. \rangle_{1,\sigma}$. If $(\Lambda_t)_{t\ge 0}$ satisfies \reff{t1}, the following Gaussian concentration inequality holds: for any self-adjoint operator $f$,
	\begin{align*}
		\tr(\sigma\mathbf{1}_{[r,\infty)}(f-\tr(\sigma f)))\le \exp\left({{-\frac{r^2}{8\max (\|(\Delta_\sigma^{-1/2}f)_R\|_{\operatorname{Lip}}^2,\|(\Delta_\sigma^{-1/2}f)_I\|^2_{\operatorname{Lip}})c_1}}}\right),
	\end{align*}
	where $(\Delta_{\sigma}^{-1/2}f)_R$, resp. $(\Delta_{\sigma}^{-1/2}f)_I$ are the real and imaginary parts of $\Delta_{\sigma}^{-1/2}f$.
\end{theorem}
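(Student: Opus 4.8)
The plan is to mimic, in the non-commutative setting, the Bobkov--G\"{o}tze proof that a transportation cost inequality of order $1$ implies Gaussian concentration: one reduces the estimate to a sub-Gaussian bound on a non-commutative moment generating function, and obtains that bound from \reff{t1} by testing it against a tilted state. After translating $f$ we may assume $\tr(\sigma f)=0$. For $\theta>0$ set $Z_\theta:=\tr(\sigma\,\e^{\theta f})$; as in the derivation of \reff{markov}, the operator inequality $\mathbf{1}_{[r,\infty)}(f)\le \e^{-\theta r}\e^{\theta f}$ gives $\tr(\sigma\,\mathbf{1}_{[r,\infty)}(f))\le \e^{-\theta r}Z_\theta$, so it suffices to prove a sub-Gaussian bound $\log Z_\theta\le 2c_1\kappa\,\theta^2$ for all $\theta>0$, with $\kappa:=\max\big(\|(\Delta_{\sigma}^{-1/2}f)_R\|_{\operatorname{Lip}}^2,\|(\Delta_{\sigma}^{-1/2}f)_I\|^2_{\operatorname{Lip}}\big)$; optimizing $\e^{-\theta r+2c_1\kappa\theta^2}$ at $\theta=r/(4c_1\kappa)$ then produces the exponent $-r^2/(8\kappa c_1)$.

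To bound $Z_\theta$ I would pass to the tilted state $\rho_\theta:=\Gamma_\sigma(\e^{\theta f})/Z_\theta=\sigma^{1/2}\e^{\theta f}\sigma^{1/2}/Z_\theta\in\cD_+(\cH)$. Since $\Gamma_\sigma^{-1}(\rho_\theta)=\e^{\theta f}/Z_\theta$, expanding $\log(\e^{\theta f}/Z_\theta)=\theta f-(\log Z_\theta)\,\mathbb{I}$ in the definition \reff{maxi} of the maximal divergence produces the exact identity
\begin{align*}
\log Z_\theta=\theta\,\tr\!\big(\Delta_{\sigma}^{-1/2}(f)\,\rho_\theta\big)-\widehat{D}(\rho_\theta\|\sigma),
\end{align*}
where the term $\frac{\theta}{Z_\theta}\tr(\sigma\,\e^{\theta f}f)$ coming from $\widehat D$ is rewritten as $\theta\,\tr(\Delta_\sigma^{-1/2}(f)\rho_\theta)$ by cyclicity of the trace. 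This is the non-commutative analogue of the classical tilted-measure identity $\log Z=\theta\int g\,\d\nu^\star-D(\nu^\star\|\mu)$; the \emph{maximal} divergence appears precisely because the tilting is done through $\Gamma_\sigma$.

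Combining this with \reff{t1}: by \reff{DhD} and $\tr(\Delta_\sigma^{-1/2}(f)\sigma)=\tr(\sigma f)=0$ one gets $\log Z_\theta\le \theta\,\big|\tr(\Delta_{\sigma}^{-1/2}(f)(\rho_\theta-\sigma))\big|-D(\rho_\theta\|\sigma)$. Decomposing $\Delta_{\sigma}^{-1/2}(f)=(\Delta_{\sigma}^{-1/2}f)_R+i(\Delta_{\sigma}^{-1/2}f)_I$ into self-adjoint parts, applying the (dual) definition of $W_{1,\LL}$ to each part, and invoking \reff{t1} yields
\begin{align*}
\big|\tr\!\big(\Delta_{\sigma}^{-1/2}(f)(\rho_\theta-\sigma)\big)\big|\le \big(\|(\Delta_{\sigma}^{-1/2}f)_R\|_{\operatorname{Lip}}+\|(\Delta_{\sigma}^{-1/2}f)_I\|_{\operatorname{Lip}}\big)\,W_{1,\LL}(\rho_\theta,\sigma)\le 2\sqrt{\kappa}\,\sqrt{2c_1 D(\rho_\theta\|\sigma)}.
\end{align*}
Substituting and maximizing $2\theta\sqrt{2c_1\kappa}\,\sqrt{D}-D$ over $D=D(\rho_\theta\|\sigma)\ge0$ gives $\log Z_\theta\le 2c_1\kappa\,\theta^2$, which completes the argument.

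The step I expect to be delicate is the passage through the moment generating function: $\tr(\sigma\,\e^{\theta f})$ must be related to a quantity on which \reff{t1} can act, while the dual definition of $W_{1,\LL}$ only sees self-adjoint test operators. Tilting through $\Gamma_\sigma$ resolves this, but at the price of turning the entropic term into $\widehat D$ --- for which the one-sided comparison $\widehat D\ge D$ replaces the classical entropy identity --- and of replacing $f$ by the non-self-adjoint operator $\Delta_{\sigma}^{-1/2}(f)$. Splitting the latter into real and imaginary parts and bounding each via \reff{t1} is exactly what turns the classical constant $2$ into $8$ and introduces the maximum in the exponent; one can in fact verify by cyclicity that the imaginary part contributes nothing to $\tr(\Delta_\sigma^{-1/2}(f)(\rho_\theta-\sigma))$, so the constant is improvable, but the stated bound already follows from the coarse estimate above.
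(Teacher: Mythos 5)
Your proposal is correct and follows essentially the same route as the paper's proof (which also adapts Bobkov--G\"{o}tze): the same tilted state $\sigma^{1/2}\e^{\theta f}\sigma^{1/2}/Z_\theta$, the same exact identity for $\widehat{D}$ of that state, the comparison $D\le\widehat{D}$, the real/imaginary decomposition of $\Delta_\sigma^{-1/2}(f)$ tested against $W_{1,\LL}$, and a Chernoff-type bound at the end. The only difference is cosmetic bookkeeping --- you carry $\kappa$ throughout and optimize over $D$, while the paper first normalizes $g$, cancels the linear terms via a Young-type inequality in $\theta$, and rescales at the end --- and your observation that the imaginary part actually contributes nothing to $\tr(\Delta_\sigma^{-1/2}(f)(\rho_\theta-\sigma))$ is a correct (unexploited) refinement.
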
	
\begin{proof} Here we follow the lines of the proof of Theorem 36 of \cite{[RS14]}. Let $g:=g_R+i g_I\in\cB(\cH)$ be the decomposition of an operator $g$ into its real and imaginary parts, where $g_R,g_I\in\cB_{sa}(\cH)$. Assume moreover that $\tr(\sigma g)=0$, and $\|g_R\|_{\operatorname{Lip}}, \|g_I\|_{\operatorname{Lip}}  \le  1$. From (\ref{t1}), we know that for any $\rho\in\cD_+(\cH)$,  
	\begin{align*}
	|	\tr(\rho g)|\le | \tr (\rho~ g_R) |+|\tr (\rho~ g_I)|   \le 2 ~W_{1,\LL}(\rho,\sigma)\le 2 \sqrt{2c_1D(\rho\|\sigma)}.
	\end{align*} 
	Next, from the fact that
	\begin{align*}
		\inf_{\theta>0}\left(\frac{a}{\theta}+\frac{b\theta}{2}\right)=\sqrt{2ab}
	\end{align*}
	for any $a,b\ge 0$, we see that any such $g$ must satisfy
	\begin{align*}
	|	\tr(\rho g)|\le \frac{2}{\theta}D(\rho\|\sigma)+c_1\theta, ~~~\forall \theta>0.
	\end{align*}
	Rearranging, we obtain
	\begin{align}\label{eq7}
		\theta	|\tr(\rho g)|-c_1\theta^2\le 2D(\rho\|\sigma)\le 2\widehat{D}(\rho\|\sigma),~~~\forall \theta>0,
	\end{align}
	where we have used \reff{DhD}, and $\widehat{D}(\rho\|\sigma)$ is the maximal divergence defined though \Cref{maxi}. Define $\rho:=\sigma^{1/2} \e^{\theta f}\sigma^{1/2}/(\tr(\sigma \e^{\theta f}))$, where $f$ is a self-adjoint operator to be specified later. Hence \Cref{eq7} becomes
	\begin{align*}
		\theta|\tr(\rho g)|-c_1\theta^2\le 2 \theta \frac{\tr(\sigma \e^{\theta f}f)}{\tr(\sigma\e^{\theta f})}-2\log(\tr(\sigma\e^{\theta f})).
	\end{align*}
	Now for $f=\frac{1}{2}\Delta_{\sigma}^{1/2}(g)$, the last expression further simplifies into
	\begin{align*}
		-\frac{c_1 \theta^2}{2}\le-\log(\tr(\sigma\e^{\theta f}))~~~\Rightarrow ~~~M_f(\theta)\equiv\tr(\sigma \e^{\theta f})\le\e^{c_1\theta^2/2}.
	\end{align*}
	Note that the quantity $M_f(\theta)$ was also used in the proof of \Cref{poincare}. As in \reff{markov}:
	\begin{align}\label{markov2}
		\tr(\sigma  \mathbf{1}_{[r,\infty)}(f))\le \e^{-r\theta}M_f(\theta)\le \e^{-r\theta}\e^{c_1\theta^2/2}.
	\end{align}
	Optimizing over all $\theta>0$,
	\begin{align*}
		\tr(\sigma\mathbf{1}_{[r,\infty)}(f))\le \e^{-\frac{r^2}{2c_1}}.
	\end{align*}
	In order to achieve this bound we assumed that $g=2 \Delta_\sigma^{-1/2}(f)\in\cB(\cH)$ is such that $\|g_R\|_{\operatorname{Lip}},~\|g_I\|_{\operatorname{Lip}}\le 1$ and $\tr(\sigma g)=0$. This implies that $\tr(\sigma f)= \frac{1}{2}\tr(\sigma \Delta^{1/2}_\sigma(g))=\frac{1}{2}\tr(\sigma g)=0$. 
	Therefore, up to a rescaling we proved that for any self-adjoint operator $f$,
	\begin{align*}
		\tr(\sigma\mathbf{1}_{[r,\infty)}(f-\tr(\sigma f)))\le \exp\left({-\frac{r^2}{8\max(\|(\Delta_\sigma^{-1/2}f)_R\|^2_{\operatorname{Lip}},\|(\Delta_\sigma^{-1/2}f)_I\|^2_{\operatorname{Lip}})c_1}}\right).
	\end{align*}
	\qed
\end{proof}
\begin{remark}
	In the commutative case, $\Delta_\sigma^{-1/2}f=f$, so that for $$f=\sum_{i=1}^d \varphi(i)|i\rangle\langle i|,~~~~~~~\|(\Delta_\sigma^{-1/2}f)_{R,I}\|_{\operatorname{Lip}}=\|f\|_{\operatorname{Lip}}\le\|\varphi\|_{\operatorname{lip,g}}.$$ We therefore recover the classical Gaussian concentration inequality, up to a factor $\frac{1}{4}$.
\end{remark}	
\begin{remark}\label{rem10}
	The same kind of concentration result follows if we replace \reff{t1} by a transportation cost inequality associated to any other Wasserstein distance defined in \Cref{WALT}:
	\begin{align*}
	W_{1,\LL,\star}(\rho,\sigma)\le \sqrt{2c_1 D(\rho\|\sigma)}.
	\end{align*}
	\end{remark}
\subsection*{Example: Generalized quantum depolarizing semigroups.}

In \cite{[MFW16]}, the authors computed the log-Sobolev constant $\alpha_1$ for the so-called generalized quantum depolarizing semigroups. Let $\cH$ be a Hilbert space of dimension $d$. Given a state $\sigma\in\cD_+(\cH)$, the generator of the quantum depolarizing semigroup of invariant state $\sigma$, denoted by $\LL_\sigma:\cB(\cH)\to \cB(\cH)$, is defined as
\begin{align}\label{depol}
\LL_\sigma(f):=\tr(\sigma f)\mathbb{I}-f, ~~~~~~~~~f\in\cB(\cH).
\end{align}
One can verify that, in the Schr\"{o}dinger picture, the associated semigroup $\Lambda_{*t}^{\sigma}$ has the following expression: for any $\rho\in\cD_+(\cH)$,
\begin{align*}
\Lambda_{*t}^{\sigma}(\rho)=(1-\e^{-t})~\sigma+\e^{-t}\rho,
\end{align*}
hence implementing a generalized depolarizing channel (i.e.~$\mathbb{I}/d$ replaced by $\sigma$) of error probability $(1-\e^{-t})$ at each time $t\ge 0$.  The semigroup $(\Lambda^{\sigma}_t)_{t\ge 0}$ is primitive and its generator is self-adjoint with respect to $\langle .,.\rangle_{1,\sigma}$. Indeed
\begin{align*}
\langle f,\LL_\sigma(g)\rangle_{1,\sigma}=\tr (\sigma f^*\LL_\sigma(g))&=\tr(\sigma f^*) \tr(\sigma g)  -\tr(\sigma f^*g)\\
&=\tr(\sigma f)^*\tr(\sigma g)-\tr(\sigma f^*g)\\
&=\tr(\sigma(\tr(\sigma f)\mathbb{I}-f)^*g)\\
&=\langle \LL_\sigma(f),g\rangle_{1,\sigma}.
\end{align*}
In \cite{[MFW16]}, the authors proved that the largest log-Sobolev constant $\alpha_1(\sigma)$ of the semigroup $(\Lambda^{\sigma}_t)_{t\ge 0}$ is equal to 
\begin{align}\label{alphadepo}
\alpha_1(\sigma)= \min_{x\in[0,1]}\frac{1}{2}(1+ q(x,\sigma_{\min})),
\end{align}
where $\sigma_{\min}$ stands for the smallest eigenvalue of $\sigma$, and for $x\in [0,1]$, $y\in(0,1)$,
\begin{align*}
q(x,y)= \left\{\begin{aligned}
&\frac{D_2(y\|x)}{D_2(x\|y)}~~~~~x\ne y,\\
& 1~~~~~~~~~~~~~~~x=y,
\end{aligned}\right.
\end{align*}
where $D_2(x\|y):=x\log(x/y)-(1-x)\log ((1-x)/(1-y))$ is the binary relative entropy. We now prove that $\LL_\sigma$ can be expressed in the form of \Cref{LLDBC}, where here the sum is over a pair of indices $(i,j)$. Given the following eigenvalue decomposition $\sigma:=\sum_{i=1}^d \sigma_i|i\rangle\langle i|$, define the operators $L_{ij}:=\sqrt{\sigma_i}|i\rangle \langle j|$. Hence for any $f\in\cB(\cH)$, $\tr(\sigma f)\mathbb{I}= \sum_{i,j=1}^d L_{ij}^* fL_{ij}$, so that
\begin{align}\label{depo}
\LL_\sigma(f)=-\frac{1}{2}\sum_{i,j=1}^d  L_{ij}^* L_{ij}f-2L_{ij}^*fL_{ij}+f L_{ij}^*L_{ij}.
\end{align}
Moreover, $\Delta_{\sigma}(L_{ij})=\sigma_i/\sigma_j L_{ij}$, so that $\omega_{ij}\equiv \log \sigma_j-\log\sigma_i$. Therefore, for any $f\in\cB_{sa}(\cH)$, \Cref{depo} can be rewritten in the form of \Cref{LLDBC} by taking $\tilde{L}_{ij}=\sqrt{d}|i\rangle\langle j|$, and $c_{ij}=\frac{1}{2d}\sqrt{\sigma_i\sigma_j}$. Moreover, a straightforward extension of the proof of \Cref{pinsker} provides the following bound: for any self-adjoint operator $f:=\sum_{x}\varphi(x)|e_x\rangle\langle e_x|$, and any state $\gamma\in\cD_+(\cH)$,
\begin{align*}
\sum_{i,j=1}^d& c_{ij}(\e^{-\omega_{ij}/2}  \tr(\gamma \partial_{ij}f^*\partial_{ij} f )+ \e^{\omega_{ij}/2} \tr(\gamma \partial_{ij}f \partial_{ij}f^*  ))\\
&=\sum_{xy} \varphi(x)\varphi(y)\left\{ \delta_{xy}(\langle e_x|\gamma|e_x\rangle + \langle e_x|\sigma|e_x\rangle)-\langle e_x|\sigma|e_x\rangle \langle e_y|\gamma|e_y\rangle-\langle e_y|\sigma |e_y\rangle \langle e_x|\gamma|e_x\rangle\right\}\\
&=-\frac{1}{2}\sum_{xy}(\varphi(x)-\varphi(y))^2\{   \delta_{xy}(\langle e_x|\gamma|e_x\rangle + \langle e_x|\sigma|e_x\rangle)-\langle e_x|\sigma|e_x\rangle \langle e_y|\gamma|e_y\rangle-\langle e_y|\sigma |e_y\rangle \langle e_x|\gamma|e_x\rangle  \}\\
&\le 2 \sup_{x\ne y}( \varphi(x)-\varphi(y))^2 .
\end{align*}
Therefore, replacing $\gamma$ by $\gamma(s)$ as in the proof of \Cref{W1W2}, and replacing the last line of the proof by the bound we have just proved:
\begin{align}\label{eq99}
|\tr f(\rho-\sigma)|\le ~\sqrt{2}\sup_{x\ne y}|\varphi(x)-\varphi(y)|~ W_{2,\LL_\sigma}(\rho,\sigma)\equiv \sqrt{2} \|\varphi\|_{\operatorname{lip,H}} W_{2,\LL_\sigma}(\rho,\sigma).
\end{align}
Taking the supremum  over self-adjoint operators $f:=\sum_{x}\varphi(x)|e_x\rangle \langle e_x|$ such that $\|\varphi\|_{\operatorname{lip,H}}\le 1$, we define the following \textit{modified Wasserstein distance of order 1}:
\begin{align*}
W_{1,\operatorname{cl}}(\rho,\sigma):=\sup_{\substack{f=\sum_{x}\varphi(x)|e_x\rangle\langle e_x|\in\cB_{sa}(\cH)\\\|\varphi\|_{\operatorname{lip,H}}\le 1}}|\tr f(\rho-\sigma)|,
\end{align*}
where the subscripts $\operatorname{cl}$ denotes the fact that the optimum is taken over states $|e_x\rangle$ and functions $\varphi$ with classical Lipschitz norm bounded by $1$. We have hence proved the following:
\begin{theorem}\label{propdepol} Let $\sigma$ be a full-rank state, and denote by $\LL_\sigma$ the generator of the generalized depolarizing semigroup with invariant state $\sigma$. Hence, for any $\rho\in\cD_+(\cH)$, the following holds:
	\begin{align}\label{eq30}
	\|\rho-\sigma\|_1\le W_{1,\operatorname{cl}}(\rho,\sigma)\le\sqrt{2}  W_{2,\LL_\sigma}(\rho,\sigma).
	\end{align}
	Moreover, for any self-adjoint operator $f:=\sum_x \varphi(x)|e_x\rangle\langle e_x|$,
	\begin{align*}
\tr(\sigma\mathbf{1}_{[r,\infty)}(f-\tr(\sigma f)))\le \exp\left(-  \frac{r^2 \alpha_1(\sigma)}{16\max(\|\varphi_R\|_{\operatorname{lip,H}}^2,\|\varphi_I\|^2_{\operatorname{lip,H}})}\right),
\end{align*}	
where $\alpha_1(\sigma)$ is given in \Cref{alphadepo}, and $\varphi_R$, resp, $\varphi_I$, is such that $$(\Delta_\sigma^{-1/2}f)_R=\sum_x \varphi_R(x)|e^R_x\rangle\langle e^Rx|,~~~~~(\Delta_\sigma^{-1/2}f)_I=\sum_x \varphi_I(x)|e^I_x\rangle\langle e^Ix|,$$ where $(\Delta_\sigma^{-1/2}f)_R$, $(\Delta_\sigma^{-1/2}f)_I$ are the real and imaginary parts of $\Delta_\sigma^{-1/2}f$.
\end{theorem}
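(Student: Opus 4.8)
The plan is to assemble both assertions from ingredients established in the preceding discussion; the only genuinely new input is the telescoping estimate \eqref{eq99}, and everything else is bookkeeping of constants.

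\textbf{The chain $\|\rho-\sigma\|_1\le W_{1,\operatorname{cl}}(\rho,\sigma)\le\sqrt2\,W_{2,\LL_\sigma}(\rho,\sigma)$.} For the first inequality I would repeat the duality argument from the proof of \Cref{pinsker}: by trace-norm duality $\|\rho-\sigma\|_1$ is (a constant times) the supremum of $|\tr f(\rho-\sigma)|$ over self-adjoint $f$ with $0\le f\le\mathbb{I}$, and any such $f=\sum_x\varphi(x)|e_x\rangle\langle e_x|$ has eigenvalue gaps at most $1$, hence $\|\varphi\|_{\operatorname{lip,H}}=\sup_{x\ne y}|\varphi(x)-\varphi(y)|\le1$, so it is admissible in the supremum defining $W_{1,\operatorname{cl}}$. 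The second inequality is \eqref{eq99} after taking the supremum over all self-adjoint $f$ with $\|\varphi\|_{\operatorname{lip,H}}\le1$. To prove \eqref{eq99} I would rerun the proof of \Cref{W1W2} up to line \eqref{lost}, but now specialise the generator to the depolarizing data $\tilde L_{ij}=\sqrt d\,|i\rangle\langle j|$, $c_{ij}=\frac1{2d}\sqrt{\sigma_i\sigma_j}$, $\omega_{ij}=\log\sigma_j-\log\sigma_i$ and take $f=\sum_x\varphi(x)|e_x\rangle\langle e_x|$: using $c_{ij}\e^{-\omega_{ij}/2}=\sigma_i/(2d)$ and $c_{ij}\e^{\omega_{ij}/2}=\sigma_j/(2d)$, the weighted sum $\sum_{ij}c_{ij}\big(\e^{-\omega_{ij}/2}\tr(\gamma\,(\partial_{ij}f)^*\partial_{ij}f)+\e^{\omega_{ij}/2}\tr(\gamma\,\partial_{ij}f\,(\partial_{ij}f)^*)\big)$ telescopes to $-\frac12\sum_{xy}(\varphi(x)-\varphi(y))^2(\cdots)$, which is bounded by $2\sup_{x\ne y}(\varphi(x)-\varphi(y))^2=2\|\varphi\|_{\operatorname{lip,H}}^2$ uniformly in $\gamma\in\cD_+(\cH)$. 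This replaces the wasteful $\|\partial_jf\|_\infty$ estimate of \eqref{lost}, and hence the dimensional factor $\sqrt d$ of \Cref{W1W2}, by the constant $\sqrt2$.

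\textbf{Gaussian concentration.} I would first upgrade the previous inequality to a transportation-cost inequality. By \cite{[MFW16]} the semigroup $(\Lambda^\sigma_t)_{t\ge0}$ satisfies MLSI with the explicit constant $\alpha_1(\sigma)$ of \eqref{alphadepo}, so \Cref{logsobtalagrand} gives TC$_2$ with $c_2=\alpha_1(\sigma)^{-1}$; combining this with the bound just proved,
\[
W_{1,\operatorname{cl}}(\rho,\sigma)\ \le\ \sqrt2\,W_{2,\LL_\sigma}(\rho,\sigma)\ \le\ \sqrt2\,\sqrt{\tfrac{2}{\alpha_1(\sigma)}\,D(\rho\|\sigma)}\ =\ \sqrt{2c_1\,D(\rho\|\sigma)},\qquad c_1:=\tfrac{2}{\alpha_1(\sigma)}.
\]
Then I would run the Bobkov--G\"otze argument of \Cref{talconc} verbatim, which by \Cref{rem10} applies to any order-$1$ Wasserstein distance and so to $W_{1,\operatorname{cl}}$: for self-adjoint $f$ with $\tr(\sigma f)=0$ set $g:=2\Delta_\sigma^{-1/2}(f)$ and write $g=g_R+i\,g_I$, noting that $g_R=2(\Delta_\sigma^{-1/2}f)_R$ and $g_I=2(\Delta_\sigma^{-1/2}f)_I$ are self-adjoint (hence admissible in $W_{1,\operatorname{cl}}$ after normalisation) and $\tr(\sigma g)=0$. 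The bound $|\tr(\rho g)|\le2\sqrt{2c_1D(\rho\|\sigma)}$, the elementary fact $\inf_{\theta>0}(a/\theta+b\theta/2)=\sqrt{2ab}$, and $D\le\widehat D$ together with the evaluation of $\widehat D(\rho\|\sigma)$ at $\rho=\sigma^{1/2}\e^{\theta f}\sigma^{1/2}/\tr(\sigma\e^{\theta f})$ give $M_f(\theta):=\tr(\sigma\e^{\theta f})\le\e^{c_1\theta^2/2}$; the Markov-type inequality $\tr(\sigma\mathbf1_{[r,\infty)}(f))\le\e^{-r\theta}M_f(\theta)$ optimised in $\theta$ yields $\e^{-r^2/(2c_1)}=\e^{-r^2\alpha_1(\sigma)/4}$. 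Undoing the normalisation --- rescaling $f$ by $2\max(\|\varphi_R\|_{\operatorname{lip,H}},\|\varphi_I\|_{\operatorname{lip,H}})$, with $\varphi_R,\varphi_I$ the eigenvalues of $(\Delta_\sigma^{-1/2}f)_R,(\Delta_\sigma^{-1/2}f)_I$ --- contributes the missing factor $1/4$ in the exponent and produces exactly the stated constant $\alpha_1(\sigma)/(16\max(\|\varphi_R\|_{\operatorname{lip,H}}^2,\|\varphi_I\|_{\operatorname{lip,H}}^2))$.

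\textbf{Main obstacle.} The one genuinely new computation is the telescoping identity behind \eqref{eq99}: after substituting \eqref{eq10} for the depolarizing generator, one must verify that the $\gamma$-dependent weighted sum really collapses to $-\frac12\sum_{xy}(\varphi(x)-\varphi(y))^2(\cdots)$ and is therefore $\le2\|\varphi\|_{\operatorname{lip,H}}^2$ for \emph{every} $\gamma\in\cD_+(\cH)$, not merely for $\gamma=\sigma$; it is this uniformity in $\gamma$ that allows one to bypass the lossy step \eqref{lost} and trade $\sqrt d$ for $\sqrt2$. A secondary point is that $f$ is diagonal in its own eigenbasis, which need not coincide with that of $\sigma$, so $\Delta_\sigma^{-1/2}(f)=\sigma^{-1/2}f\sigma^{1/2}$ is generally non-self-adjoint with non-commuting real and imaginary parts --- whence the separate bases $\{|e^R_x\rangle\}$, $\{|e^I_x\rangle\}$ in the statement --- and one must check that $g_R,g_I$ are legitimate test operators for $W_{1,\operatorname{cl}}$ and that the constants ($\sqrt2$ from \eqref{eq99}, $c_1=2/\alpha_1(\sigma)$, the factor $2$ in $g=2\Delta_\sigma^{-1/2}f$, and the rescaling factor $2\max(\cdots)$) combine to give precisely the denominator $16$.
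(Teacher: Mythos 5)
Your proposal is correct and follows exactly the route the paper takes: its proof of \Cref{propdepol} is a one-line appeal to \eqref{eq99}, to \Cref{logsobtalagrand} (giving TC$_2$ with $c_2=\alpha_1(\sigma)^{-1}$), and to a rescaled version of \Cref{talconc}, which is precisely the three-step assembly you describe, with the constants ($c_1=2/\alpha_1(\sigma)$, the factor $\sqrt{2}$ replacing $\sqrt{d}$, and the denominator $16$) tracked correctly. The telescoping computation behind \eqref{eq99} that you flag as the main obstacle is indeed the substantive input, but the paper carries it out in the discussion immediately preceding the theorem rather than inside the proof itself.
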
	
\begin{proof}
The result follows from \Cref{eq99} and \Cref{logsobtalagrand} together with a straightforward adaptation of \Cref{talconc}.
\qed
\end{proof}
\bigskip
\begin{remark}
	\Cref{propdepol} is to be compared with \Cref{talconc}. In a nutshell, had we used \Cref{talconc} directly to get our concentration bound for the invariant state of a generalized depolarizing semigroup, we would have ended up with with a dependence on the dimension $d$ of the Hilbert space, due to the passage from \reff{t2} to \reff{t1} (cf. \Cref{t2t1}). Here, we showed that a finer analysis of this semigroup leads to the removal of the dimensional factor, as it can be seen by comparing \reff{eq30} with \Cref{W1W2}.
\end{remark}	

\subsection*{Concentration of measure for product states}

In \cite{[TPK14]}, the authors proved that the so-called $2$-log-Sobolev constant $\alpha_2$ satisfies the following property: let $\cH$ be a finite dimensional Hilbert space of dimension $d$ and, for each $k=1,...,n$, let $\sigma_k$ be a full-rank state and $\LL_k:\cB(\cH)\to \cB(\cH)$ be the generator of a primitive semigroup, where each $\LL_k$ is primitive and self-adjoint with respect to $\langle.,.\rangle_{1/2,\sigma_k}$. Then the generator 
\begin{align*}
	\LL:=\sum_{k=1}^n \id^{\otimes (k-1)}\otimes \LL_k\otimes \id^{\otimes (n-k)}
\end{align*}
is self-adjoint with respect to $\langle.,.\rangle_{1/2,\sigma^{(n)}}$, where $\sigma^{(n)}:=\sigma_1\otimes...\otimes \sigma_n$. Moreover, denoting by $\lambda_k$ the spectral gap of each $\LL_k$, the $2$-log Sobolev constant $\alpha_2$ of $\LL$ is bounded as follows:
\begin{align}\label{lambda}
	\frac{\lambda}{\log(d^4 s)+11}\le \alpha_2\le \lambda,
\end{align}
where $\lambda:= \min_k \lambda_k$, ans $s:=\max_k \|\sigma_k^{-1}\|_{\infty} $ (see Theorem 9 of \cite{[TPK14]}). Moreover, it was shown in Proposition 13 of \cite{[KT13]} that the generator of a primitive semigroup satisfies the following inequality:
\begin{align}\label{alpha}
	\alpha_2\le \alpha_1
\end{align}
provided it is strongly $\operatorname{L}_p$ regular (see Definition 9 of \cite{[KT13]}). Strong L$_p$ regularity always holds for classical semigroups, and was also shown to hold true for Davies generators (see Theorem 20 of \cite{[KT13]}). Therefore, by a joint use of \Cref{logsobtalagrand,t2t1,talconc} as well as \reff{lambda} and \reff{alpha}, the following holds true: for any self-adjoint operator $f_n$ on $\cH^{\otimes n}$,
\begin{align}
	&\tr(\sigma^{(n)}\mathbf{1}_{[r,\infty)}(f_n-\tr(\sigma^{(n)} f_n)))\label{eq9}\\
	&~~~~~~~~~~~~~~~~~~~~~~~~~~~\le \exp\left({-\frac{\lambda r^2}{8d(11+\log (d^4 s))\max(\|(\Delta_\sigma^{-1/2}f)_R\|^2_{\operatorname{Lip}},\|(\Delta_\sigma^{-1/2}f)_I\|^2_{\operatorname{Lip}})}}\right)\nonumber
\end{align}
Assume now that $f_n$ has the following form:
\begin{align*}
	f_n:=\frac{1}{n}\sum_{k=1}^n \mathbb{I}^{\otimes (k-1)}\otimes f\otimes \mathbb{I}^{\otimes (n-k)}.
\end{align*}
Assume moreover that for each $k$, $\LL_k$ is of the form given in \Cref{LLDBC}:
\begin{align*}
	\LL_k(f_n):=	\sum_{j\in \mathcal{J}_k}c^{(k)}_j\e^{-\omega^{(k)}_j/2}\left( \tilde{L}^{(k)*}_j[f_n,\tilde{L}^{(k)}_j]+[\tilde{L}^{(k)*}_j,f_n]\tilde{L}^{(k)}_j\right),
\end{align*}	
where for each $k,j$, we assimilate $\tilde{L}_j^{(k)}$ with $ \mathbb{I}^{\otimes (k-1)}\otimes \tilde{L}_j^{(k)}\otimes \mathbb{I}^{\otimes (n-k)}$ by abuse of notation. Hence, the Lipschitz constants in \reff{eq9} reduces to:
\begin{align*}
	\|(\Delta_{\sigma^{(n)}}^{-1/2} f_n)_{R,I}\|_{\operatorname{Lip}}^2&:=\sum_{k=1}^n\sum_{j\in \mathcal{J}_k} c_{j}^{(k)} (\e^{-\omega_j^{(k)}/2}+\e^{\omega_j^{(k)}/2})  \| [\tilde{L}_j^{(k)},(\Delta_{\sigma_k}^{-1/2}f)_{R,I}]/n\|_\infty^2\\
	&=\frac{1}{n^2}\sum_{k=1}^n\sum_{j\in \mathcal{J}_k} c_{j}^{(k)} (\e^{-\omega_j^{(k)}/2}+\e^{\omega_j^{(k)}/2})  \| [\tilde{L}_j^{(k)},(\Delta_{\sigma_k}^{-1/2}f)_{R,I}]\|_\infty^2.
\end{align*}
Assume finally that the generators $\LL_k$ are identical, with associated invariant state $\sigma_k\equiv\sigma$. Then,
\begin{align*}
	\|(\Delta_{\sigma^{(n)}}^{-1/2} f_n)_{R,I}\|_{\operatorname{Lip}}^2= \frac{1}{n} \| (\Delta_{\sigma}^{-1/2}f)_{R,I}\|_{\operatorname{Lip}}^2.
\end{align*}
In this case, \Cref{eq9} reduces to
\begin{align}\label{eq14}
	&\tr(\sigma^{\otimes n}\mathbf{1}_{[r,\infty)}(f_n-\tr(\sigma^{\otimes n} f_n)))\\
	&~~~~~~~~~~~~~~~~~~~~~~~~~~~~~~\le \exp\left({-\frac{\lambda n r^2}{8d(11+\log (d^4 \|\sigma^{-1}\|_\infty))\max(\|(\Delta_\sigma^{-1/2}f)_R\|^2_{\operatorname{Lip}},\|(\Delta_\sigma^{-1/2}f)_I\|^2_{\operatorname{Lip}})}}\right).\nonumber
\end{align}

\section{Non-asymptotic quantum parameter estimation}\label{parameter}

Here, we apply \reff{eq14} to the problem of parameter estimation of quantum states. Assume that $n$ independent physical systems are prepared in the same state $\rho_\theta$, where $\theta$ is an unknown parameter belonging to a set $\Theta$. Here, we assume that $\Theta:=\RR$. In order to estimate $\theta$, an estimator is described by a sequence of positive operator valued measurement (POVM in short) $\vec{M}:=\{M^{(n)}\}_{n\in\NN}$, where, for each $n$, $M^{(n)}:\cB(\RR)\mapsto \cP(\cH^{\otimes n})$ is a POVM on the Hilbert space $\cH^{\otimes n}$ associated to the $n$ systems, where $\cB(\RR)$ stands for the Borel algebra associated to $\RR$. The merit of such a POVM can be quantified in terms of the following error exponent (see \cite{[H02],[N05],[H05]}):
\begin{align*}
	\beta(\vec{M},\theta,\eps,n):=-\frac{1}{n\eps^2} \log \PP_{M^{(n)}}(\hat{\theta}_n\in [\theta-\eps,\theta+\eps]^c),
\end{align*}
where $$\PP_{M^{(n)}}(\hat{\theta}_n\in[\theta-\eps,\theta+\eps]^c):=\tr ( M^{(n)}([\theta-\eps,\theta+\eps]^c)\rho_\theta^{\otimes n})$$ is the probability that the estimated value $\hat{\theta}_n$ is at least $\eps$ away from the true parameter $\theta$. In the asymptotic setting $n\to \infty$, it was shown in Lemma 14 of \cite{[H02]} that, under some technical assumptions, any POVM $\vec{M}$ satisfies
\begin{align}\label{povm}
	\limsup_{\eps\to 0}\limsup_{n\to \infty} \beta(\vec{M},\theta,\eps,n)\le \frac{J_{\theta}}{2},
	\end{align}
where $J_\theta:= \tr (\rho_\theta L_\theta^2)$ is the so-called quantum symmetric logarithmic derivative (SLD for short) Fisher information, with associated self-adjoint logarithmic derivative $L_\theta$ defined by $\frac{d}{d\theta}\rho_\theta=\frac{1}{2}(\rho_\theta L_\theta+L_\theta \rho_\theta)$. For sake of simplicity, we assume that for any $\theta\in\Theta$, $\rho_\theta$ is full-rank, so that $L_\theta$ is well-defined. Moreover, the bound in \Cref{povm} was proved to be saturated for a sequence of projection-valued measurements $\vec{M}_\theta$ associated to the self-adjoint operator
\begin{align}\label{eq15}
	f^{(n)}_\theta:= \frac{1}{n}\sum_{k=1}^n  \mathbb{I}^{\otimes (k-1)}\otimes \left(\frac{L_\theta}{J_\theta}+\theta\mathbb{I}\right)\otimes \mathbb{I}^{\otimes (n-k)},
	\end{align}
where the estimated value $\hat{\theta}_n$ is determined to be the outcome of the measurement $M^{(n)}_\theta$. This implies that the error probability asymptotically behaves as
\begin{align*}
	\PP_{M_\theta^{(n)}}(\hat{\theta}_n\in[\theta-\eps,\theta+\eps]^c)\gtrsim
	 \e^{-\eps^2 n J_\theta/2},~~~~~n\to\infty,~\eps\to 0.
\end{align*}
The family $\vec{M}_{\theta}$ forms a sequence of unbiased estimators, i.e. 
\begin{align}\label{unbiased}
\forall n,\forall \theta\in \RR ~\tr\rho_\theta^{\otimes n}f^{(n)}_\theta=\theta.
\end{align}
The following result provides a finite $n$ and finite $\eps$ upper bound on the error probability in the case when $\rho_\theta$ is prepared by means of a dissipative process.
\begin{proposition}\label{prop100}
 Let $\cH$ be a finite dimensional Hilbert space, where $\dim (\cH)=d$. For $\theta\in\RR$, let $\LL_\theta$ be the generator of a quantum Markov semigroup on $\cH$ which is self-adjoint with respect to $\langle .,.\rangle_{1/2,\rho_\theta}$, satisfies \Cref{LLDBC}, and for which $\alpha_2\le \alpha_1$. Then, for any sequence of unbiased projective measurements $\vec{M}:=\{M^{(n)}\}_{n\in\NN}$ associated to the self-adjoint operators
 \begin{align}\label{fn}
 	f_n:=\frac{1}{n}\sum_{k=1}^n\mathbb{I}^{\otimes (k-1)}\otimes f\otimes \mathbb{I}^{\otimes (n-k)},
 	\end{align}
 	where $f$ is a self-adjoint operator on $\cH$, the probability that the estimated value $\hat{\theta}_n$ lies at least $\eps$ away from the true parameter $\theta$ is given by
 \begin{align}
 \PP_{M^{(n)}}(\hat{\theta}_n\in[\theta-\eps,\theta+\eps]^c)\le 2\exp\left({-\frac{ n\eps^2\lambda_\theta }{8d(11+\log(d^4 \|\rho_\theta^{-1}\|_\infty))\max(\|(\Delta_\sigma^{-1/2}f)_R\|^2_{\operatorname{Lip}},\|(\Delta_\sigma^{-1/2}f)_I\|^2_{\operatorname{Lip}})}}\right),\label{eq222}
 	\end{align}
 where $\lambda_\theta$ is the spectral gap of $\LL_\theta$.
\end{proposition}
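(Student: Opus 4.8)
The plan is to reduce \reff{eq222} directly to the i.i.d.\ concentration bound \reff{eq14}. Since $\vec M=\{M^{(n)}\}_{n\in\NN}$ is a \emph{projective} measurement associated to the self-adjoint operator $f_n$ of \reff{fn}, the estimated value $\hat\theta_n$ is distributed according to the spectral measure of $f_n$ in the state $\rho_\theta^{\otimes n}$; that is, for any Borel set $A\subseteq\RR$,
\begin{align*}
\PP_{M^{(n)}}(\hat\theta_n\in A)=\tr\!\big(\rho_\theta^{\otimes n}\,\mathbf{1}_A(f_n)\big).
\end{align*}
Taking $A=[\theta-\eps,\theta+\eps]^c$ and dominating the two open half-lines by closed ones gives
\begin{align*}
\PP_{M^{(n)}}(\hat\theta_n\in[\theta-\eps,\theta+\eps]^c)\le \tr\!\big(\rho_\theta^{\otimes n}\,\mathbf{1}_{[\theta+\eps,\infty)}(f_n)\big)+\tr\!\big(\rho_\theta^{\otimes n}\,\mathbf{1}_{(-\infty,\theta-\eps]}(f_n)\big).
\end{align*}
By the unbiasedness hypothesis \reff{unbiased} one has $\tr(\rho_\theta^{\otimes n}f_n)=\theta$, so, by the functional calculus, the first term equals $\tr(\rho_\theta^{\otimes n}\mathbf{1}_{[\eps,\infty)}(f_n-\tr(\rho_\theta^{\otimes n}f_n)))$ and the second equals $\tr(\rho_\theta^{\otimes n}\mathbf{1}_{[\eps,\infty)}((-f_n)-\tr(\rho_\theta^{\otimes n}(-f_n))))$, where $-f_n$ is precisely the operator \reff{fn} built from $-f$ in place of $f$.

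Next I would apply \reff{eq14} --- which is the concatenation of \Cref{logsobtalagrand}, \Cref{t2t1} and \Cref{talconc} together with \reff{lambda} and \reff{alpha} --- to the homogeneous product setup in which every local generator equals $\LL_\theta$, the common invariant state is $\rho_\theta$, the local spectral gap is $\lambda_\theta$, the parameter $s$ is $\|\rho_\theta^{-1}\|_\infty$, and $r=\eps$. This is legitimate precisely under the stated assumptions: each $\LL_\theta$ has the form \Cref{LLDBC} and is self-adjoint with respect to $\langle\cdot,\cdot\rangle_{1/2,\rho_\theta}$ (hence primitive with invariant state $\rho_\theta$, once $\rho_\theta$ is taken full-rank), and $\alpha_2\le\alpha_1$ is exactly the ingredient \reff{alpha} needed to pass from the $2$-log-Sobolev estimate \reff{lambda} to a modified log-Sobolev constant. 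This bounds the first tail by the right-hand side of \reff{eq222} without the factor $2$. For the lower tail I would repeat the argument with $f$ replaced by $-f$: the operator $-f$ is again self-adjoint, $\Delta_{\rho_\theta}^{-1/2}(-f)=-\Delta_{\rho_\theta}^{-1/2}f$, and the Lipschitz functional $\|\cdot\|_{\operatorname{Lip}}$ is a seminorm, so $\|(\Delta_{\rho_\theta}^{-1/2}(-f))_{R}\|_{\operatorname{Lip}}=\|(\Delta_{\rho_\theta}^{-1/2}f)_{R}\|_{\operatorname{Lip}}$ and likewise for the imaginary part; hence the bound for the second tail is identical. Adding the two tails produces the factor $2$ and gives exactly \reff{eq222}.

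The calculation is essentially bookkeeping; the only two points that require a word of justification are the identity $\PP_{M^{(n)}}(\hat\theta_n\in A)=\tr(\rho_\theta^{\otimes n}\mathbf{1}_A(f_n))$ for a projective estimator, and the verification that \reff{eq14} genuinely applies --- i.e.\ that the hypotheses of \Cref{logsobtalagrand,t2t1,talconc} (primitivity of $\LL_\theta$, self-adjointness with respect to $\langle\cdot,\cdot\rangle_{1/2,\rho_\theta}$, and $\alpha_2\le\alpha_1$) are all in force. I do not expect any genuinely hard step; all the analytic content is already contained in the chain of inequalities summarised in \Cref{fig2}.
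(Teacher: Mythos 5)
Your proof is correct and follows essentially the same route as the paper: decompose the two-sided error event into the two one-sided tails, use unbiasedness to recenter $f_n$ at $\tr(\rho_\theta^{\otimes n}f_n)=\theta$, and apply the product-state concentration bound \reff{eq14} with $r=\eps$ to $f$ and to $-f$, which yields the factor $2$. Your added justifications (the spectral-measure identity for the projective estimator and the invariance of the Lipschitz constants under $f\mapsto -f$) are details the paper leaves implicit but are exactly what its two-line proof relies on.
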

\begin{proof}
	\begin{align*}\PP_{M^{(n)}}(\hat{\theta}_n\in[\theta-\eps,\theta+\eps]^c)&=\PP_{M^{(n)}}(\hat{\theta}_n\ge \theta+\eps)+\PP_{M^{(n)}}(\hat{\theta}_n\le\theta-\eps)\\
	&= \tr(\rho_\theta^{\otimes n} \mathbf{1}_{[\eps,\infty)}(f^{(n)}-\tr(f^{(n)}\rho_\theta^{\otimes n})))\\
	&+\tr(\rho_\theta^{\otimes n} \mathbf{1}_{(-\infty,-\eps]}(f^{(n)}-\tr(f^{(n)}\rho_\theta^{\otimes n}))),
	\end{align*}
	where we used that $\tr(\rho_\theta^{\otimes n}f_n)=\theta$ for all $n$. We conclude from \Cref{eq14}.
	\qed
	\end{proof}
\noindent
It is a well known fact that the spectral gap of the generalized depolarizing semigroup \reff{depol} is equal to $1$ (see Lemma 25 of \cite{[KT13]}). Therefore, a straightforward use of \Cref{propdepol} leads to:\footnote{This extension of our results was pointed out by  Daniel Stilck Franca} 
\begin{proposition}\label{cor100}
	Let $\cH$ be a finite dimensional Hilbert space, where $\dim (\cH)=d$. For any $\theta\in\RR$, let $\rho_\theta\in\cD_+(\cH)$. Then, for any sequence of unbiased projective estimators $\vec{M}:=\{M^{(n)}\}_{n\in\NN}$ associated to the self-adjoint operators defined in \Cref{fn}, the probability that the associated estimated value $\hat{\theta}_n$ lies at least $\eps$ away from the true parameter $\theta$ is given by
	\begin{align}
\PP_{M^{(n)}}(\hat{\theta}_n\in[\theta-\eps,\theta+\eps]^c)\le 2\exp\left({-\frac{n\eps^2 }{16(11+\log(d^4 \|\rho_\theta^{-1}\|_\infty))        \max(\|\varphi_R\|_{\operatorname{lip,H}},^2\|\varphi_I\|^2_{\operatorname{lip,H}})}}\right),\label{eq102}
	\end{align}
	where $\varphi_R$ and $\varphi_I$ are defined in \Cref{propdepol}.
\end{proposition}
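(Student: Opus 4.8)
The plan is to mimic the proof of \Cref{prop100}, replacing the general product concentration bound \reff{eq14} by the dimension-free estimate behind \Cref{propdepol}, applied to the $n$-fold tensor product of the generalized depolarizing semigroup of invariant state $\rho_\theta$. Since the estimators $\vec M$ are unbiased, $\tr(\rho_\theta^{\otimes n}f_n)=\theta$ for every $n$, and, exactly as in the proof of \Cref{prop100},
\begin{align*}
\PP_{M^{(n)}}(\hat\theta_n\in[\theta-\eps,\theta+\eps]^c)=\tr\big(\rho_\theta^{\otimes n}\mathbf 1_{[\eps,\infty)}(f_n-\tr(\rho_\theta^{\otimes n}f_n))\big)+\tr\big(\rho_\theta^{\otimes n}\mathbf 1_{(-\infty,-\eps]}(f_n-\tr(\rho_\theta^{\otimes n}f_n))\big),
\end{align*}
and the second term equals the first one after replacing $f_n$ by $-f_n$ (which has the same Hamming-Lipschitz data), so it suffices to bound the first summand by the exponential in \reff{eq102} and multiply by two.

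For that tail bound I would work with the product generator $\LL^{(n)}:=\sum_{k=1}^n\id^{\otimes(k-1)}\otimes\LL_{\rho_\theta}\otimes\id^{\otimes(n-k)}$, where $\LL_{\rho_\theta}$ is the depolarizing generator of invariant state $\rho_\theta$ written in the form \Cref{LLDBC} as at the end of \Cref{qconc} (site operators $\tilde L_{ij}=\sqrt d\,|i\rangle\langle j|$, $c_{ij}=\tfrac1{2d}\sqrt{\sigma_i\sigma_j}$); its unique invariant state is $\rho_\theta^{\otimes n}$ and it is again of the form \Cref{LLDBC}. The spectral gap of $\LL^{(n)}$ equals that of $\LL_{\rho_\theta}$, namely $1$ (Lemma~25 of \cite{[KT13]}), so by \reff{lambda} its $2$-log-Sobolev constant satisfies $\alpha_2\ge(11+\log(d^4\|\rho_\theta^{-1}\|_\infty))^{-1}$, and by \reff{alpha} (applicable since the depolarizing generator is strongly $\operatorname{L}_p$-regular and this is stable under tensor products) its modified log-Sobolev constant obeys $\alpha_1\ge\alpha_2\ge(11+\log(d^4\|\rho_\theta^{-1}\|_\infty))^{-1}$.

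I would then rerun the argument of \Cref{propdepol} for $\LL^{(n)}$ and $\rho_\theta^{\otimes n}$: the refined estimate that yielded \reff{eq99}, in which the dimensional factor $\sqrt d$ of \Cref{W1W2} is improved to $\sqrt2$, tensorizes because the site-$k$ derivation $\partial_{k,ij}$ acts nontrivially only on the $k$-th summand of $f_n$, contributing an overall factor $1/n^2$ to $n$ terms; this gives, for every $\rho\in\cD_+(\cH^{\otimes n})$,
\begin{align*}
|\tr(f_n(\rho-\rho_\theta^{\otimes n}))|\le\sqrt{\tfrac2n}\;\|\varphi\|_{\operatorname{lip,H}}\;W_{2,\LL^{(n)}}(\rho,\rho_\theta^{\otimes n}),
\end{align*}
and analogously with $\Delta_{\rho_\theta^{\otimes n}}^{-1/2}$ inserted in front of $f_n$, where $\varphi$ denotes the single-site eigenvalue data of $f$. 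Combining this with \Cref{logsobtalagrand} (so that $W_{2,\LL^{(n)}}(\rho,\rho_\theta^{\otimes n})\le\sqrt{2\alpha_1^{-1}D(\rho\|\rho_\theta^{\otimes n})}$) produces a transportation-cost inequality of order $1$ for the $W_{1,\operatorname{cl}}$-type distance on $\cH^{\otimes n}$ with constant $\propto(n\alpha_1)^{-1}\le n^{-1}(11+\log(d^4\|\rho_\theta^{-1}\|_\infty))$, and feeding this into the adaptation of \Cref{talconc} from \Cref{rem10} — taking $\rho:=(\rho_\theta^{\otimes n})^{1/2}\e^{\vartheta f_n}(\rho_\theta^{\otimes n})^{1/2}/\tr(\rho_\theta^{\otimes n}\e^{\vartheta f_n})$, bounding $D\le\widehat D$, controlling the exponential moment and optimizing over $\vartheta>0$ — yields exactly the exponential in \reff{eq102}, using that the Hamming-Lipschitz constants of $(\Delta_{\rho_\theta^{\otimes n}}^{-1/2}f_n)_{R}$ and $(\Delta_{\rho_\theta^{\otimes n}}^{-1/2}f_n)_{I}$ equal $\|\varphi_R\|_{\operatorname{lip,H}}$ and $\|\varphi_I\|_{\operatorname{lip,H}}$ (the average $\tfrac1n\sum_k h^{(k)}$ has the same eigenvalue spread as $h$).

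The main obstacle is the bookkeeping in the last step: one must check that the dimension-free estimate behind \reff{eq99} really transfers to the product semigroup with the claimed $1/n$ gain, which lives entirely on the Dirichlet-form side (tensorization of the $\partial_{k,ij}$), whereas the classical Hamming-Lipschitz constant of the averaged observable is genuinely $n$-independent; these two effects have to be separated cleanly. One must also justify the use of \reff{alpha}, i.e.\ that $\LL^{(n)}$ is strongly $\operatorname{L}_p$-regular, which follows from the corresponding property of the single-site depolarizing generator together with its stability under tensor products.
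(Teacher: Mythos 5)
Your proposal is correct and follows essentially the same route as the paper's proof: tensorize the refined depolarizing estimate \reff{eq99} to the product generator $\LL_{\rho_\theta^{\otimes n}}$, lower-bound $\alpha_1\ge\alpha_2\ge(11+\log(d^4\|\rho_\theta^{-1}\|_\infty))^{-1}$ via \reff{lambda}, \reff{alpha} and the unit spectral gap, and feed the resulting order-$1$ transportation cost inequality into the adaptation of \Cref{talconc}, with the factor $2$ coming from the two tails as in \Cref{prop100}. You are in fact more explicit than the paper on the key bookkeeping point — the $1/\sqrt{n}$ gain in the tensorized version of \reff{eq99} coming from the derivations acting on a single summand of $f_n$, which is what produces the factor $n$ in the exponent of \reff{eq102} while the single-site Hamming--Lipschitz constants $\|\varphi_{R}\|_{\operatorname{lip,H}},\|\varphi_{I}\|_{\operatorname{lip,H}}$ stay $n$-independent.
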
	
\begin{proof}
 A straightforward generalization of the computation leading to \Cref{eq99}, replacing the semigroup $(\Lambda_t^\sigma)_{t\ge 0}$ by $n$ uses of itself $(\Lambda^{\sigma^{\otimes n}})_{t\ge 0}:=((\Lambda_t^\sigma)^{\otimes n})_{t\ge 0}$, with associated generator $\LL_{\sigma^{\otimes n}}=\sum_{k=1}^n \id^{\otimes k-1}\otimes \LL_\sigma\otimes \id^{n-k}$, we end up with, for any full-rank state $\rho_n\in\cD_+(\cH^{\otimes n})$,
		\begin{align}\label{eq100}
		|\tr f_n (\rho_n-\sigma^{\otimes n})|\le \sqrt{2}\|\varphi\|_{\operatorname{lip,H}}W_{2,\LL_{\sigma^{\otimes n}}}(\rho_n,\sigma^{\otimes n}),
	\end{align}
	where $f:=\sum_{x}\varphi(x)|e_x\rangle\langle e_x|$. Using \Cref{logsobtalagrand}, \Cref{lambda}, as well as the fact that the generalized depolarizing semigroup is L$_p$ regular (see \cite{[KT13]}) and has spectral gap $\lambda(\sigma)$ equal to $1$, a straightforward generalization of the proof of \Cref{talconc}, where the use of \reff{t1} is replaced by \Cref{eq100}, leads to the result.	 
\qed
\end{proof}	
\begin{remark}
	The above corollary not only provides a bound for any family of states $(\rho_\theta)_{\theta\in\RR}$, but also gives a better dependence of the bound than the one derived in \Cref{prop100} by removing the factor $d$ in the exponent on the right hand side of \reff{eq222}.
\end{remark}	

\section{Summary and open questions}
In this paper we derived concentration inequalities for invariant states of a class of quantum Markov semigroups. More precisely, we define quantum versions \reff{t1},\reff{t2} of the classical transportation cost inequalities TC$_1$ and TC$_2$ (cf.~\reff{tpp}). These inequalities involve two quantum generalizations of the classical Wasserstein distances $W_1$ and $W_2$ (cf.~\reff{2.44}), the latter defined by Carlen and Maas \cite{[CM16]}, and the former being defined in this paper. We then proved that these inequalities are related to quantum functional inequalities, namely the modified log Sobolev inequality and Poincar\'{e} inequality, as well as concentration of quantum states, analogously to their classical counterparts (see \Cref{fig2}). We compared our quantum transportation cost inequalities with their classical counterparts, and showed their relation to both classical and quantum Pinsker's inequalities. We studied the example of the generalized depolarizing semigroup which provides a Gaussian concentration for any full-rank quantum state. Finally, we applied our concentration results to the problem of finding finite blocklength bounds on the error probabilities occurring in quantum state parameter estimation.\\\\ 
\noindent
It would be interesting to address the following open questions: Firstly, the classical concentrations of Lipschitz functions on metric probability spaces have been shown to be equivalent to some concentration of measure inequalities (see e.g.~Theorem 3.4.1 of \cite{[RS14]}). In \cite{OW09}, Osborne and Winter derived a quantum analogue of a particular concentration of measure inequality, namely the Talagrand concentration inequality {\cite{TL13}, employing the notion of a quantum Hamming distance. Since our quantum concentration results are quantum extensions of concentration for Lipschitz functions, it would be interesting to see how they relate to theirs. Moreover, we expect our results to be generalizable to infinite dimensions and, for example, provide concentration of quantum Gaussian thermal states, thus extending the well-known concentration of Gaussian random vectors. Finally, it would be interesting to find applications of our quantum concentration results in the fields of quantum information theory and quantum computing.

\paragraph{Acknowledgements}
We would like to thank Daniel Stilck Franca as well as Sathyawageeswar Subramanian for helpful discussions.
\bibliographystyle{abbrv}
\bibliography{libraryone}
\appendix
\end{document}